\definecolor{darkgreen}{rgb}{0,0.5,0}
\definecolor{darkred}{rgb}{0.7,0,0}
\definecolor{darkblue}{rgb}{0,.2,.7}
\newtheorem{theorem}{Theorem}[section]
\newtheorem{proposition}[theorem]{Proposition}
\newtheorem{lemma}[theorem]{Lemma}
\newtheorem{corollary}[theorem]{Corollary}
\theoremstyle{definition}
\newtheorem{definition}[theorem]{Definition}
\newtheorem{example}[theorem]{Example}
\theoremstyle{remark}
\newtheorem{remark}[theorem]{Remark}
\newcommand{\tps}[2]{\texorpdfstring{$#1$}{#2}}
\numberwithin{equation}{section}
\newcommand{\be}{\begin{equation}}
\newcommand{\ee}{\end{equation}}
\newcommand{\bbC}{{\mathbb C}}
\newcommand{\bbR}{{\mathbb R}}
\newcommand{\bbS}{{\mathbb S}}
\newcommand{\calL}{{\mathcal L}}
\newcommand{\calE}{{\mathcal E}}
\newcommand{\calC}{{\mathcal C}}
\newcommand{\calR}{{\mathcal R}}
\newcommand{\calS}{{\mathcal S}}
\newcommand{\frakb}{{\mathfrak b}}
\newcommand{\romd}{{\mathrm d}}
\newcommand{\sfa}{\mathsf{a}}
\newcommand{\sfd}{\mathsf{d}}
\DeclareMathOperator{\End}{End}
\DeclareMathOperator{\rank}{rank}
\DeclareMathOperator{\Ker}{Ker}
\DeclareMathOperator{\Ran}{Ran}
\newcommand{\inner}[2]{\langle#1,#2\rangle}
\newcommand{\norm}[1]{\lVert#1\rVert}
\newcommand{\R}{\mathbb{R}}
\newcommand{\C}{\mathbb{C}}
\newcommand{\N}{\mathbb{N}}
\newcommand{\di}{\, \mathrm d}
\newcommand{\Id}{\mathrm{Id}}
\newcommand{\dom}{\operatorname{dom}}
\DeclareMathOperator{\Gr}{\operatorname{Gr}}
\definecolor{darkblue}{rgb}{0,.2,.7}
\definecolor{darkred}{rgb}{0.7,0,0}
\newcommand{\ii}{\mathrm{i}}
\newcommand{\define}{\mathrm{:=}}
\definecolor{darkgreen}{RGB}{27, 94, 32}
\title{Local Boundary Conditions for Dirac-type operators}
\author{N. Gro\ss e} 
\address{Mathematical Institute, University of Freiburg, Germany }
\email{nadine.grosse@math.uni-freiburg.de}
\author{A. Uribe}
\address[Corresponding auhtor]{Mathematics Department\\
	University of Michigan\\Ann Arbor, Michigan 48109, U.S.A.}
\email{uribe@umich.edu}
\author{H. Van Den Bosch}
\address{Department of Mathematical Engineering and Center for Mathematical Modeling (CNRS IRL2807)\\ Universidad de Chile, Santiago, Chile. }
\email{hvdbosch@dim.uchile.cl}
\subjclass[2020]{Primary 81Q10; Secondary 35G15, 53C27.}
\keywords{Dirac operator, local boundary condition, self-adjointess, Shapiro-Lopatinski}    
\begin{document}

\begin{abstract}
    We consider Dirac-type operators on manifolds with boundary, and 
set out to determine all local smooth boundary conditions that give rise to 
(strongly) regular self-adjoint operators. By combining the general theory of boundary value problems for Dirac operators as in \cite{BB12} and pointwise considerations, for local smooth boundary conditions the question of being self-adjoint resp. regular is fully translated into linear-algebraic language at each boundary point. We analyse these conditions and classify them
in low dimensions and ranks. In particular, we classify all local self-adjoint regular boundary conditions for Dirac spinors (four spinor components) in dimensions $3$ and $4$. With the same techniques we can also treat transmission boundary conditions.

\end{abstract}
\maketitle
	
	\tableofcontents

 \section{Introduction}

The goal of this paper is to present a systematic study of local smooth boundary conditions
that give rise to self-adjoint Dirac operators
and that satisfy an elliptic regularity condition.
Symmetric local smooth boundary conditions ensure that the current perpendicular to the boundary vanishes pointwise on the boundary, and are thus particularly relevant to describe a physical system with confined fermions.  

The pointwise vanishing of the normal spinorial current is achieved by a large family of boundary conditions. Within this family, we are interested in those satisfying an elliptic regularity property. Since the Dirac operator is a first-order elliptic operator, regular boundary conditions are those where the graph norm of the operator controls the $H^1$-norm in the domain.
A general way to guarantee this regularity is the Shapiro-Lopatinski condition, that translates the Fredholm property for general (pseudo)-differential operators with boundary conditions into an algebraic 
(or geometric) 
condition involving the principal symbol. 
We analyse this condition for Dirac operators with symmetric local boundary conditions.\medskip 

\subsection{Overview of previous results.}
In this paper, we consider a $d$-dimensional Riemannian manifold $M$, 
and a rank-$N$ complex vector bundle $\mathcal{S}\to M$ (more precisely a  bundle of Clifford modules, see below) whose sections
will be called spinors.
The main example is a
smooth domain $M \subset \R^d$ in Euclidean space with a trivial bundle,
but our discussion will be general.
Before describing the setting in detail, we quickly review some of the existing results on this subject in the mathematical physics literature.

The case of domains $M \subset \R^2$ has attracted considerable attention, since electronic excitations in graphene are described by a Dirac operator. For the actual description of graphene, $N=4$ spinor components are needed, but the case $N=2$ is a useful toy model. In this context, a local boundary condition called infinite-mass boundary condition has been introduced in the physics literature by \cite{BerryMondragon}. 
The regularity of these boundary conditions has been established in \cite{BFSV2017-s}.
It turns out that, in this case, at each point on the boundary there is a one-parameter space of possible boundary conditions that are regular except for two values corresponding to so-called \emph{zigzag} boundary conditions. 

For the actual graphene model with $N=4$, so-called \emph{armchair} or \emph{zigzag} boundary conditions arise from lattice terminations. Armchair conditions are regular (due to the equivalence with infinite mass boundary conditions shown in \cite{BFSV2017-g}), while zigzag boundary conditions are not, see e.g. \cite{holzmann2021zigzag}. 

In \cite{AkhmerovBeenakker}, the authors establish the general form of symmetric boundary condition for four-component spinors, i.e., the operator modeling graphene. In \cite{benguria2022block}, it is shown that these boundary conditions are generically regular, since the operator can be transformed in a direct sum of two-component operators, up to bounded corrections.

We are not aware of a comprehensive analysis of possible boundary conditions in dimension $3$ and beyond. 
For the case $d=3$ and $N=4$, the most-studied local boundary is the so-called \textsc{mit}-bag condition. It was introduced in the seventies by physicists at MIT to model hadrons as fermions confined in a \emph{bag}. This model was finally shown to be self-adjoint in \cite{BB13} or \cite{ourmieres2018strategy}.
A one-parameter family of boundary conditions called generalized \textsc{mit}-bag has been studied in \cite{behrndt2020self, rabinovich2021-BVP, arrizabalaga2023eigenvalue}, as a limiting case of delta-shell interactions.

These boundary conditions can be defined whenever a chirality operator exists, see Examples~\ref{ex:generalized_infinite_mass} and Corollary~\ref{cor:gen_MIT} below.

\medskip
In the Riemannian geometry setting, we refer to \cite{bookBooss}, \cite{BB13} and \cite{Ginoux} for an introduction to boundary value problems for Dirac operators. 
In some sense, the natural boundary condition from the geometry viewpoint is the nonlocal Atiya-Patodi-Singer (\textsc{aps}) boundary condition. It guarantees that spinors can be extended across the boundary, which in turn ensures the regularity of the corresponding Dirac operator, and allows to generalize index theorems for compact manifolds to the case of manifolds with boundary. Local boundary conditions have also been considered, in particular those coming from a chirality operator and the \textsc{mit}-bag boundary condition. (Note that, in the geometry, literature, the latter boundary condition is taken to be skew-symmetric.)

\subsection{Setting and notation.}
Before stating our results, we need some definitions to describe spinors on general manifolds. The chief example to keep in mind is a domain $M \subset \R^3$ with the trivial spinor bundle $\calS\define M\times \C^4$. For the convenience of the reader mainly interested in this case we first specify all the involved notions explicitly; the setting for the general case will be given below.\medskip 

So for now let $M$ be a domain in $\R^3$ with smooth boundary, 
and  $\mathcal{S}\define M\times \C^4$: A spinor is a section of this bundle, which in this case is just a function from $M$ to $\C^4$. The space of
smooth sections of a bundle $E$ will be denoted by $\calC(E)$, 
so that $\calC(\mathcal{S})$ is the space of spinors on $M$ and $\calC(T^*M)$ is the space of co-tangent vector fields (one-forms), which in the Euclidean case can be identified with smooth functions from $M$ to $\R^3$.
One then uses the mutually anticommuting Hermitian Dirac matrices $\gamma_1, \gamma_2, \gamma_3$ to define the operation of Clifford multiplication as a map from $\calC(T^* M) \times \calC(S)$ into $\calC(\mathcal{S})$ by the formula
$$
 (c_t \psi)(x) = \sum_{j=1}^3 t(x)_j \gamma_j \psi(x), \text{ for all } t\in \calC (M \times \R^3), \psi \in \calC(\mathcal{S}), \text{ and } x \in M.
$$
We define an inner product on spinors by
$$
\left( \psi, \phi \right) \define \int_M \langle \psi(x), \phi(x) \rangle_{\C^4} \di x,
$$
and denote the completion of $\calC(\calS)$ with respect to the corresponding norm by $L^2(\calS)$.
The Dirac operator is then defined as an unbounded operator in $L^2 (\calS)$ that acts as the differential expression
$$
D \psi \define - \ii \sum_{j=1}^3 \gamma_j \partial_{e_j} \psi =  - \ii \sum_{j=1}^3 c_{e_j}\partial_{e_j} \psi .
$$
In the geometry literature, it is usual to define Clifford-multiplication to be skew-adjoint (which amounts to multiplying the $\gamma$-matrices by $\ii$), and there is no $-\ii$ in the definition of the Dirac operator. 
Throughout this paper, we will consider the Clifford multiplication to be Hermitian. 
\medskip

In the general case, let $(M, g)$ be a compact Riemannian manifold with boundary, and let $\pi\colon \mathcal{S}\to M$ be a  bundle of Clifford modules over $M$. 
This means that there is a natural bundle map $\mathcal{C}\ell(T^*M)\otimes \mathcal{S}\to \mathcal{S}$
that induces, for each $p\in M$, an action of the Clifford algebra of $T^*_pM$ on the fiber $\mathcal{S}_p=\pi^{-1}(p)$. 
 The induced action of $v\in T_pM$ (identified with $T^*_pM$ using the metric $g$) on $\phi\in \mathcal{S}_p$ is denoted by $c_v\phi$, and satisfies
 $c_vc_w\phi+c_wc_v\phi=2g_p(v,w)\phi$  for $v,w\in T_pM$ and $\phi\in \mathcal{S}_p$. Moreover, 
we assume that $\mathcal{S}$ is equipped with a Hermitian bundle metric $\langle.,.\rangle$ (being antilinear in the second component) such that the Clifford action is self-adjoint,  $\mathcal{S}$ is also equipped with a metric connection $\nabla$ 
that satisfies the compatibility relation 
\begin{align}\label{eq:nabla_Cliff}
\forall \phi\in C^\infty(\mathcal{S})\qquad
\nabla_X (c_Y\psi)= c_{\nabla_X Y}\psi + c_Y \nabla_X \psi
\end{align}
for all smooth vector fields $X,Y$ on $M$. Such a bundle $\mathcal{S}$ is called a \emph{Clifford bundle} and the sections of $\mathcal{S}$ will be called \emph{spinors}  (cf. \cite[Def.~3.4]{Roe}).  
 
 The previous data give rise to
 a Dirac operator $D$ on sections of $\mathcal{S}$, given locally by
 \[
 D\psi = -\ii\sum_{j=1}^d c_{e_j}\nabla_{e_j}\psi,
 \]
 where $\{ e_j\}$ is any local orthonormal moving frame of $\mathcal{S}$.

Recall that we denote the (complex) rank of $\mathcal{S}$ by $N$. In order for the Clifford multiplication to exist it is necessary to have $N \ge 2^{\lfloor d/2 \rfloor}$. In Riemannian geometry, \emph{the} Dirac operator  (also called  \emph{Atiyah-Singer-Dirac operator}) is usually defined for $\mathcal{S}$ being the spinor bundle of a Riemannian spin manifold for which has $N= 2^{\lfloor d/2 \rfloor}$, while in physics the Riemannian manifold is often thought of as a space-like
slice of a $(d+1)$-dimensional Lorentzian manifold, and correspondingly $N= 2^{\lfloor (d+1)/2 \rfloor}$.

The Riemannian metric $g$ induces an $L^2$-inner product on spinors over $M$  by 
\[
( \psi, \varphi ) \define \int_M \langle \psi, \varphi\rangle dv,
\] 
where $dv$ is the volume measure associated to $g$.
As an operator in the first Sobolev space $W^{1,2} (\mathcal{S})$, $D$ satisfies
\begin{equation}
    \label{eq:Green}
    (D\psi, \varphi) - (\psi, D\varphi) = -\ii \int_{\partial M} 
    \langle c_\nu \psi , \varphi\rangle  d\sigma, 
\end{equation}
where $\nu$ is the outward unit normal, $d\sigma$ is the volume element of $\partial M$, 
and we abused the notation by writing $\psi$, $\varphi$ 
on the right-hand-side for their boundary traces.
It is clear from this expression that in order to obtain a symmetric operator, a boundary condition is needed.

\medskip 
Local boundary conditions on a spinor $\psi$ are imposed by requiring that,
at each $s\in\partial M$, the value $\psi(s)$ belongs in some 
subspace $\Lambda_s \subset \mathcal{S}_s$ of the fiber of $\mathcal{S}$ over $s$.  This differs from the definition of local boundary condition in \cite[Def.~7.9]{BB12} where \emph{regularity} as in Definition~\ref{def:regular} is assumed. But this definition is equivalent to the one used in \cite[Def. 6.25]{baer_bandara}, as follows from Lemma~\ref{lem_bound}.  More precisely we define:

\begin{definition} \label{def:local_BC}  Let $\mathcal{S}|_{\partial M}$ be the restriction of $\mathcal{S}$ to $\partial M$.
	A \emph{local smooth boundary condition} is a smooth 
	subbundle $\Lambda\subset \mathcal{S}|_{\partial M}$.\medskip 
	 
Given such a $\Lambda$,	the domain of the \emph{associated Dirac operator $D_{\Lambda}$} is
	\begin{equation*}
		\dom D_{\Lambda} = \overline{\{\psi \in \calC(\mathcal{S})\ |\ \psi(s) \in \Lambda_s,  \, \forall s \in \partial M \}}^{\Vert . \Vert_D},
	\end{equation*} 
	where $\overline{\phantom{A}.}^{\Vert . \Vert_D}$ denotes the closure in the graph norm $\Vert \psi\Vert_{D}^2\define \Vert \psi\Vert_{L^2}^2+\Vert D\psi\Vert_{L^2}^2$. Thus, by definition, $D_\Lambda$ is a closed operator.\medskip 
	
A boundary condition $\Lambda$ will be called \emph{symmetric} if and only if $D_\Lambda$ is symmetric, i.e. for all $\psi, \phi\in \dom\, D_\Lambda$ it holds that
	  \begin{align}\label{eq:symm} (D_\Lambda \psi, \phi) = (\psi, D_\Lambda \phi).\end{align}	  
\end{definition}

If one is just interested in self-adjointness, the smoothness requirement of the boundary condition can probably be relaxed to $C^\alpha$-regularity for some sufficiently large $\alpha$, but not completely omitted: In \cite{cassano2020self}, it is shown in a two-dimensional example 
 how jumps in the parameters defining the boundary conditions are an obstacle to essential self-adjointness (the case of a half plane is included in their results for sectors, by taking $\omega= \pi/2$).

\subsection{Main Results}

Our first result identifies the local smooth boundary conditions $\Lambda$ for which $D_\Lambda$ is self-adjoint.\medskip 

To formulate this result, let
 $E_\pm (c_\nu)$ denote the subbundles of $\mathcal{S}|_{\partial M}$ whose
fibers $E_\pm(c_{\nu(s)})$ are the $\pm 1$ eigenspaces of $c_{\nu(s)}\colon \mathcal{S}_s\to \mathcal{S}_s$.

\begin{theorem} \label{thm:local}
Let $(M, g)$ be a compact Riemannian manifold with boundary and let $\mathcal{S}$ be a Clifford bundle over $M$ of
rank $N= 2 n$.  Let $\Lambda\subset \mathcal{S}|_{\partial M}$ be a smooth subbundle. Then the following are equivalent:
\begin{enumerate}[(i)]
	\item $D_\Lambda$ is self-adjoint
 \item $D_\Lambda$ is symmetric and $\mathrm{rank}\, \Lambda=n$
\item $\Lambda$ is the graph of fiber-wise unitary bundle map 
from $E_+(c_\nu)$ to $E_-(c_\nu)$, i.e., there exists a smooth map $F: E_+(c_\nu) \to E_-(c_\nu) $ such that, for each $s$, $F_s$ is unitary and $\Lambda_s = \{\phi + F_s \phi |\ \phi \in E_+(c_{\nu(s)})\}$.
\end{enumerate}
\end{theorem}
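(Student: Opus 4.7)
The plan is to reduce the symmetry question to a pointwise algebraic condition on the fibers of $\mathcal{S}|_{\partial M}$, and then bridge to full self-adjointness via the general Bär--Ballmann framework of \cite{BB12}. Since $c_\nu$ is Hermitian and $c_\nu^2=\Id$, we have the orthogonal fiberwise decomposition $\mathcal{S}|_{\partial M}=E_+(c_\nu)\oplus E_-(c_\nu)$, and writing $\psi=\psi_++\psi_-$, $\varphi=\varphi_++\varphi_-$ the integrand on the right of \eqref{eq:Green} becomes
\[
\langle c_\nu\psi,\varphi\rangle=\langle\psi_+,\varphi_+\rangle-\langle\psi_-,\varphi_-\rangle.
\]
A quick trace argument using $c_\mu c_\nu=-c_\nu c_\mu$ for any unit $\mu\in T_sM$ orthogonal to $\nu$ shows $\operatorname{tr} c_{\nu(s)}=0$, whence $\dim E_\pm(c_{\nu(s)})=n$ at every boundary point.

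I would first establish the purely linear-algebraic equivalence (ii) $\Leftrightarrow$ (iii) fiberwise. For (iii) $\Rightarrow$ (ii), substituting $\psi=\phi+F_s\phi$ and $\varphi=\chi+F_s\chi$ into the displayed identity and using unitarity of $F_s$ gives pointwise vanishing of the Green integrand, so $D_\Lambda$ is symmetric by \eqref{eq:Green} together with the density of smooth $\Lambda$-admissible sections in $\dom D_\Lambda$ built into Definition~\ref{def:local_BC}. Conversely, if (ii) holds, that same density combined with \eqref{eq:Green} forces $\langle\psi_+,\varphi_+\rangle=\langle\psi_-,\varphi_-\rangle$ for all $\psi,\varphi\in\Lambda_s$. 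The projection $\pi_+\colon\Lambda_s\to E_+(c_{\nu(s)})$ is then injective: if $\psi\in\Lambda_s\cap E_-$, pairing with itself gives $\lVert\psi\rVert^2=0$. Combined with $\dim\Lambda_s=n=\dim E_+$ this makes $\pi_+$ an isomorphism, so $\Lambda_s=\{\phi+F_s\phi:\phi\in E_+(c_{\nu(s)})\}$ with $F_s:=\pi_-\circ\pi_+^{-1}$. Rewriting the isotropy condition in terms of $\phi=\pi_+\psi$ yields $\langle\phi,\chi\rangle=\langle F_s\phi,F_s\chi\rangle$, so $F_s$ is an isometry between spaces of equal dimension, hence unitary. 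Smoothness of $F$ follows from smoothness of the subbundle $\Lambda$ together with smoothness of the projections $\pi_\pm$.

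The remaining equivalence (i) $\Leftrightarrow$ (ii) rests on the Bär--Ballmann theory \cite{BB12}, in which self-adjoint realizations of $D$ between $D_{\min}$ and $D_{\max}$ correspond to boundary subspaces Lagrangian for the form $\omega(\psi,\varphi)=-\ii\int_{\partial M}\langle c_\nu\psi,\varphi\rangle\,d\sigma$. On each fiber this reduces to the indefinite Hermitian form of signature $(n,n)$ described above, for which every isotropic subspace of dimension $n$ is automatically maximal. Thus self-adjointness of $D_\Lambda$ forces $\operatorname{rank}\Lambda=n$, giving (i) $\Rightarrow$ (ii); conversely, a smooth $\Lambda$ satisfying (iii) yields a pointwise, and hence globally, maximal isotropic boundary subspace, i.e.\ a Lagrangian boundary condition. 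The main obstacle is precisely this last bridge: in order to upgrade the pointwise Lagrangian data to genuine self-adjointness one must know that a smooth subbundle of this form defines a regular boundary condition in the sense of \cite[Def.~7.9]{BB12}, so that the general Lagrangian-equals-self-adjoint dictionary applies. I expect Lemma~\ref{lem_bound}, announced in the excerpt, to be the ingredient that plays exactly this role, and I would invoke it at this step rather than redo it from scratch.
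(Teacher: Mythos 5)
Your fiberwise argument for (ii)$\Leftrightarrow$(iii) is essentially the paper's own: $\langle c_{\nu(s)}\cdot,\cdot\rangle$ is positive definite on $E_+(c_{\nu(s)})$ and negative definite on $E_-(c_{\nu(s)})$, which gives $\Lambda_s\cap E_\pm=\{0\}$ and hence the graph description, and isotropy makes the graph map an isometry between spaces of equal dimension, hence unitary; smoothness of $F$ follows as you say. One caveat even here: to pass from symmetry of the \emph{operator} $D_\Lambda$ to the pointwise isotropy of each $\Lambda_s$ you need a localization argument (sections of $\Lambda$ concentrating at $s$, extended into $M$ via the extension operator of Lemma~\ref{lem_prel_D}(ii)); ``density plus \eqref{eq:Green}'' by itself does not yield a pointwise identity.

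The genuine gap is in (i)$\Leftrightarrow$(ii). The B\"ar--Ballmann dictionary you invoke operates on closed subspaces $B\subset\check H(A_0)=H^{1/2}(P_{(-\infty,\lambda)}A_0)\oplus H^{-1/2}(P_{(\lambda,\infty)}A_0)$: self-adjointness of $D_\Lambda$ means $B_\Lambda=B_\Lambda^{\mathrm{ad}}$ (Lemma~\ref{lem_prel_D}(v)). Pointwise maximal isotropy of $\Lambda_s$ only gives the easy inclusion $B_\Lambda\subset B_\Lambda^{\mathrm{ad}}$; the substantive step is the reverse inclusion, i.e.\ that the adjoint boundary space is again \emph{local}. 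The paper supplies exactly this as Lemma~\ref{lem:adjoint_BC}: $(D_\Lambda)^*=D_{\Lambda^*}$ with $\Lambda^*$ the fiberwise $c_\nu$-annihilator, so $\rank\Lambda^*=N-\rank\Lambda$; its proof approximates $\delta_s v$ by smooth sections of $\Lambda$ and uses density of smooth spinors in $\dom\,(D_\Lambda)^*$. With that lemma, both ``self-adjoint $\Rightarrow\rank\Lambda=n$'' and ``(iii) $\Rightarrow$ self-adjoint'' are immediate. Your proposed bridge misses this twice: Lemma~\ref{lem_bound} only identifies $B_\Lambda$ as the $\check H$-closure of $\calC(\Lambda)$ and says nothing about $B_\Lambda^{\mathrm{ad}}$; and requiring $\Lambda$ to be regular in the sense of \cite[Def.~7.9]{BB12} is neither needed nor generally true --- the theorem deliberately covers non-regular self-adjoint conditions (e.g.\ $\Lambda=E_+(c_t)$ for $d=3$, self-adjoint by Lemma~\ref{lem:tangent_Lambda} but not regular by Corollary~\ref{lem:SL_tang}), and the parts of \cite{BB12} actually used (Lemma~\ref{lem_prel_D}(i),(v)) apply to arbitrary closed boundary spaces with no ellipticity hypothesis. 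Without Lemma~\ref{lem:adjoint_BC}, or an equivalent computation of $B_\Lambda^{\mathrm{ad}}$, your final paragraph is not justified, since a priori $(D_\Lambda)^*$ could contain elements whose boundary traces are merely $H^{-1/2}$ distributions lying outside the closure of $\calC(\Lambda)$.
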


We want to emphasize that the above results mean that for local smooth boundary conditions the question of self-adjointness translates into pointwise linear algebraic conditions. This is an important advantage of our choice to define $D_\Lambda$ as the closure in the graph norm, rather then restricting it to $H^1$ as in e.g. \cite{BFSV2017-s, ourmieres2018strategy}.\medskip 

The equivalence of (i) and (ii) follows directly from Lemma~\ref{lem:adjoint_BC} (as shown below that lemma), 
and is an immediate application of the general theory on boundary values for Dirac operators 
to the special case of local smooth boundary conditions. The equivalence of (ii) and (iii) will be proven in Section~\ref{sec:symmetricBC}. The proof amounts to a description of the Grassmannian of $n$-dimensional subspaces $\Lambda_s\subset (\mathcal{S}|_{\partial M})_s$ for which the symmetry condition $\langle c_{\nu(s)} u , v \rangle =0$ holds for all $u, v\in\Lambda_s$.
This Grassmannian was determined by Arnold in \cite{arnold2000complex}. (We also point out that in the recent preprint \cite{jud2024classifying}, a Dirac operator with a second order regularization is studied, and the authors are 
also led to a study of a Grassmannian.) \medskip 

Since the spaces $E_\pm(c_{\nu(s)})$ are both isomorphic to $\C^n$, for each $s$
there is an $n^2$-dimensional space of unitary transformations between them. However, it may not be possible to define such a map continuously globally on $\partial M$.
In Section~\ref{subsec:global}, we study the existence of self-adjoint local smooth boundary conditions. The following theorem shows that there might be obstructions to the existence of a self-adjoint local smooth boundary condition. (Recall that $d=\mathrm{dim}\, M$ and $N= \rank\, \calS$.)
\begin{theorem}
    In the cases $d=3$, $N=2$ and $d=5$, $N=4$, a self-adjoint local smooth boundary condition exists if and only if there exists a unit tangent vector field on $\partial M$.
\end{theorem}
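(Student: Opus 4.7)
The plan is to apply Theorem~\ref{thm:local}, which identifies self-adjoint local smooth boundary conditions with smooth unitary bundle maps $F\colon E_+(c_\nu)\to E_-(c_\nu)$ over $\partial M$, and then to relate such maps to unit tangent vector fields through Clifford multiplication. The $(\Leftarrow)$ direction works in any dimension and rank: given a unit tangent vector field $X$ on $\partial M$, the map $F:=c_X$ is smooth and unitary, because $X\perp\nu$ forces $c_X$ to anticommute with $c_\nu$ and hence send $E_+$ into $E_-$, while $|X|=1$ gives $c_X^2=I$.

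For the converse, the central tool is the fiberwise Clifford multiplication bundle map
\[
c\colon T\partial M \longrightarrow \Hom_\C(E_+,E_-),\qquad v\mapsto c_v,
\]
which is injective, isometric, and $\R$-linear. In the case $d=3$, $N=2$, the target fiber has real dimension $2$, matching that of $T\partial M$, so $c_s$ is an $\R$-isomorphism at each $s$; given a smooth unitary $F$ one then defines the smooth unit tangent vector field by $X(s):=c_s^{-1}(F(s))$. In the case $d=5$, $N=4$, a direct pointwise computation in an explicit Clifford representation shows that the image of $c_s$ is the quaternion subspace $\mathbb{H}\subset\Hom_\C(E_+,E_-)\cong M_2(\C)$, whose unit sphere is $SU(2)\cong S^3$. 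This gives a natural bundle isomorphism
\[
c\colon S(T\partial M)\xrightarrow{\ \cong\ }SU(E_+,E_-)
\]
of $S^3$-bundles over $\partial M$, under which unit tangent vector fields correspond to smooth sections of the sub-bundle $SU(E_+,E_-)\subset U(E_+,E_-)$.

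The main obstacle is then to promote the hypothesized section of the $U(2)$-bundle $U(E_+,E_-)$ to a section of the smaller $SU(E_+,E_-)$. Because $N=2^{\lfloor d/2\rfloor}$ is the minimal Clifford rank, the bundle $\mathcal{S}$ is a spinor bundle, so $M$ (and hence $\partial M$) must be spin; this forces $c_1(E_\pm)=0$ and trivializes the determinant line bundles $\det E_\pm$, making $\det F$ a smooth $U(1)$-valued function on $\partial M$. The plan is to compose $F$ with a smooth unitary gauge $g\in\Gamma(U(E_+))$ whose determinant compensates the class $[\det F]\in H^1(\partial M;\Z)$ and then divide by a smooth $U(1)$-valued square root of the resulting determinant, producing a smooth section of $SU(E_+,E_-)$; applying $c^{-1}$ then yields the desired unit tangent vector field.
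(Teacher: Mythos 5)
Your ``if'' direction and your $d=3$, $N=2$ converse are correct and essentially coincide with the paper's argument: the sufficiency is Lemma~\ref{lem:tangent_Lambda}, and in the case $d=3$, $N=2$ your observation that $v\mapsto c_v|_{E_+(c_\nu)}$ is an isometric real isomorphism onto $\Hom_\C(E_+,E_-)$ (so that unit tangent vectors correspond exactly to unitaries) is an invariant rephrasing of the Pauli-matrix computation in Proposition~\ref{prop:global_d_odd}. In the case $d=5$, $N=4$ your structural picture is also right, up to a small imprecision: in the natural frames one finds $c_k|_{E_+}=k\cdot\sigma-\ii k_5$, whose determinant on the unit tangent sphere is identically $-1$, so the image of $S(T\partial M)$ is the coset of unitaries with determinant equal to the fixed value $\det(c_v|_{E_+})$, not literally an ``$SU(E_+,E_-)$'' (which is only defined after identifying $\det E_+$ with $\det E_-$). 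That can be repaired; the real problem is the last step.

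The genuine gap is in promoting a section $F$ of $U(E_+,E_-)$ to a section of this fixed-determinant coset bundle. First, the claim ``minimal rank $N=2^{\lfloor d/2\rfloor}$ forces $\mathcal S$ to be the spinor bundle of a spin structure, hence $\det E_\pm$ is trivial'' is not available in the paper's setting: $\mathcal S$ is an arbitrary Clifford bundle, and a Clifford module bundle of minimal rank need not come from a spin structure (spin$^c$-type twistings are allowed), so $\det F$ need not be a $U(1)$-valued function and the class $[\det F]\in H^1(\partial M;\bbZ)$ is not even defined without further argument. Second, even granting trivial determinant lines, the step ``choose $g\in\Gamma(U(E_+))$ whose determinant compensates $[\det F]$'' is itself an unsolved section problem: the gauges with prescribed determinant form an $SU(2)\cong S^3$-bundle over the closed $4$-manifold $\partial M$, so there is a potential obstruction in $H^4(\partial M;\bbZ)\cong\bbZ$, and scalar gauges only realize even classes; you give no argument that a suitable $g$ exists for a general rank-$2$ bundle $E_+|_{\partial M}$. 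The paper avoids this entirely: it shows pointwise that every unitary $\widetilde f\in U(2)$ factors as $e^{\ii\tau}(\ii\alpha+v\cdot\sigma)$ with $\alpha^2+|v|^2=1$, i.e. every self-adjoint condition has the normal form $\Lambda=E_+(e^{\ii\tau c_\nu}c_t)$ of Proposition~\ref{prop:global_d_odd}, so the tangent direction (and the phase $\tau$, which carries exactly the determinant degree of freedom you try to gauge away) is read off directly from the boundary condition rather than obtained by deforming $F$ to a fixed-determinant section. To salvage your route you would either need to justify the two topological claims above or switch to the paper's explicit factorization.
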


A more detailed statement is given in Proposition~\ref{prop:global_d_odd} below.
This theorem implies that for bounded simply connected domains $M \subset \R^3$, it is not possible to give a local symmetric boundary condition for $2$-component spinors (Weyl spinors). A general analysis of the existence of self-adjoint boundary condition can probably done with the help of algebraic topology. We restrict here to low-dimensional examples and the cases where a chirality operator exists.

 \begin{definition}
     A \emph{chirality operator}\label{def:chiral} $\beta\colon \mathcal{S}\to \mathcal{S}$  is a unitary endomorphism of vector bundles that commutes with the connection, anticommutes with Clifford multiplication, and satisfies $\beta^2=\Id$.
 \end{definition}

Such an operator  splits the spinor bundle into a direct sum $\mathcal{S} = \mathcal{S}^{+}\oplus \mathcal{S}^{-}$, where the fibers of $\mathcal{S}^\pm$ are the $\pm 1$ eigenspaces of $\beta$. For $d$ even and $M$ orientable,  Clifford multiplication with the complex volume form gives such an operator $\beta$, \cite[p.~32]{Ginoux}. 

In the Euclidean setting the chirality operator exists in odd dimension $d$ and rank $N=2^{\lfloor{(d+1)/2}\rfloor}$, and acts by the Dirac matrix to the additional unit vector in the euclidean space of one dimension higher. Physically, considering massive spinors requires the existence of a chirality operator.

\begin{theorem}
    Let $(M,g)$ be a compact Riemannian manifold with boundary and let $\mathcal{S}$ be a Clifford bundle over $M$ of
rank $N= 2 n$, with a chirality operator whose eigenbundles will be called $\mathcal{S}^\pm$. Then, 
self-adjoint local smooth boundary conditions are in one-to-one correspondence with smooth sections of the bundle 
$U(\mathcal{S}^+|_{\partial M})\to \partial M$ of fiber-wise unitary endomorphisms of $\mathcal{S}^+|_{\partial M}$.
\end{theorem}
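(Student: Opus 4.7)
The plan is to reduce the statement to Theorem~\ref{thm:local}(iii), which identifies self-adjoint local smooth boundary conditions with smooth sections of the bundle of fibre-wise unitary maps $E_+(c_\nu)\to E_-(c_\nu)$. To pass from this latter bundle to $U(\mathcal{S}^+|_{\partial M})$, I would construct canonical smooth unitary bundle isomorphisms
\[
\Phi_\pm\colon \mathcal{S}^+|_{\partial M}\longrightarrow E_\pm(c_\nu),
\]
for then the assignment $U\mapsto F_U\define \Phi_-\,U\,\Phi_+^{-1}$ realises the desired bijection between smooth sections of $U(\mathcal{S}^+|_{\partial M})$ and smooth fibre-wise unitary bundle maps $E_+(c_\nu)\to E_-(c_\nu)$.

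The natural candidate for $\Phi_\pm$ is $\sqrt{2}$ times the orthogonal projection $\pi_\pm=\tfrac12(\Id\pm c_\nu)$ restricted to $\mathcal{S}^+|_{\partial M}$. The pointwise fact that makes this work is the orthogonality
\[
\langle\phi^+,c_\nu\phi^+\rangle=0\qquad\text{for all }\phi^+\in\mathcal{S}^+|_{\partial M}.
\]
I would verify this by inserting $\phi^+=\beta\phi^+$ in the first slot, using self-adjointness of $\beta$, the anticommutation $\beta c_\nu=-c_\nu\beta$, and $\beta\phi^+=\phi^+$ once more, which yields $\langle\phi^+,c_\nu\phi^+\rangle=-\langle\phi^+,c_\nu\phi^+\rangle$. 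From this, $\|\pi_\pm\phi^+\|^2=\tfrac12\|\phi^+\|^2$ follows, so $\Phi_\pm\define\sqrt{2}\,\pi_\pm|_{\mathcal{S}^+|_{\partial M}}$ are fibre-wise isometric embeddings into $E_\pm(c_\nu)$; since $\mathrm{rank}\,\mathcal{S}^+=\mathrm{rank}\,E_\pm(c_\nu)=n$, they are in fact unitary isomorphisms, smooth because $\beta$, $c_\nu$ and $\nu$ are.

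With $\Phi_\pm$ in hand, the correspondence $U\leftrightarrow F_U$ is formal and smooth in both directions, so combining it with Theorem~\ref{thm:local}(iii) yields the stated bijection. The local smooth boundary condition associated with $U\in U(\mathcal{S}^+|_{\partial M})$ is then explicitly
\[
\Lambda_U=\bigl\{\Phi_+(\xi)+\Phi_-(U\xi)\,\big|\,\xi\in\mathcal{S}^+|_{\partial M}\bigr\}.
\]
The only non-formal step is the orthogonality computation above, which is where the anticommutation of the chirality operator with Clifford multiplication is essential; everything else is fibre-wise linear algebra that varies smoothly with $s\in\partial M$ because all the underlying data do.
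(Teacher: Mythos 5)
Your proposal is correct and follows essentially the same route as the paper's Proposition~\ref{prop:BC_split}: both reduce to Theorem~\ref{thm:local}(iii) and then identify $E_\pm(c_\nu)$ unitarily with $\mathcal{S}^+|_{\partial M}$, you via the normalized projections $\sqrt{2}\,\tfrac12(\Id\pm c_\nu)|_{\mathcal{S}^+}$, the paper via the equivalent parametrization $E_\pm(c_{\nu(s)})=\{(v,\pm C_{\nu(s)}v)\,|\,v\in\mathcal{S}^+_s\}$ together with $\widetilde f=P_{\mathcal{S}^+}FP_{\mathcal{S}^+}$. Your explicit $\Lambda_U$ agrees, up to the harmless factor $1/\sqrt{2}$, with the paper's formula \eqref{eq:BC-split}.
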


This correspondence is proved in Proposition~\ref{prop:BC_split}, where we also show the explicit form of the boundary condition generated by a section of $U(\mathcal{S}^+|_{\partial M})$. In particular, if a chirality operator exists, there are always infinitely-many
local smooth symmetric boundary conditions.

\medskip
In addition to self-adjointness we want to describe local smooth boundary conditions that are elliptic. As already
mentioned,
the key tool to prove elliptic regularity for local boundary conditions is the standard Shapiro--Lopatinski criterion (see e.g. \cite[Chapter \textsc{xx}]{HormanderIII} for general pseudo-differential operators or \cite{bookBooss} for the Dirac case). 
In Section~\ref{sec:SPcriterion} we investigate this criterion for local smooth boundary conditions more closely.
In particular, we show that the Shapiro-Lopatinski condition is \emph{sharp} in our setting, 
in the sense that if it fails at a point on the boundary, 
the corresponding boundary condition is not regular (see Proposition~\ref{prop_revSL}).
\medskip

In Sections 5 and 6 we apply the previous results to investigate 
the existence of elliptic symmetric boundary conditions in 
various settings.  
In Section 5 we consider the case when a chirality operator is 
present.  We obtain a general description 
of such conditions, see Propositions \ref{prop:BC_split} and
\ref{prop:LS-for-split}, and in \S 6 we study these 
conditions in low dimensions.
A consequence of our analysis is that, in dimensions $3$ and $5$, no regular symmetric local boundary conditions exist for \emph{the} Dirac operator from Riemannian geometry, i.e., with $N = 2^{\lfloor d/2 \rfloor}$. Again, regular local smooth symmetric boundary conditions can be found by doubling the dimension of the spinor bundle. For the low-dimensional cases $d=3,4$ and $N=4$, we find an explicit parametrization of the boundary conditions satisfying the Shapiro-Lopatinski condition, see
Propositions~\ref{prop:SL_in_d=3-general} and~\ref{prop:SL_in_d=4-new}.\medskip 

In Section~\ref{sec:transm} we consider transmission conditions. Our definition includes the transmission boundary conditions that appear in the setting of Dirac operators with singular potentials in mathematical physics, see Example~\ref{ex_transmissII}. Although transmission conditions are not local boundary conditions in the sense of Definition~\ref{def:local_BC}, our results still apply after an appropriate transformation of the problem, see Theorem~\ref{thm:boundary_transmission}. 
\medskip

Finally, we note that our theorems remain valid if one adds to the Dirac operator a zero order differential operator that  is bounded 
and symmetric  in $L^2(\calS)$ (e.g. magnetic, electrostatic  and/or mass-type $L^\infty$-potentials).\medskip

\noindent \textbf{Acknowledgements.}
Our collaboration leading to this paper originates from the hybrid workshop \emph{Analytic and Geometric Aspects of Spectral Theory} at Casa Matemática Oaxaca in August 2022. We are grateful to the organizers and BIRS for hosting this stimulating event.
H. VDB. received financial support from  the Center for Mathematical Modeling (Universidad de Chile \& CNRS IRL 2807) through ANID/Basal project \#FB210005 and from ANID/Fondecyt project \#1122--0194. 
We also thank Konstantin Pankrashkin for suggesting to examine transmission boundary conditions as well.

\section{Preliminaries on boundary values for Dirac operators} \label{sec:Prelim}

In this section we give a short overview of the  general theory of boundary conditions from \cite{bookBooss, HormanderIII, BB13}. We restrict to the case where $M$ is a compact manifold with boundary (for non-compact $M$ with compact boundary, see Remark~\ref{rem_noncomp}). 
We need to express $D$ in a special form in a tubular neighborhood of the boundary.  Such a neighborhood is diffeomorphic to $\partial M\times [0,\epsilon)$, where $t \in [0, \epsilon)$ runs along unit normal geodesics to the boundary. When using the pull-back of the embedding $\iota\colon \partial M\times [0,\epsilon) \to M$, we can and will in the following consider sections of $\mathcal{S}|_{\iota(\partial M\times [0,\epsilon))}$ as $t$-dependent sections over $\partial M$. Then, the Dirac operator can be written as  
\begin{equation}\label{eq:def_At}
	D= -\ii c_\nu \left(\partial_t + A_t\right),
\end{equation} 
where $A_t$ are first order differential operators acting on $C^\infty (\mathcal{S}|_{\partial M})$. For $t=0$, $A_0$ is a Dirac-type operator on the closed manifold $\partial M$ over the Clifford bundle $\mathcal{S}|_{\partial M}$, and therefore self-adjoint with domain $H^1(\mathcal{S}|_{\partial M})$. We will denote by $a(s,k)\in \End (\mathcal{S}_s) $ the principal symbol of $A_0$ (where $s\in \partial M$, $k\in T_s^*\partial M$). It is defined by 
$$
(A_0 f u)(s) = (f A_0 u)(s) + a(s, \romd f|_s) u  (s)  \quad \forall u \in C^\infty (\mathcal{S}|_{\partial M}),   f \in C^\infty (\partial M).
$$
Since $\partial M$ is compact, the spectrum of $A_0$, denoted by $\sigma (A_0)$, is discrete. For any $I \subset \R$, we can define the generalized Sobolev spaces on the boundary
$$
H^s(P_I A_0) \define\left\{u  \in L^2(\mathcal{S}|_{\partial M}) \left| \sum_{\lambda_j \in \sigma(A_0) \cap I} |\lambda_j|^{2s} \norm{P_{\lambda_j} u}^2 < \infty \right.\right\}
$$
where $P_{\lambda_j}$ is the orthogonal projection onto the eigenspace associated to $\lambda_j$. The $P_I$ in the notation of the space indicates that it is the generalized  Sobolev space associated to the spectral projection of $A_0$ corresponding to the interval $I$.

Since $A_0$ is a first-order elliptic operator on the closed manifold $\partial M$, we have $H^s(A_0)\define H^s(P_{\R}A_0) \cong W^{s,2}(\mathcal{S}|_{\partial M})$, the standard $L^2$-based Sobolev space of order $s$ on $\calS|_{\partial M}$.

As usual, the boundary trace $t_{\partial M}\colon \calC (\mathcal{S})\to \calC (\mathcal{S}|_{\partial M})$ extends to a bounded operator from $W^{s,2}(\mathcal{S})$ to $W^{s-1/2,2}(\calS|_{\partial M})$ for $s>\frac{1}{2}$. By \cite[Thm.~3.2]{BB13} the trace map can be extended further to 
$$\dom\, D_{\max} \define \{ u \in L^2(\mathcal{S})\ |\  Du \in L^2(\mathcal{S})\}
$$
--the domain of the \textit{maximal Dirac operator}--and for any $\lambda \notin \sigma(A_0)$ we have
\begin{align}\label{eq:trace_max}
t_{\partial M}(\dom\, D_{\max})  = \check H (A_0) \define H^{1/2}( P_{(-\infty,\lambda)} A_0) \oplus  H^{-1/2}( P_{(\lambda,+\infty)} A_0).
\end{align}
If $\dom\, D_{\max}$ is equipped with the graph norm and $\check H (A_0)$ with the norm induced by the $H^{\pm 1/2}$-norms, this extension is continuous. \medskip 

By \cite[p.~4+6]{BB12}, the adjoint of the maximal Dirac operator $D_{\mathrm{max}}$ is the minimal Dirac operator $D_{\mathrm{min}}$ whose domain equals
$$\dom\, D_{\mathrm{min}} = \overline{ \calC_{cc} (\mathcal{S})}^{\Vert . \Vert_D}=\{ \phi \in \dom\, D_{\max}\ |\ t_{\partial M}\phi =0\}, 
$$
where $\calC_{cc} (\mathcal{S})$ are the smooth spinors compactly supported in the interior of $M$. For nonempty boundaries $\dom\, D_{\min}$ is strictly contained in $\dom\, D_{\max}$.\medskip  

For future reference we collect some more results from \cite{BB12}:

\begin{lemma}\label{lem_prel_D}\hfill 
	\begin{enumerate}[(i)]
\item \cite[Prop.~7.2]{BB12} The closed extensions $D\colon \dom\,  D\subset L^2(\calS)\to L^2(\calS)$ of $D_{\mathrm{min}}$ are in one-to-one correspondence with closed subsets $B\subset \check{H}(A_0)$ via  $B=t_{\partial M}(\dom\, D)$. 
\item \cite[Lem.~6.1]{BB12} There is a partial inverse of the trace operator--the \emph{extension operator} 
\[ \mathcal{E}\colon \check H (A_0)\to \dom\, D_{\max},\] with $t_{\partial M}\mathcal{E} = \Id_{\check H(A_0)}$ and $\mathcal{E}(\calC (\calS|_{\partial M}))\subset \calC(\calS)$.
\item \cite[Lem.~6.3]{BB12} The functional
$\check H (A_0)\times \check H (A_0)\to \mathbb C$, $(\phi, \psi) \mapsto \int_{\partial M} \langle c_{\nu}\phi, \psi \rangle$ is continuous.
 \item \cite[Thm. 6.7]{BB12} The 
Green's identity \eqref{eq:Green} is valid for all  $\phi, \psi \in \dom (D_{\max})$.
\item \cite[Sec.~7.2]{BB12} Let $D\colon \dom\,  D\subset L^2(\calS)\to L^2(\calS)$ be a closed extension of $D_{\mathrm{min}}$ with $B=t_{\partial M}(\dom\, D)$. Then, its adjoint operator is the closed extension corresponding to \[B^{\mathrm{ad}}\define \left\{\psi\in \check{H}(A_0)\ \Bigg|\  \int_{\partial M} \langle c_{\nu}\phi, \psi \rangle=0\ \forall \phi\in B\right\}.\]
\item \cite[Thm. 6.7]{BB12}  $\calC(S)$ is dense in $\dom\ D_{\max}$  
with respect to the graph norm.
\end{enumerate} 
\end{lemma}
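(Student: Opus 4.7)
Since Lemma~\ref{lem_prel_D} compiles several results from \cite{BB12}, the plan is to recall the underlying mechanisms rather than reprove them in detail. The uniform engine is the collar-neighborhood normal form $D = -\ii c_\nu(\partial_t + A_t)$ combined with the spectral theory of $A_0$, which is a self-adjoint elliptic operator on the closed manifold $\partial M$ with discrete real spectrum $\{\lambda_j\}$ and complete system of finite-rank eigenprojections $P_{\lambda_j}$.

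For (ii) and the trace-image statement \eqref{eq:trace_max}, I would start from the frozen-coefficient model $(\partial_t + A_0) u = 0$ on the half-cylinder $\partial M \times [0,\varepsilon)$. Each spectral mode evolves as $P_{\lambda_j}u(t) = e^{-\lambda_j t}\, P_{\lambda_j} u(0)$, so $L^2$-integrability in $t$ kills exponentially-growing modes and allows rougher boundary data for exponentially-decaying modes. Balancing these asymmetric weights mode by mode produces exactly the space $\check H(A_0) = H^{1/2}(P_{(-\infty,\lambda)} A_0) \oplus H^{-1/2}(P_{(\lambda,+\infty)} A_0)$, and the extension operator $\mathcal{E}$ of (ii) is assembled directly from these spectral-semigroup solutions multiplied by a cut-off near $t=\varepsilon$. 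With (ii) in hand, (i) follows because the quotient $\dom D_{\max}/\dom D_{\min}$ is isomorphic to $\check H(A_0)$ via the trace, so closed subsets of $\check H(A_0)$ correspond bijectively to closed extensions of $D_{\min}$.

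For (iii), $c_\nu$ anticommutes with $A_0$ modulo lower-order terms (use \eqref{eq:nabla_Cliff} with $Y$ normal and $X$ tangent along $\partial M$), so it interchanges the positive- and negative-spectrum subspaces of $A_0$; hence $(\phi,\psi)\mapsto \int_{\partial M}\langle c_\nu\phi,\psi\rangle$ pairs an $H^{1/2}$-piece on one side with an $H^{-1/2}$-piece on the other, and is therefore continuous on $\check H(A_0)\times \check H(A_0)$. Item (iv) follows by density: \eqref{eq:Green} holds on $W^{1,2}(\mathcal{S})$, both sides depend continuously on $\phi, \psi$ in the graph norm and the $\check H(A_0)$ topology thanks to (iii), and $W^{1,2}(\mathcal{S})$ is dense in $\dom D_{\max}$ (via $\mathcal{E}$ composed with smoothing of the boundary data). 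Finally, (v) is an algebraic consequence of (iv): $\psi\in \dom D^*$ translates, through Green's identity, into the vanishing of the boundary pairing against every $\phi\in B$, which is exactly the definition of $B^{\mathrm{ad}}$.

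The genuinely nontrivial step that I would \emph{not} try to redo is the precise description of $\check H(A_0)$ and the continuity of $\mathcal{E}$ when the frozen operator $A_0$ is replaced by the full $t$-dependent operator $A_t$ --- this is the technical heart of \cite{BB12} and requires a perturbation argument comparing the genuine Dirac evolution with the spectral semigroup of $A_0$. For the rest of this paper the results of Lemma~\ref{lem_prel_D} are used as a black box.
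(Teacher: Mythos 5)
The paper offers no proof of this lemma at all: it is a compilation of results quoted verbatim from \cite{BB12} (Prop.~7.2, Lem.~6.1, Lem.~6.3, Thm.~6.7, Sec.~7.2), used later as a black box. Your proposal takes the same route — deferring the technical core to \cite{BB12} — and your supplementary sketches (spectral-semigroup construction of $\mathcal{E}$ and of $\check H(A_0)$, anticommutation of $c_\nu$ with $A_0$ up to zero-order terms giving the $H^{1/2}$–$H^{-1/2}$ pairing, density plus continuity for the Green identity, and (v) as a consequence of (i) and (iv)) accurately reflect how those results are obtained there, so there is nothing to correct.
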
 

The next two Lemmas specify these results to local boundary conditions.
We first identify the closed subspace corresponding to such a boundary condition.
\begin{lemma}\label{lem_bound} For a local smooth boundary condition $\Lambda\subset \calS|_{\partial M}$ we have \[B_\Lambda \define t_{\partial M} (\dom\, D_\Lambda) = \overline{\calC(\Lambda)}^{\Vert . \Vert_{\check H}}.\]
\end{lemma}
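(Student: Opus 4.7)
The plan is to prove the two inclusions separately, using the continuity of the trace map from $(\dom D_{\max},\Vert\cdot\Vert_D)$ to $\check H(A_0)$ for one direction, and the continuous extension operator from Lemma~\ref{lem_prel_D}(ii) for the other. Write $\calD_\Lambda \define \{\psi\in\calC(\mathcal{S})\mid \psi(s)\in\Lambda_s\ \forall s\in\partial M\}$, so that by definition $\dom D_\Lambda = \overline{\calD_\Lambda}^{\Vert\cdot\Vert_D}$, and note that $t_{\partial M}(\calD_\Lambda)=\calC(\Lambda)$ since $\Lambda$ is a smooth subbundle.

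For the inclusion $B_\Lambda\subset\overline{\calC(\Lambda)}^{\Vert\cdot\Vert_{\check H}}$, I would take $\psi\in\dom D_\Lambda$ and approximate it in the graph norm by a sequence $\psi_n\in\calD_\Lambda$. Since $\dom D_\Lambda\subset\dom D_{\max}$, and by \eqref{eq:trace_max} the trace $t_{\partial M}\colon \dom D_{\max}\to\check H(A_0)$ is bounded when $\dom D_{\max}$ is equipped with the graph norm, we conclude $t_{\partial M}\psi_n\to t_{\partial M}\psi$ in $\check H(A_0)$. Each $t_{\partial M}\psi_n$ lies in $\calC(\Lambda)$, so the limit lies in $\overline{\calC(\Lambda)}^{\Vert\cdot\Vert_{\check H}}$.

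For the reverse inclusion, let $\phi\in\overline{\calC(\Lambda)}^{\Vert\cdot\Vert_{\check H}}$ and choose $\phi_n\in\calC(\Lambda)$ with $\phi_n\to\phi$ in $\check H(A_0)$. Apply the extension operator $\mathcal{E}$ from Lemma~\ref{lem_prel_D}(ii): because $\phi_n\in\calC(\calS|_{\partial M})$, we get $\psi_n\define\mathcal{E}\phi_n\in\calC(\mathcal{S})$, and $t_{\partial M}\psi_n=\phi_n$ takes values in $\Lambda$, so $\psi_n\in\calD_\Lambda\subset\dom D_\Lambda$. Continuity of $\mathcal{E}$ gives $\psi_n\to\mathcal{E}\phi$ in the graph norm, hence $\mathcal{E}\phi\in\dom D_\Lambda$ (because $\dom D_\Lambda$ is closed in the graph norm), and $t_{\partial M}\mathcal{E}\phi=\phi$. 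Thus $\phi\in B_\Lambda$.

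I do not anticipate any real obstacle: the two key facts—continuity of the trace on $(\dom D_{\max},\Vert\cdot\Vert_D)$ and the existence of a continuous right inverse that preserves smoothness of boundary data—are precisely what Lemma~\ref{lem_prel_D} provides. The only point to be careful about is that the right inverse sends smooth sections of $\Lambda$ into $\calD_\Lambda$ (not merely into $\dom D_{\max}$), which is exactly the content of $\mathcal{E}(\calC(\calS|_{\partial M}))\subset\calC(\calS)$ combined with the identity $t_{\partial M}\mathcal{E}=\Id$.
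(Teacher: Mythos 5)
Your proof is correct and uses essentially the same ingredients as the paper's: graph-norm density of smooth spinors with boundary values in $\Lambda$, continuity of $t_{\partial M}\colon \dom D_{\max}\to \check H(A_0)$, and the extension operator $\mathcal{E}$; the only differences are organizational (for $\subset$ you pick a preimage in $\dom D_\Lambda$ directly instead of using $\mathcal{E}\psi\in\dom D_\Lambda$ via Lemma~\ref{lem_prel_D}(i), and for $\supset$ you use boundedness of $\mathcal{E}$ plus graph-closedness of $\dom D_\Lambda$ where the paper uses closedness of $B_\Lambda$ in $\check H(A_0)$). Both variants work; just note that the continuity of $\mathcal{E}$ you invoke is indeed part of the cited result \cite[Lem.~6.1]{BB12}, although the paper's restatement in Lemma~\ref{lem_prel_D}(ii) does not spell it out.
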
 

\begin{proof} 
	The inclusion $\supset$ directly follows from the definition of $D_\Lambda$ and the fact that $B_\Lambda$ is defined to be closed in $\check H (A_0)$. For the other inclusion, let $\psi\in B_\Lambda$. Then $\phi \define \mathcal{E}\psi \in \dom\, D_\Lambda$ by Lemma~\ref{lem_prel_D}(i). Hence, by the Definition~\ref{def:local_BC} of $\dom\, D_\Lambda$ there is a sequence $\phi_n\in \calC(S)$ with $\phi_n|_{\partial M}\in \calC(\Lambda)$ converging to $\phi$ in the graph norm. The continuity of $t_{\partial M}\colon \mathrm{dom}\, D_{\max}\to \check H(A_0)$ thus implies that $\phi_n|_{\partial M}\to t_{\partial M}(\phi) =\psi$ in $\check H(A_0)$ which proves the claim.
\end{proof}

Next, we identify the boundary space for the adjoint operator.

\begin{lemma} \label{lem:adjoint_BC}
 If $\Lambda$ is a local smooth boundary condition, then its adjoint $(D_\Lambda)^*$ is the Dirac operator $D_{\Lambda^*}$ with local boundary condition
 $\Lambda^*$ defined by 
  \begin{equation} \label{eq:adjoint_BC}
     	v \in \Lambda_s^*  \quad \text{ if and only if for all }  u\in\Lambda_s, \quad \langle  c_{\nu(s)} u, v\rangle  = 0.
 \end{equation}
 In particular, 
$\Lambda^*$ is a smooth subbundle of $\calS|_{\partial M}$ and $\rank\, \Lambda +\rank\, \Lambda^* = N$.
\end{lemma}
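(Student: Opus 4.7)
The plan is to combine Lemma~\ref{lem_prel_D}(v) and Lemma~\ref{lem_bound} to reduce the identity $(D_\Lambda)^*=D_{\Lambda^*}$ to the equality of closed subspaces $B^{\mathrm{ad}}_\Lambda=B_{\Lambda^*}$ of $\check H(A_0)$, with $B_{\Lambda^*}=\overline{\calC(\Lambda^*)}^{\Vert.\Vert_{\check H}}$ by Lemma~\ref{lem_bound}. Before turning to this equality I would dispose of the structural claim: fiberwise, $(u,v)\mapsto \langle c_{\nu(s)}u,v\rangle$ is a non-degenerate sesquilinear form on $\calS_s$ (since $c_{\nu(s)}$ is unitary and self-adjoint, hence invertible), so $\Lambda^*_s$ has dimension $N-\rank \Lambda_s$. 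In a local smooth frame $(e_1,\dots,e_k)$ of $\Lambda$, $\Lambda^*$ appears as the kernel of the smooth bundle map $v\mapsto (\langle c_\nu e_i,v\rangle)_{i=1}^k$ of constant rank, hence as a smooth subbundle.

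The inclusion $B_{\Lambda^*}\subset B^{\mathrm{ad}}_\Lambda$ is the easy direction: for $\phi \in\calC(\Lambda)$ and $\psi \in\calC(\Lambda^*)$ the integrand $\langle c_\nu\phi,\psi\rangle$ vanishes identically on $\partial M$ by \eqref{eq:adjoint_BC}; extending this by the continuity of the boundary pairing on $\check H$ (Lemma~\ref{lem_prel_D}(iii)) and the density statement of Lemma~\ref{lem_bound} gives the inclusion on the full closed subspaces.

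For the reverse inclusion I plan to use the fiberwise Hermitian orthogonal projections $\Pi_\Lambda\colon\calS|_{\partial M}\to\Lambda$ and $\Pi_{\Lambda^*}\colon\calS|_{\partial M}\to\Lambda^*$. Smoothness of the subbundles makes these smooth bundle endomorphisms, i.e., zeroth-order pseudodifferential operators, and they extend to bounded operators on $\check H$: the commutator of each with the spectral projections of $A_0$ is of order $-1$, which bridges the half-derivative anisotropy of $\check H = H^{1/2}(P_{(-\infty,\lambda)}A_0)\oplus H^{-1/2}(P_{(\lambda,+\infty)}A_0)$. Using $(\Lambda^*)^\perp = c_\nu\Lambda$ we have $\Id - \Pi_{\Lambda^*}= c_\nu \Pi_\Lambda c_\nu$. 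Given $\chi\in B^{\mathrm{ad}}_\Lambda$, set $\gamma\define \Pi_\Lambda c_\nu\chi \in \check H$. Since $\Pi_\Lambda$ and $c_\nu$ are self-adjoint, for any smooth $\zeta\in\calC(\calS|_{\partial M})$ one has
\[
\int_{\partial M}\langle \zeta,\gamma\rangle \;=\; \int_{\partial M}\langle c_\nu\Pi_\Lambda\zeta,\chi\rangle \;=\;0,
\]
because $\Pi_\Lambda\zeta\in\calC(\Lambda)\subset B_\Lambda$ and $\chi\in B^{\mathrm{ad}}_\Lambda$. Density of smooth sections in $H^{1/2}(\calS|_{\partial M})$, combined with the embedding $\check H\hookrightarrow H^{-1/2}(\calS|_{\partial M})$, forces $\gamma=0$ in $\check H$, hence $(\Id-\Pi_{\Lambda^*})\chi=c_\nu\gamma=0$ and $\Pi_{\Lambda^*}\chi=\chi$. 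Approximating $\chi$ in $\check H$ by smooth $\chi_n$, boundedness of $\Pi_{\Lambda^*}$ yields $\Pi_{\Lambda^*}\chi_n\to\chi$, while $\Pi_{\Lambda^*}\chi_n \in\calC(\Lambda^*)\subset B_{\Lambda^*}$; closedness of $B_{\Lambda^*}$ gives $\chi\in B_{\Lambda^*}$.

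The chief technical ingredient is the boundedness of the zeroth-order multipliers $\Pi_\Lambda,\Pi_{\Lambda^*}$ on the anisotropic space $\check H(A_0)$; this is a standard commutator estimate, but it is the one place where the argument is not formal, so I would flesh it out explicitly before invoking it.
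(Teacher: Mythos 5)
Your reduction, via Lemma~\ref{lem_prel_D}(v) and Lemma~\ref{lem_bound}, of the statement to the equality $B^{\mathrm{ad}}_\Lambda=B_{\Lambda^*}$ is legitimate, and the rank/smoothness argument, the inclusion $B_{\Lambda^*}\subset B^{\mathrm{ad}}_\Lambda$, and the intermediate conclusion $\Pi_{\Lambda^*}\chi=\chi$ (which only needs the $H^{1/2}$--$H^{-1/2}$ duality, not membership of $\gamma$ in $\check H$) are all fine. The gap is precisely the ingredient you flag as "standard": smooth bundle endomorphisms are in general \emph{not} bounded on $\check H(A_0)$, and the commutator with the spectral projections of $A_0$ is generically of order $0$, not $-1$. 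The projection $P_{(\lambda,+\infty)}A_0$ is a zeroth-order pseudodifferential operator whose principal symbol at $(s,\xi)$ is the orthogonal projection onto an eigenspace $E_{\pm\ii}\bigl(a(s,\xi)\bigr)$ of $a(s,\xi)=c_{\nu(s)}c_\xi$; this symbol depends on $\xi$ and does not commute with a general endomorphism. Boundedness of an endomorphism $B$ on $\check H$ amounts to $P_{(-\infty,\lambda)}\,B\,P_{(\lambda,+\infty)}$ gaining a full derivative ($H^{-1/2}\to H^{1/2}$), which at the symbol level forces $B(s)$ to commute with $c_{\nu(s)}c_\xi$ for every unit $\xi\in T^*_s\partial M$ --- a very restrictive condition that $\Pi_\Lambda$, $\Pi_{\Lambda^*}$ fail already in the simplest cases: for $d=2$, $N=2$ and the infinite-mass condition on the disk, testing $\Pi_\Lambda$ on high-frequency $A_0$-eigenmodes $u_n$ lying in the positive spectral subspace produces a component in the negative spectral subspace at the same frequency, so $\Vert\Pi_\Lambda u_n\Vert_{\check H}/\Vert u_n\Vert_{\check H}\sim n$. (Consistently, $c_\nu$ itself maps $\check H(A_0)$ onto $H^{-1/2}(P_{(-\infty,\lambda)}A_0)\oplus H^{1/2}(P_{(\lambda,+\infty)}A_0)$ rather than preserving $\check H(A_0)$.)

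Because of this, your final step --- approximating $\chi\in B^{\mathrm{ad}}_\Lambda$ by smooth $\chi_n$ and using $\Pi_{\Lambda^*}\chi_n\to\Pi_{\Lambda^*}\chi$ in $\check H$ --- collapses, and the hard inclusion $B^{\mathrm{ad}}_\Lambda\subset B_{\Lambda^*}$, i.e.\ $\dom(D_\Lambda)^*\subset\dom D_{\Lambda^*}$, is not established. What you have actually shown is that every $\chi\in B^{\mathrm{ad}}_\Lambda$ is $\Lambda^*$-valued in the $H^{-1/2}$ sense; the real content of the lemma is that such a $\chi$ belongs to $\overline{\calC(\Lambda^*)}^{\Vert.\Vert_{\check H}}$, and passing from the first statement to the second is exactly where an argument is missing (mollifying in a frame of $\Lambda^*$ does not help either, since mollification does not respect the anisotropic spectral splitting of $\check H$). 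The paper circumvents this issue entirely: it works with smooth $\psi\in\dom(D_\Lambda)^*$, tests Green's identity against sections of $\Lambda$ concentrating at a boundary point to conclude $\psi|_{\partial M}\in\calC(\Lambda^*)$ pointwise, and then invokes density of smooth spinors in $\dom(D_\Lambda)^*$ for the graph norm, so no $\check H$-boundedness of fiberwise projections is ever needed.
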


In view of this, $D_\Lambda$ is a symmetric operator if $\Lambda \subset \Lambda^*$, which amounts to
\begin{equation} \label{eq:sym_BC}
    \quad \langle  c_{\nu(s)} u, v\rangle  = 0 \quad  \text{ for all } u, v \in \Lambda_s, s\in \partial M, 
\end{equation}
and a self-adjoint operator if it is symmetric and $\rank\, \Lambda = N/2$. Throughout the paper, we will call such boundary conditions symmetric (resp. self-adjoint).

\begin{proof}
    By the definition of an adjoint, $(D_\Lambda)^*$ is a closed operator with domain a subset of $\dom D_{\max}$. By Lemma~\ref{lem_prel_D}.(i) it is characterized by the boundary traces $t_{\partial M}(\dom (D_{\Lambda})^*)$. 
    The inclusion $\dom\, D_{\Lambda^*} \subset \dom\, (D_{\Lambda})^*$ follows from the Green's identity \eqref{eq:Green}, which holds in the maximal domain in view of Lemma~\ref{lem_prel_D}.

    For the opposite inclusion, assume that $\psi \in \calC(\calS) \cap \dom\, (D_{\Lambda})^*$. Then, for all $\phi \in t_{\partial M}(\dom\, D_\Lambda)$, we have by the definition of an adjoint and Green's identity 
    $$
\int_{\partial M} \langle c_\nu \psi|_{\partial M}, \phi \rangle = 0 .
    $$
    Fix $s \in \partial M$ and $v \in \Lambda_s$ and take a sequence $\phi_n \in \calC( \Lambda)$ that converges, in the sense of distributions on $\partial M$, to $\delta_s v $. Such a sequence can be constructed explicitly by considering a trivialization of $\calS|_{\partial M}$ in a neighborhood of $s$ and suitable $n$-dependent scalar functions. 
    Then, by using the extension operator, $\calE \phi_n \in \dom\, D_\Lambda$ and hence, 
    $$
    \int_M \langle c_\nu \psi|_{\partial M}, \phi_n \rangle =  0 \quad \text{ for all } n \in \N.
    $$
    Taking the limit as $n \to \infty$ shows that
    $$
\langle  c_{\nu(s)} \psi(s), v\rangle = 0
    $$
    and hence $\psi|_{\partial M}\in \calC(\Lambda^*)$.
  Since smooth spinors are dense (for the graph norm) in $\dom\, (D_\Lambda)^*$, this shows the inclusion $\dom\, (D_\Lambda)^*  \subset \dom \, D_{\Lambda^*}$. 

  Finally, if $\rank \Lambda= k$, 
  $$\rank\, \Lambda^* = \dim\, \Lambda^*_s= N- \dim (\Ker c_\nu P_{\Lambda_s}) = N-k$$ 
  since $c_{\nu(s)}$ is nondegenerate.
\end{proof}

\begin{remark}\label{rem_noncomp} For all of the above, the assumption that the manifold $M$ is compact can be  relaxed to $M$ being complete and having a compact boundary, as done in \cite{BB13, BB12}. For Lemma~\ref{lem_prel_D}.vi the additional assumption that $D$ is complete is needed, cf. \cite[Thm. 6.7]{BB12}. For self-adjointness and regularity in case $M$ is complete with compact boundary, see Remarks~\ref{rem_noncomp_sa} and~\ref{rem_noncomp_reg} at the end of the corresponding sections.
\end{remark}

\section{Self-adjoint boundary conditions.} \label{sec:symmetricBC}

By the last section, we know that for local smooth boundary conditions being self-adjoint (in the sense that the associated Dirac operator is self-adjoint) is equivalent to being symmetric with maximal rank. The goal of this section is to prove the equivalence of the characterization (ii) and (iii) of Theorem~\ref{thm:local}. 

\subsection{Pointwise considerations} \label{subsec:pointwise}
Let $s\in \partial M$. The symmetry condition (\ref{eq:adjoint_BC}) suggests to consider the sesquilinear form:
\begin{equation}\label{}
	\frakb_s\colon \calS_s\times \calS_s\to\bbC,\qquad
	\frakb_s(u,v) \define \inner{c_{\nu(s)}u}{v}.
\end{equation}
The condition for being symmetric is then that
$\forall u, v\in\Lambda_s$ $\frakb_s(u,v)= 0$ at each point $s\in\partial M$. 
That is, $\forall s\in\partial M$ the subspace
$\Lambda_s\subset \calS_s$ should be a complex subspace that is \emph{isotropic} with respect to $\frakb_s$. Since we are interested in self-adjoint local smooth boundary counditions, we need $\Lambda_s$ to be $n$-dimensional 
 (Recall that the complex
 dimension of $\calS_s$ is $N=2n$.) by Lemma~\ref{lem:adjoint_BC}.
 %
%
We first analyse, for a fixed $s$, the space of all such $\Lambda_s$, following
\cite{arnold2000complex}.  Accordingly, we will
drop the subscript $s$ from the notation in $\calS_s$, $\nu(s)$ and so on.\medskip 

Recall that we take Clifford multiplication to be Hermitian. Since we took $\nu$ a normalized outward normal, $c_\nu^2 = \Id$. Also, $c_\nu$ anticommutes with $c_t$ for all $t\in T_s\partial M$. Therefore, the operator $c_\nu$ has  eigenvalues $\pm 1$ with equal multiplicity $n$, which implies that
$\frakb$ is non-degenerate.  It follows that all isotropic
subspaces of $\calS$ have dimension at most $n$, and 
therefore we are interested in all subspaces 
$\Lambda\subset \calS$ that are maximally isotropic with
respect to $\frakb$.

Since 
\begin{equation}\label{eq:hermConj}
	\frakb(v,u) = \overline{\frakb(u,v)},
\end{equation}
the imaginary part of $\frakb$
\begin{equation*}
	\Omega\define \Im \frakb
\end{equation*}
is a (real) symplectic form.  Note that it satisfies
\[
\forall \lambda\in S^1\qquad \Omega(\lambda u, \lambda v) = \Omega(u,v),
\]
i.e., it is \emph{$\bbC$-symplectoidal} in the sense of Arnold.

\medskip

\begin{lemma}\label{lem:Lagr_b_Omega}
A complex $n$-dimensional subspace $\Lambda\subset \calS$ is 
maximally isotropic with respect 
to $\frakb$ if and only if it is Lagrangian (=maximally isotropic)
with respect to $\Omega$.
\end{lemma}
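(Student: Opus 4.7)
The plan is to prove both directions by relating the vanishing of $\frakb|_{\Lambda\times\Lambda}$ to the vanishing of $\Omega|_{\Lambda\times\Lambda}$, exploiting the fact that $\Lambda$ is assumed to be a \emph{complex} subspace. First I would verify that the two notions of ``maximally isotropic'' produce subspaces of comparable size: because $c_\nu$ has eigenvalues $\pm 1$ with equal multiplicity, $\frakb$ is a nondegenerate Hermitian form of signature $(n,n)$, so its maximal complex isotropic subspaces have complex dimension $n$; on the other hand, viewing $\calS$ as a real vector space of dimension $4n$, a Lagrangian for $\Omega$ has real dimension $2n$, i.e.\ the real dimension of a complex $n$-plane. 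Thus in both cases the relevant subspaces have real dimension $2n$, so it suffices to prove that, for a complex $n$-plane $\Lambda$, the conditions $\frakb|_{\Lambda\times\Lambda}=0$ and $\Omega|_{\Lambda\times\Lambda}=0$ are equivalent.

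The forward direction is immediate: if $\frakb(u,v)=0$ for all $u,v\in\Lambda$, then in particular $\Omega(u,v)=\Im\frakb(u,v)=0$.

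For the converse, the key observation is that $\Lambda$ being a \emph{complex} subspace means $v\in\Lambda \Rightarrow \ii v\in\Lambda$. Using that $\langle\cdot,\cdot\rangle$ is antilinear in the second argument, a one-line computation gives
\[
\frakb(u,\ii v) = \langle c_{\nu}u, \ii v\rangle = -\ii\,\frakb(u,v),
\]
so
\[
\Omega(u,\ii v) = \Im\bigl(-\ii\,\frakb(u,v)\bigr) = -\Re\,\frakb(u,v).
\]
Hence if $\Omega|_{\Lambda\times\Lambda}=0$, applying it to the pair $(u,\ii v)\in\Lambda\times\Lambda$ yields $\Re\,\frakb(u,v)=0$, while applying it to $(u,v)$ gives $\Im\,\frakb(u,v)=0$; together, $\frakb(u,v)=0$. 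This completes the equivalence.

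I do not foresee a real obstacle: the argument is essentially a bookkeeping exercise with sesquilinearity conventions. The only thing to be careful about is the antilinearity of $\langle\cdot,\cdot\rangle$ in the second slot, which controls the sign in $\frakb(u,\ii v)=-\ii\,\frakb(u,v)$; if one used the opposite convention the same computation goes through with the pair $(u,v)\mapsto(\ii u, v)$ instead, so the conclusion is independent of the convention, as long as $\Lambda$ is $\bbC$-invariant.
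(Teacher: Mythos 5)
Your proposal is correct and takes essentially the same approach as the paper: the forward direction is the trivial implication $\frakb=0\Rightarrow\Omega=\Im\frakb=0$ together with the dimension count, and the converse exploits the complex structure of $\Lambda$ to recover $\Re\,\frakb$ from $\Omega$, the only cosmetic difference being that you insert $\ii$ in the second slot (pair $(u,\ii v)$) while the paper uses $(\ii u, v)$.
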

\begin{proof}
	It is obvious that maximally isotropic with respect
 to $\frakb$ implies Lagrangian, since the dimension is $n=N/2$. For the converse, assume that $\Lambda$ is $\Omega$--Lagrangian.  
	Then simply note that
	\[
	\forall u,\,v\in\Lambda\qquad \frakb(\ii u,v) = \ii\frakb(u,v)
	\]
	is both real and purely imaginary  (by the Lagrangian condition together with $\ii u\in\Lambda$), 
	so it must be zero.  Thus $\Lambda$ is isotropic with respect to $\frakb$.
\end{proof}

\begin{definition}
Denote by $\calL$ the space of all complex $n$-dimensional subspaces $\Lambda\subset \calS$ that are isotropic with respect 
to $\frakb$.  In view of the last lemma such a $\Lambda$ will be called an \emph{$\Omega$-Lagrangian subspace}.
\end{definition}

We now review the main result of Arnold, \cite{arnold2000complex}, that $\calL$ is diffeomorphic to the unitary group $U(n)$
(Corollary \ref{cor:conclusion} below).
Let
\begin{equation*}\label{}
	E_{\pm} \define \pm 1\text{ eigenspace of } c_\nu,\qquad 
 \calS = E_+\oplus E_-.
\end{equation*}

\begin{lemma}
Any $\Omega$-Lagrangian in $\calS$ is transverse to each $E_{\pm}$, and therefore is the graph of a (unique) 
map $F\colon E_+\to E_-$.
\end{lemma}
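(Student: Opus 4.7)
The plan is to reduce the transversality claim to the observation that the restriction of $\frakb$ to each eigenspace $E_\pm$ is definite, which forces any $\frakb$-isotropic subspace to intersect $E_\pm$ trivially. A dimension count then upgrades trivial intersection to transversality, after which the graph description is automatic.

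First I would compute $\frakb$ on the eigenspaces of $c_\nu$. For $u \in E_+$ we have
\[
\frakb(u,u) = \inner{c_\nu u}{u} = \inner{u}{u} = \norm{u}^2,
\]
and for $u \in E_-$ we have $\frakb(u,u) = -\norm{u}^2$. Hence $\frakb\rest_{E_+}$ is positive definite and $\frakb\rest_{E_-}$ is negative definite; in particular neither restriction has a nonzero isotropic vector.

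Now suppose $\Lambda$ is an $\Omega$-Lagrangian, equivalently (by Lemma~\ref{lem:Lagr_b_Omega}) a complex $n$-dimensional $\frakb$-isotropic subspace. If $u \in \Lambda \cap E_\pm$ were nonzero, then $\frakb(u,u) = \pm\norm{u}^2 \neq 0$, contradicting that $\Lambda$ is $\frakb$-isotropic. Thus $\Lambda \cap E_+ = \Lambda \cap E_- = \{0\}$. Since $\dim_\bbC \Lambda + \dim_\bbC E_\pm = n+n = 2n = \dim_\bbC \calS$, trivial intersection implies $\calS = \Lambda \oplus E_+ = \Lambda \oplus E_-$, which is the transversality statement.

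Finally, transversality to $E_-$ means that the projection $\pi_+ \colon \Lambda \to E_+$ along $E_-$ is a linear isomorphism. Setting $F \define \pi_- \circ \pi_+^{-1} \colon E_+ \to E_-$, where $\pi_-$ denotes the projection onto $E_-$ along $E_+$, one has $\Lambda = \{\phi + F\phi \mid \phi \in E_+\}$, and this description of $\Lambda$ as a graph uniquely determines $F$. No step is really difficult here; the only thing to be careful about is invoking the correct sign for $\frakb$ on each eigenspace, since everything follows immediately from the Hermiticity of Clifford multiplication by the unit vector $\nu$.
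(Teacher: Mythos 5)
Your proof is correct and follows essentially the same route as the paper: the restriction of $\frakb$ to $E_\pm$ is $\pm\inner{\cdot}{\cdot}$, so isotropy forces $\Lambda\cap E_\pm=\{0\}$, and the dimension count plus $\calS=E_+\oplus E_-$ gives the graph description. Your explicit construction of $F$ via the projections is just a spelled-out version of the step the paper leaves implicit.
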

\begin{proof} Let $\Lambda$ be a $\Omega$-Lagrangian in $\calS$.  Since $E_\pm$ and $\Lambda$ are both $n=N/2$ dimensional vector spaces, $\Lambda$ is transverse to $E_\pm$ if and only if $\Lambda\cap E_\pm =\{0\}$.  
Note that the restriction of $\frakb$ to $E_{+}\times E_{+}$ (resp. $E_- \times E_-$) coincides with the restriction of 
$ \inner{\cdot}{\cdot}$ (resp. $ - \inner{\cdot}{\cdot}$ ) to this space.
Let $v\in V\define \Lambda\cap E_+$.  Then $0=\frakb{(v,v)}=\inner{c_\nu v}{v}=\inner{v}{v}=\norm{v}^2$ and, thus, $V=\{0\}$. 	
	The proof that $\Lambda\cap E_- = \{0\}$ is identical. Since $\calS=E_+\oplus E_-$, the transversality implies that $\Lambda$ is the graph of a unique map $F\colon E_+\to E_-$. 
\end{proof}
For a  complex linear map $F\colon  E_+\to E_-$ let  $\Lambda_F$ be its graph, that is
\[
\Lambda_F = \left\{ u+F(u)\:|\: u\in E_+\right\} \subset E_+ \oplus E_- =\calS.
\]
Such an $F$ is called  \emph{unitary} if 
\[
\forall u,v\in E_+\qquad \inner{F(u)}{F(v)} = \inner{u}{v}.
\]

\begin{lemma}
Let $F\colon  E_+\to E_-$ be a $\C$-linear map.
Then its graph $\Lambda_F$ is an $\Omega$-Lagrangian subspace of $\calS$ if and only if $F$ is unitary.
\end{lemma}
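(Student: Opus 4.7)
The plan is to reduce the $\Omega$-Lagrangian condition to the $\frakb$-isotropy condition (which is immediate by Lemma~\ref{lem:Lagr_b_Omega}, since $\Lambda_F$ has complex dimension $n$), and then compute $\frakb$ restricted to $\Lambda_F \times \Lambda_F$ in terms of $F$ and the inner product on $\calS$. The key structural input is that $E_+$ and $E_-$ are orthogonal with respect to $\langle \cdot, \cdot \rangle$: this follows because $c_\nu$ is self-adjoint (Clifford multiplication is Hermitian), so its eigenspaces for the distinct real eigenvalues $+1$ and $-1$ are orthogonal.

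First I would take arbitrary $u, v \in E_+$ and write the two elements of $\Lambda_F$ as $u + Fu$ and $v + Fv$. Using $c_\nu u = u$, $c_\nu v = v$, $c_\nu(Fu) = -Fu$, $c_\nu(Fv) = -Fv$, I get
\[
\frakb(u+Fu,\, v+Fv) = \langle u - Fu,\, v + Fv\rangle.
\]
Expanding by sesquilinearity and using $E_+ \perp E_-$ to kill the cross terms $\langle u, Fv\rangle$ and $\langle Fu, v\rangle$ leaves
\[
\frakb(u+Fu,\, v+Fv) = \langle u, v\rangle - \langle Fu, Fv\rangle.
\]

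From this identity both implications are immediate: if $F$ is unitary then the right-hand side vanishes for all $u,v \in E_+$, so $\Lambda_F$ is $\frakb$-isotropic of dimension $n$, hence an $\Omega$-Lagrangian by Lemma~\ref{lem:Lagr_b_Omega}. Conversely, if $\Lambda_F$ is $\Omega$-Lagrangian, then it is $\frakb$-isotropic, forcing $\langle Fu, Fv\rangle = \langle u, v\rangle$ for every $u, v \in E_+$, which is exactly the unitarity of $F$.

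There is no real obstacle here; the only point that needs to be flagged (and which could trip one up) is verifying orthogonality of $E_+$ and $E_-$, so that the cross terms disappear — without it one would get the wrong identity. Everything else is a one-line computation.
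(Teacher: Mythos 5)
Your proof is correct and follows essentially the same route as the paper: reduce to $\frakb$-isotropy via Lemma~\ref{lem:Lagr_b_Omega}, compute $\frakb(u+Fu, v+Fv)=\langle u,v\rangle-\langle Fu,Fv\rangle$ using the orthogonality of the $c_\nu$-eigenspaces, and read off both implications. No gaps.
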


\begin{proof}
	Fix a unitary map $F$. By Lemma~\ref{lem:Lagr_b_Omega} we need to investigate when $\Lambda_F$ is isotropic for $\frakb$.  Let $u,v\in E_+$, and compute
	\[
	\frakb(u+F(u), v+F(v)) = \inner{c_\nu(u+F(u))}{v+F(v)} = \inner{u-F(u)}{v+F(v)}.
	\]
	But
	\[
	\inner{u}{F(v)} = 0 = \inner{F(u)}{v}
	\]
	because the $E_\pm$ eigenspaces are orthogonal.  Therefore
	\begin{align}\label{eq:b_unit}
	\frakb(u+F(u), v+F(v))= \inner{u}{v} -\inner{F(u)}{F(v)},
	\end{align}
	and the conclusion follows since $F$ is unitary. 
	Is now $\Lambda$ an $\Omega$-Lagrangian in $\calS$, then the last lemma gives a unique complex linear map $F$ with $\Lambda=\Lambda_F$. So the only statement left to show is $F$ being unitary. But the same calculation as above leads to \eqref{eq:b_unit} where now the left-hand-side is zero by Lemma~\ref{lem:Lagr_b_Omega}.
\end{proof}

Altogether this implies

\begin{corollary}\label{cor:conclusion}
The space $\calL$ of all maximally isotropic subspaces of $(\calS, \frakb)$
is naturally isomorphic to the space of 
unitary maps $E_-\to E_+$.
\end{corollary}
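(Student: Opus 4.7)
The plan is to assemble \Cref{cor:conclusion} directly from the two preceding lemmas, which together have essentially done all the work. The strategy is to define the natural map $\Phi$ from $\calL$ to the set $U(E_+,E_-)$ of unitary linear maps $E_+\to E_-$, then invert it and finally pass to maps $E_-\to E_+$ via inverse.

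First I would write down the candidate map. Given $\Lambda\in\calL$, the second lemma (transversality to $E_+$ and $E_-$) lets me associate to $\Lambda$ the unique complex-linear $F_\Lambda\colon E_+\to E_-$ whose graph equals $\Lambda$. The third lemma then immediately tells me that $F_\Lambda$ is unitary, so setting $\Phi(\Lambda)\define F_\Lambda$ gives a well-defined map $\Phi\colon\calL\to U(E_+,E_-)$. Conversely, given a unitary $F\colon E_+\to E_-$, the third lemma says $\Lambda_F\in\calL$, so $\Psi(F)\define \Lambda_F$ lands in $\calL$. The identities $\Phi\circ\Psi = \Id$ and $\Psi\circ\Phi = \Id$ are tautological from uniqueness in the graph description, so $\Phi$ is a bijection. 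Since $\Phi$ is characterized algebraically without any auxiliary choices (only the splitting $\calS=E_+\oplus E_-$, which is canonically determined by the Clifford action of $\nu$), this bijection is natural.

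To match the statement as phrased, I would then send a unitary $F\colon E_+\to E_-$ to its inverse $F^{-1}\colon E_-\to E_+$, which is again unitary because inverses of unitaries are unitary; composing with $\Phi$ produces the asserted natural identification between $\calL$ and unitary maps $E_-\to E_+$. If one wishes to upgrade this to a diffeomorphism with $U(n)$, one only needs to observe that fixing orthonormal bases of $E_\pm$ identifies $U(E_+,E_-)$ with $U(n)$ and that the graph construction $F\mapsto \Lambda_F$ is a smooth embedding into the Grassmannian.

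I do not foresee a serious obstacle: the entire content has been distilled into the two preceding lemmas, and what remains is bookkeeping. The only point worth being careful about is ensuring that the direction of the unitary map ($E_+\to E_-$ vs. $E_-\to E_+$) is handled consistently with the statement, which I handle by the trivial inversion step above.
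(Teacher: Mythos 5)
Your proposal is correct and matches the paper's approach: the paper derives Corollary~\ref{cor:conclusion} directly from the two preceding lemmas (uniqueness of the graph description plus the unitarity criterion), exactly as you assemble it, with the direction $E_-\to E_+$ versus $E_+\to E_-$ being the same harmless bookkeeping point you address by inversion.
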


\subsection{Global considerations} \label{subsec:global}
Recall that $d=\mathrm{dim}\, M>1$ and $N=2n=\mathrm{rank}\, \calS$. From the last section we directly obtain the proof of the remaining part of  Theorem~\ref{thm:local}:

\begin{proof}[Proof of  Theorem~\ref{thm:local}] The equivalence of (i) and (ii) was proven below Lemma~\ref{lem:adjoint_BC}.
	
Let $\Lambda$ be a local smooth boundary condition of rank $n$. Let $s\in \partial M$. By the last subsection $\Lambda_s$ is symmetric if and only if there is a unitary map $\widetilde{f}(s)\colon E_-(c_\nu(s))\to E_+(c_\nu(s))$. Since $\calS|_{\partial M}$ and the eigenspace bundles $E_\pm (c_\nu)$ are smooth, $\Lambda$ is a smooth subbundle of $\calS|_{\partial M}$ if and only if $\widetilde{f}$ is a smooth section of $U(E_+(c_\nu), E_-(c_\nu))$.
\end{proof}

We will see in Section~\ref{sec_chiral} that there will always be many self-adjoint local smooth boundary conditions if the Clifford bundle admits a chirality operator. Without the existence of a chirality operator on a given Clifford bundle, there can be topological obstructions to the existence of a self-adjoint local smooth boundary condition. We note that without the condition of being self-adjoint  there are of course always symmetric local boundary conditions, e.g. $\Lambda_s=\{0\}$.\medskip 

In case the boundary $\partial M$ admits a non-vanishing tangent vector field, there always exists a self-adjoint local smooth boundary condition, as can be seen in the next lemma. On the other hand, we will give in Proposition~\ref{prop:global_d_odd} a complete characterization of self-adjoint local smooth boundary conditions for $(d,N) =(3,2)$ and $(d,N)=(5,4)$, which are exactly the minimal $N$-values for the spinor bundle in dimension $3$ resp. $5$. In these cases, self-adjoint local smooth boundary conditions do only exist if there exists a non-vanishing tangent vector field on $\partial M$. For dimension $3$ this implies for example that the each boundary component is homeomorphic to a torus.\medskip 

\begin{lemma} \label{lem:tangent_Lambda}
	Let the unit tangent bundle of $\partial  M$ admit a smooth section $t$. Then the positive eigenbundle of $c_t$ is a self-adjoint local smooth boundary condition.
\end{lemma}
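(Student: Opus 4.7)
The plan is to verify condition (ii) of Theorem~\ref{thm:local}, namely that $\Lambda \define E_+(c_t)$ is a smooth subbundle of $\mathcal{S}|_{\partial M}$ of rank $n$ and that $D_\Lambda$ is symmetric. Self-adjointness then follows immediately from the theorem.

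First I would check that $\Lambda$ is a smooth subbundle. Since $t$ is a smooth unit section of $T\partial M$, Clifford multiplication yields a smooth endomorphism $c_t$ of $\mathcal{S}|_{\partial M}$. Hermiticity of the Clifford action together with $c_t^2 = g(t,t)\Id = \Id$ forces $c_t$ to be self-adjoint with spectrum $\{+1,-1\}$, constant along $\partial M$. The spectral projections are therefore smooth, and so is the eigenbundle $\Lambda = E_+(c_t)$.

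Next I would compute $\rank \Lambda$. Since $t$ is tangent and $\nu$ is normal, $g(t,\nu)=0$, so the Clifford relation gives $c_\nu c_t + c_t c_\nu = 0$. Because $c_\nu$ is invertible (with $c_\nu^2=\Id$), the map $c_\nu \colon \mathcal{S}|_{\partial M} \to \mathcal{S}|_{\partial M}$ is a bundle isomorphism that intertwines $c_t$ with $-c_t$, hence restricts to an isomorphism $E_+(c_t) \to E_-(c_t)$. Combined with $\mathcal{S}|_{\partial M} = E_+(c_t) \oplus E_-(c_t)$ and $\rank \mathcal{S} = 2n$, this yields $\rank \Lambda = n$.

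Finally I would verify symmetry by the pointwise criterion \eqref{eq:sym_BC}. For $u,v \in \Lambda_s = E_+(c_{t(s)})$, using $c_t u = u$, the self-adjointness of $c_t$ and the anticommutation $c_\nu c_t = -c_t c_\nu$,
\begin{equation*}
\frakb_s(u,v)=\langle c_\nu u, v\rangle=\langle c_\nu c_t u, v\rangle = -\langle c_t c_\nu u, v\rangle = -\langle c_\nu u, c_t v\rangle = -\langle c_\nu u, v\rangle,
\end{equation*}
so $\frakb_s(u,v)=0$. Thus $\Lambda$ satisfies condition (ii) of Theorem~\ref{thm:local}, and $D_\Lambda$ is self-adjoint. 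There is no real obstacle here: the whole argument is an immediate consequence of the anticommutation of $c_t$ and $c_\nu$ together with the pointwise characterization already established.
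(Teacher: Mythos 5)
Your proof is correct, but it takes a slightly different route through Theorem~\ref{thm:local} than the paper does: you verify condition (ii) directly (smooth rank-$n$ subbundle plus the pointwise isotropy condition \eqref{eq:sym_BC}), whereas the paper's proof goes through condition (iii). There, one observes that since $g(t,\nu)=0$ forces $c_\nu c_t=-c_tc_\nu$ and $c_t^2=\Id$ with $c_t$ self-adjoint, the restriction $c_{t(s)}|_{E_+(c_{\nu(s)})}$ is a unitary map $E_+(c_{\nu(s)})\to E_-(c_{\nu(s)})$; its graph $\{v+c_{t(s)}v\ |\ v\in E_+(c_{\nu(s)})\}$ is therefore a self-adjoint local smooth boundary condition by (iii), and this graph coincides with $E_+(c_{t(s)})$ because $c_{t(s)}(v+c_{t(s)}v)=v+c_{t(s)}v$. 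Both arguments hinge on exactly the same two algebraic facts (anticommutation with $c_\nu$ and $c_t^2=\Id$). What the paper's route buys is the explicit graph presentation with unitary map $F=c_t|_{E_+(c_\nu)}$, i.e.\ the normal form used throughout the rest of the paper; what your route buys is economy, since you never need to exhibit $\Lambda$ as a graph over $E_+(c_\nu)$ but only check that $\frakb_s$ vanishes on $\Lambda_s$ and count the rank, and your use of $c_\nu$ as an isomorphism $E_+(c_t)\to E_-(c_t)$ is the clean way to get both the rank count and the constancy of rank needed for smoothness of the eigenbundle.
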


\begin{proof}
	Since $t$ is tangent in each point of $\partial M$, $c_t$ anticommutes with $c_\nu$.  Thus, $c_{t(s)}|_{E_+(c_{\nu(s)})}\colon E_+(c_{\nu(s)}) \to E_-(c_{\nu(s)})$. Since $c_t^2=\Id$ and since Clifford multiplication is self-adjoint, this map is unitary and hence, by Theorem~\ref{thm:local} $\Lambda_s=\{v+c_{t(s)}v \ |\ v\in E_+(c_{\nu(s)})\}$ defines a self-adjoint local smooth boundary condition. From $c_t^2=\Id$   we get  $c_{t(s)}\Lambda_s= \Lambda_s$. Thus, $\Lambda$ is the positive eigenbundle of $c_t$.
\end{proof}

\begin{remark}\label{rem_noncomp_sa}
For complete manifolds with compact boundary Theorem~\ref{thm:local} is still valid: This follows directly by gluing together the result for compact manifolds with boundary with the fact that for complete manifolds without boundary the Dirac operator is self-adjoint with domain $\dom\, D_{\mathrm{min}}=\dom\, D_{\mathrm{max}}$.
\end{remark}

\section{Regularity of self-adjoint local smooth boundary conditions}\label{sec:SPcriterion}

In this section  we study the regularity of self-adjoint local smooth boundary conditions. Here $M$ is still compact (except in Remark~\ref{rem_noncomp_reg}).

\begin{definition}\label{def:regular}
	A boundary condition is called \emph{regular} if $\dom\, D_\Lambda\subset W^{1,2}(\calS)$. 
    A boundary condition is \emph{strongly regular} or \emph{$\infty$-regular}, see  \cite[Def.~2.3]{BLZ} or \cite[Def.~7.15]{BB12}, if $\{\phi \in \dom\, D_\Lambda\ |\ D^\ell \phi\in L^2(\calS)\ \forall \ell \leq k\}\subset W^{k,2}(\calS)$ for all $k\in \mathbb N_{\geq 0}$.
\end{definition}

Strong regularity is equivalent to elliptic estimates, i.e. for all $k\in \mathbb N$ there is a $C_k>0$ such that
\[ \Vert \phi\Vert_{W^{k,2}(\calS)} \leq C_k \sum_{j=0}^k \Vert D^j\phi\Vert_{L^2(\calS)}\]
for all $\phi\in \dom\, D_\Lambda  \cap \bigcap_{j=1}^{k}\dom\, D^j$.
Analogously, regularity is equivalent to this estimate for $k=1$.

	\begin{remark} Any function in the domain $D_\Lambda$ is in $W^{1,2}_{\rm loc}$ away from the boundary. Thus, if $M$ is compact, the regularity is determined by the regularity of its boundary values. This can be made formal by using the trace and extension operators from Section~\ref{sec:Prelim}, which give the following characterization:
    \end{remark}
    
	\begin{proposition}
	    A local smooth boundary condition $\Lambda$ is regular if and only if $\overline{\calC (\Lambda)}^{\check H } \subset W^{1/2,2}(\calS|_{\partial M})$
	\end{proposition}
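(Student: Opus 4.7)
The plan is to combine Lemma~\ref{lem_bound}, which identifies $t_{\partial M}(\dom D_\Lambda) = \overline{\calC(\Lambda)}^{\check H}$, with the general transfer principle that the Sobolev regularity of $\phi \in \dom D_{\max}$ is controlled by the Sobolev regularity of its boundary trace. This reduces the proposition entirely to a statement about boundary traces.

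One direction will be immediate. If $\dom D_\Lambda \subset W^{1,2}(\calS)$, the standard trace theorem yields $t_{\partial M}(\dom D_\Lambda) \subset W^{1/2,2}(\calS|_{\partial M})$, and the left-hand side equals $\overline{\calC(\Lambda)}^{\check H}$ by Lemma~\ref{lem_bound}; hence $\overline{\calC(\Lambda)}^{\check H} \subset W^{1/2,2}(\calS|_{\partial M})$.

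For the reverse direction, I would invoke the following transfer principle from the general theory of Dirac boundary value problems (cf.\ \cite[Sec.~7]{BB12} or \cite{bookBooss}): a section $\phi \in \dom D_{\max}$ lies in $W^{1,2}(\calS)$ if and only if $t_{\partial M}\phi \in W^{1/2,2}(\calS|_{\partial M})$. Granting this, if $\overline{\calC(\Lambda)}^{\check H} \subset W^{1/2,2}$ and $\phi \in \dom D_\Lambda \subset \dom D_{\max}$, then Lemma~\ref{lem_bound} forces $t_{\partial M}\phi \in \overline{\calC(\Lambda)}^{\check H} \subset W^{1/2,2}$, and therefore $\phi \in W^{1,2}(\calS)$, which gives $\dom D_\Lambda \subset W^{1,2}(\calS)$.

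The main obstacle will be verifying the transfer principle itself. My strategy is to work in collar coordinates $\partial M \times [0,\eps)$ where $D = -\ii c_\nu(\partial_t + A_t)$: expanding $\phi$ in the spectral basis of $A_0$ reduces regularity across the collar to scalar ODEs in $t$, and the decomposition $\check H(A_0) = H^{1/2}(P_{(-\infty,\lambda)} A_0) \oplus H^{-1/2}(P_{(\lambda,\infty)} A_0)$ shows that the trace lies in $H^{1/2}(A_0) = W^{1/2,2}$ precisely when the positive-spectral component of the trace gains the extra half-derivative needed to upgrade $\phi$ from $L^2$ to $W^{1,2}$ in the collar. Interior regularity away from $\partial M$ is already noted in the preceding remark, so combining the two yields the full $W^{1,2}$ estimate. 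An equivalent route, perhaps more convenient, is to exhibit an extension operator $\mathcal{E}\colon W^{1/2,2}(\calS|_{\partial M}) \to W^{1,2}(\calS)$ refining the one in Lemma~\ref{lem_prel_D}(ii), and decompose $\phi = \phi_0 + \mathcal{E}(t_{\partial M}\phi)$ with $\phi_0 \in \dom D_{\min}$, then check that $\dom D_{\min} \subset W^{1,2}(\calS)$ by the standard compact-boundary elliptic estimate.
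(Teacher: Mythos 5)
Your proposal is correct and follows essentially the same route as the paper: the forward direction is the identical combination of the trace theorem with Lemma~\ref{lem_bound}, and your ``equivalent route'' for the converse --- decomposing $\phi$ into a piece in $\dom D_{\min}\subset W^{1,2}(\calS)$ plus an extension of the $W^{1/2,2}$ trace --- is exactly what the paper means by ``using the extension operator from Lemma~\ref{lem_prel_D}(ii) and the definition of $\dom D_\Lambda$''. Your transfer principle and its sketched collar/spectral proof are just the underlying B\"ar--Ballmann machinery that the paper cites implicitly, so no genuinely different argument is involved.
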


						\begin{proof} 
			If the boundary condition is regular, then taking the trace implies that $t_{\partial M} (\dom\, D_\Lambda)\subset W^{1/2,2}(\calS|_{\partial M})$ and hence $\overline{\calC (\Lambda)}^{\check H } \subset W^{1/2, 2}(\calS|_{\partial M})$ by Lemma~	\ref{lem_bound}. For the reciprocal, assume that $\overline{\calC (\Lambda)}^{\check H } \subset W^{1/2, 2}(\calS|_{\partial M})$. Then, using the extension operator from Lemma~\ref{lem_prel_D}(ii) and Definition~\ref{def:local_BC} of $\dom\, D_\Lambda$ we obtain $\dom\, D_\Lambda\subset W^{1,2}(\calS)$. 
			\end{proof}

The Shapiro-Lopatinski condition characterizes this inclusion of boundary traces in terms of the principal symbol of the boundary operator.

\begin{proposition} (Shapiro-Lopatinski criterion \cite[Remark~2.8]{BLZ} (see also \cite[Sec. 20.1]{HormanderIII}))\label{prop:shap-lop}
Let $\Lambda\subset \calS|_{\partial M}$ be a local smooth boundary condition. 
$D_\Lambda$ is strongly regular
if  for all $s \in \partial M $ and $k \in T_s^*\partial M$ with $|k|=1$,
$$
E_{+\ii}(a(s,k)) \cap \Lambda_s^\perp = \{0\}$$
or, equivalently, if  for all $s \in \partial M $ and $k \in T_s^*\partial M$ with $|k|=1$,
\begin{align}\label{eq:shap} 
E_{-\ii}(a(s,k)) \cap \Lambda_s = \{0\}\end{align}
where  $a$ is the principal symbol of $A_0$ and $E_{\pm \ii}(a(s,k))$ are the $\pm \ii$-eigenspaces of $a(s,k)\colon \calS_s\to \calS_s$. 
\end{proposition}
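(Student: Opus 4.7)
The plan is to establish the elliptic estimate
\[
\|\psi\|_{W^{k,2}(\calS)} \le C_k \sum_{j=0}^k \|D^j\psi\|_{L^2(\calS)}
\]
for all $k\ge 0$ and all $\psi\in \dom\, D_\Lambda\cap \bigcap_{j=1}^{k}\dom\, D^j$, via a parametrix construction for the boundary value problem. Interior ellipticity of $D$ (its principal symbol $c_\xi$ is invertible for $\xi\neq 0$) handles the estimate away from $\partial M$ by standard pseudo-differential theory, so by a partition of unity it suffices to work in a collar neighborhood of an arbitrary boundary point $s_0\in\partial M$. There one uses the factorization $D=-\ii c_\nu(\partial_t+A_t)$ from \eqref{eq:def_At} and straightens the boundary onto $V\times[0,\epsilon)$ with $V\subset\partial M$ an open chart.

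The crux is the analysis of the model problem obtained by freezing the coefficients of $A_t$ at $(s_0,0)$ and tangentially Fourier-transforming: the ODE $(\partial_t + \ii\, a(s_0,k))\hat\psi = \hat f$ on $[0,\infty)$ with boundary condition $\hat\psi(0)\in\Lambda_{s_0}$. (The factor $\ii$ reflects the commutator definition of $a$: on the plane wave $e^{\ii k\cdot x}u$ the operator $A_0$ acts by multiplication by $\ii\, a(s_0,k)$.) Since $A_0$ is self-adjoint Dirac-type on $\partial M$, the endomorphism $\ii\, a(s_0,k)$ is self-adjoint and satisfies $(\ii\, a(s_0,k))^2 = |k|^2\, \Id$; hence its spectrum is $\{\pm|k|\}$ with $n$-dimensional eigenspaces. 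The $L^2([0,\infty))$-solutions of the homogeneous model are therefore $\hat\psi(t)=e^{-\ii\, a(s_0,k)t}\hat\psi(0)$ with $\hat\psi(0)$ in the $+|k|$ eigenspace of $\ii\, a(s_0,k)$, i.e.\ exactly $E_{-\ii}(a(s_0,k))$. The hypothesis \eqref{eq:shap} asserts $\Lambda_{s_0}\cap E_{-\ii}(a(s_0,k))=\{0\}$ at every $(s_0,k)$ with $|k|=1$, and a dimension count then yields the splitting $\calS_{s_0}=\Lambda_{s_0}\oplus E_{-\ii}(a(s_0,k))$.

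This pointwise splitting depends smoothly on $(s,k)$ and is homogeneous of degree zero in $k$, so it arises as the principal symbol of a classical order-zero pseudo-differential operator on $\partial M$: concretely, the composition $P_\Lambda\circ C^+$, where $C^+$ is the Calder\'on projector for $D$ in the chosen collar (whose principal symbol at $(s,k)$ is the projection onto $E_{-\ii}(a(s,k))$ along $E_{+\ii}(a(s,k))$) and $P_\Lambda$ is the fiberwise projection onto $\Lambda$ along $\Lambda^\perp$. The Shapiro-Lopatinski hypothesis makes this composition elliptic; inverting it modulo smoothing via the standard pseudo-differential calculus, and composing with the extension operator of Lemma~\ref{lem_prel_D}(ii), produces a parametrix for the $\Lambda$-boundary value problem and hence the first-order elliptic estimate. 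Higher-order estimates follow by induction on $k$, using \eqref{eq:def_At} to trade one normal derivative for $D\psi$ plus tangential derivatives controlled by the order-$(k-1)$ bound. The main obstacle I anticipate is the parametrix construction itself: upgrading pointwise invertibility of a principal symbol to an operator-level inverse modulo smoothing is the technical heart of the theory, and it is there that the \emph{sharp} form of \eqref{eq:shap} is essential rather than any weaker transversality. For the full pseudo-differential argument I would defer to \cite[Ch.~XX]{HormanderIII} or the Dirac-specific treatment in \cite{bookBooss}.
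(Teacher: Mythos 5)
The paper itself does not prove this proposition: it is quoted from the literature (\cite[Remark~2.8]{BLZ}, \cite[Sec.~20.1]{HormanderIII}, and \cite{bookBooss}), so there is no in-paper argument to compare against, and your sketch is essentially an outline of the route those references take. The valuable part of your write-up is the model-problem analysis, and it is correct, including the sign bookkeeping: with the paper's conventions $a(s,k)=c_{\nu}c_k$, the operator $\ii a(s,k)$ is Hermitian with $(\ii a)^2=|k|^2$, and the $L^2([0,\infty))$-decaying solutions of $(\partial_t+\ii a(s_0,k))\hat\psi=0$ have initial data exactly in $E_{-\ii}(a(s_0,k))$, which is what makes \eqref{eq:shap} the right transversality condition. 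Two caveats, though. First, your dimension count giving $\calS_{s_0}=\Lambda_{s_0}\oplus E_{-\ii}(a(s_0,k))$ tacitly assumes $\rank\Lambda=N/2$, which the proposition does not assume; for the regularity conclusion only the trivial-intersection (injectivity) half is needed, so this should be dropped rather than used.

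Second, and more substantively, the step ``the Shapiro--Lopatinski hypothesis makes $P_\Lambda\circ C^+$ elliptic; invert it modulo smoothing'' does not parse as stated: the principal symbol of $P_\Lambda\circ C^+$ at $(s,k)$ is a composition of two projections of rank at most $N/2$, hence is never invertible on the full fiber $\calS_s$, so this operator is not elliptic as a pseudodifferential operator on $\calS|_{\partial M}$ and admits no inverse modulo smoothing. The standard repair (Seeley, \cite{bookBooss}) is to regard the boundary operator as acting between the ranges of the two projections --- one needs the symbol of $P_\Lambda$ (or of $P_{\Lambda^\perp}$) restricted to $\mathrm{ran}\,\sigma(C^+)=E_{-\ii}(a(s,k))$ to be injective, and the parametrix is built for that restricted operator, or equivalently for an elliptic combination such as $P_{\Lambda^\perp}C^+ +(\Id-C^+)$. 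Alternatively one can bypass the Calder\'on projector altogether, as in \cite{BB12,BLZ}: compare $\Lambda$ with the spectral projection $P_{(\lambda,\infty)}(A_0)$, whose principal symbol is the projection onto $E_{-\ii}(a(s,k))$, and show that under \eqref{eq:shap} the $H^{-1/2}$-component of boundary data in $\overline{\calC(\Lambda)}^{\check H}$ is controlled by the $H^{1/2}$-component, which by Lemma~\ref{lem_bound} and the trace/extension machinery of Section~\ref{sec:Prelim} yields $\dom D_\Lambda\subset W^{1,2}$, with higher regularity by the inductive argument you indicate. Since you defer the technical core to the same references the paper cites, the proposal is acceptable as a sketch, but the ellipticity claim for $P_\Lambda\circ C^+$ should be reformulated as above.
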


In our setting, the Shapiro-Lopatinski criterion is also a necessary condition for regularity. 
Since we were unable to locate a proof of this 
result in the literature, we provide it below. 

\begin{proposition}\label{prop_revSL}
    If there is $ s_0 \in \partial M$ such that the Shapiro-Lopatinski condition fails at $s_0$, in the sense that there is $\xi_0 \in T^*_{s_0}\partial M$ and $ v_0 \in \calS_{s_0}$ such that 
    $$
\|\xi_0\| = \|v_0\|= 1 \text{ and } v_0 \in  \Lambda_{s_0} \cap E_{-\ii}(a(s_0,\xi_0)),  
    $$
    then $D_\Lambda$ is not regular.
\end{proposition}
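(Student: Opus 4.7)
The plan is to produce a family of smooth spinors $(\psi_\tau)_{\tau \geq 1}$ in $\dom D_\Lambda$ whose ratio $\|\psi_\tau\|_{W^{1,2}(\calS)}/(\|\psi_\tau\|_{L^2} + \|D\psi_\tau\|_{L^2})$ diverges as $\tau \to \infty$. By the closed graph theorem applied to the closed operator $D_\Lambda$, such a family is incompatible with the inclusion $\dom D_\Lambda \subset W^{1,2}(\calS)$: if that inclusion held, the identity would be a closed, everywhere defined linear map between the Banach spaces $(\dom D_\Lambda, \|\cdot\|_D)$ and $W^{1,2}(\calS)$, hence continuous, contradicting the behaviour of the quasimodes. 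So the construction will rule out regularity.

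I would choose boundary coordinates $(s,t) \in \R^{d-1} \times [0,\infty)$ on a tubular neighborhood of $s_0$, placing $s_0$ at the origin, and work with the representation $D = -\ii c_\nu(\partial_t + A_t)$. Using that $\Lambda$ is a smooth subbundle, I first extend $v_0$ to a smooth section $v$ of $\Lambda$ near $s_0$ with $v(0)=v_0$. With cut-offs $\chi_1 \in \calC_c^\infty(\R^{d-1})$ and $\chi_2 \in \calC_c^\infty([0,\infty))$ equal to $1$ near $0$ and supported in the chart, I set
\[
\psi_\tau(s,t) \define \tau^{(d+1)/4}\, \chi_1(\sqrt{\tau}\, s)\, \chi_2(t)\, e^{\ii \tau \xi_0 \cdot s - \tau t}\, v(s).
\]
By construction $\psi_\tau$ is a smooth compactly supported spinor whose trace $\chi_1(\sqrt{\tau}\, s)\, e^{\ii \tau \xi_0 \cdot s}\, v(s)$ lies pointwise in $\Lambda$, so $\psi_\tau \in \dom D_\Lambda$. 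The change of variables $u = \sqrt{\tau}\, s$ together with $\int_0^\infty e^{-2\tau t}\di t = (2\tau)^{-1}$ will give $\|\psi_\tau\|_{L^2}^2 \asymp 1$, while the pointwise factors $\ii \tau \xi_0$ and $-\tau$ arising from differentiating the phase and the exponential yield $\|\psi_\tau\|_{W^{1,2}}^2 \asymp \tau^2$.

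The technical core is the estimate $\|D\psi_\tau\|_{L^2}^2 = O(\tau)$, where the eigenvalue hypothesis enters. Using $A_t(e^{\ii \tau \xi_0 \cdot s} w) = e^{\ii \tau \xi_0 \cdot s}(A_t w + \ii \tau a_t(s,\xi_0) w)$, with $a_t$ the principal symbol of $A_t$, the order-$\tau$ contributions in $(\partial_t + A_t)\psi_\tau$ combine as $\tau \chi_1 \chi_2 (\ii a_t(s,\xi_0)-\Id)\, v(s)$ times $\tau^{(d+1)/4}e^{\ii \tau \xi_0 \cdot s - \tau t}$. The assumption $a_0(0,\xi_0) v_0 = -\ii v_0$ forces $(\ii a_0(0, \xi_0) - \Id) v_0 = 0$, so Taylor expanding at $(s,t) = (0,0)$ gives $(\ii a_t(s,\xi_0)-\Id) v(s) = O(|s|+t)$. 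On the support of $\psi_\tau$ one has $|s| \lesssim \tau^{-1/2}$ and, effectively, $t \lesssim \tau^{-1}$ thanks to the exponential weight; using $\int_0^\infty t^2 e^{-2\tau t}\di t = O(\tau^{-3})$ and adding the sub-leading $O(\sqrt{\tau})$ contribution from $\partial_{s_j}\chi_1(\sqrt{\tau}\,s) = \sqrt{\tau}\,(\partial_j \chi_1)(\sqrt{\tau}\,s)$, one obtains $\|D\psi_\tau\|_{L^2}^2 = O(\tau)$. Combined with the bounds on $\|\psi_\tau\|_{L^2}$ and $\|\psi_\tau\|_{W^{1,2}}$ this gives $\|\psi_\tau\|_{W^{1,2}}/(\|\psi_\tau\|_{L^2} + \|D\psi_\tau\|_{L^2}) \gtrsim \sqrt{\tau}$, the desired contradiction. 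The hard part will be the bookkeeping of cross terms in this final estimate: I have to verify that the interplay between the scale $\tau^{-1/2}$ in $s$ (forced by the Taylor expansion of the symbol) and the scale $\tau^{-1}$ in $t$ (imposed by the exponential weight) prevents the cut-off derivatives from contributing more than $O(\tau)$ to $\|D\psi_\tau\|_{L^2}^2$.
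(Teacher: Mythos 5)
Your proposal is correct and follows essentially the same route as the paper: a quasimode family concentrating at $s_0$ at scale $\tau^{-1/2}$ tangentially and $\tau^{-1}$ normally, with the phase $e^{\ii\tau\xi_0\cdot s-\tau t}$ and the eigenvalue hypothesis killing the order-$\tau$ term in $(\partial_t+A_t)\psi_\tau$, so that the graph norm grows strictly slower than the $W^{1,2}$-norm. The only differences are cosmetic: your normalization makes the graph norm $O(\sqrt{\tau})$ rather than $O(1)$ (which still suffices), and you absorb the $t$-dependence of $A_t$ by Taylor expanding its symbol, where the paper instead invokes the remainder estimate for $R_t$ from \cite{BB12}.
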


\begin{proof}
    We will show that on $\dom\, D_\Lambda$ the graph norm of $D_\Lambda$ is not equivalent to the $W^{1,2}$-norm. For that we will construct a sequence $\psi_n\in \dom\, D_\Lambda$ with bounded graph norm but unbounded $W^{1,2}$-norm. The support of $\psi_n$ will be a neighbourhood of $s_0\in U$ and will shrink to $s_0$ as $n\to \infty$. Throughout this proof, $C$ will denote a positive number independent of $n$, whose value can change from line to line.

For simplicity we first treat the Euclidean case, with a trivial bundle and a flat boundary in a neighborhood of $s_0$. We also assume that $\Lambda_s$ is independent of $s$ in this neighborhood. In the second part of the proof, we treat the general case. The key idea is that the flat case is a good approximation since the supports of the test functions shrink to the point $s_0$ on $\partial M$.

\noindent\textbf{Step 1 -- Euclidean case.} Let $M=\R_+^d=\{(s,t)\ |\ s\in \R^{d-1}, t\geq 0\}$. Let $\chi\colon \R_+ \to \R_+$ be a smooth decreasing cut-off function with support in $[0,1]$ and values in $[0,1]$. For each $n \in \N$ define a radius $r_n=n^{-1/2}$ and the cut-off function $\chi_n(s)= r_n^{-(d-1)/2}\chi( |s-s_0|/ r_n  ) $. The normalization is chosen such that 
\begin{equation} \label{eq:est_chi_n}
    \| \chi_n \|_{L^2(\R^{d-1} )} = \| \chi_1 \|_{L^2(\R^{d-1} )}, \text{ and } \| \nabla \chi_n \|_{L^2(\R^{d-1} )} \le C r_n^{-1}.
\end{equation}

We choose spinors in $\dom\, D_\Lambda$ by 
    \begin{equation}\label{eq:psin_eucl}
    \psi_n(s,t) =  \chi_n(s)   \exp(n\ii \langle s, {\xi_0}\rangle-nt) \chi(t) v_0.
    \end{equation}
    We start by estimating their $L^2$-norm using \eqref{eq:est_chi_n}:
    \begin{align}\label{eq:est_psin}
\|\psi_n\|^2_{L^2}  \le \int_{\R^{d-1}}  \chi_n^2(x) \int_0^1 \exp(-2 n t) \di t \di x \le C/n. 
    \end{align}

For the Dirac operator, we use $\nu = - e_d$ and write
$$
D = -\ii c_{\nu} \left(\partial_t + A \right) \quad 
\text{with}
\quad
A  (f \psi)(x) = a(x, \nabla_{\R^{d-1}} f) \psi(x) + f(x) (A \psi)(x).
$$
Here $a(x, \xi)$ is exactly the principal symbol used in 
the Shapiro-Lopatinski condition. (In this case, it does not depend on $x$ but we keep it in the notation for consistency.)
We can now compute
\begin{align*}
    &(\partial_t +A)\psi_n (s,t) \\
    &\quad = n\bigl(-1 + \ii a(s_0, \xi_0) \bigr) \psi_n +\left(\chi'(t) \chi_n(s) + \chi(t) a(s_0, \nabla \chi_n) \right)  e^{n\ii \langle s, {\xi_0}\rangle-nt}v_0,
\end{align*}
where first term vanishes since $a(s_0, \xi_0) v_0 = -\ii v_0 $.
Hence, 
\begin{align*}
      &|(\partial_t +A)\psi_n (s,t)| = e^{-n t}\left|\chi'(t) \chi_n(s) + \chi(t) a(s_0, \nabla \chi_n) \right|.
\end{align*}
Integrating in a similar way as in \eqref{eq:est_psin} and using the bounds in \eqref{eq:est_chi_n} for the $\R^{d-1}$ integrals and $|a(x, \xi)|=|\xi|$ gives
\begin{equation}\label{eq:est_Dpsi_n_eucl}
    \norm{D\psi_n}_{L^2}^2 \le C n^{-1} (1+r_n^{-2}). 
\end{equation}
Recalling that $r_n = n^{-1/2}$, we conclude that
\begin{equation} \label{eq:bound-on-graph-norm}
     \norm{D \psi_n}_{L^2} + \norm{\psi_n}_{L^2} \le C.
\end{equation}

On the other hand, for the $W^{1,2}$-norm, we compute
\begin{align*}
\partial_t \psi_n (s,t)= - n \psi_n(s,t) + \chi_n(s) \chi'(t) e^{n \ii\langle s_0, \xi_0\rangle - n t} v_0.
\end{align*}
The second term decays just as the $L^2$-norm of $\psi_n$, so we obtain
\begin{align} \label{eq:bound-on-H^1-norm}
     \norm{ \psi_n}_{H^1}^2 \ge  \norm{\partial_t \psi_n}_{L^2}^2 \ge C (n -n^{-1}),
\end{align}
which is unbounded. In view of \eqref{eq:bound-on-graph-norm}, the graph norms remain bounded as $n \to \infty$ and hence, the graph norm of the operator does not control the $H^1$-norm. 

\medskip
\noindent\textbf{Step 2 -- general case.}
We will rely heavily on the constructions from \cite[Section 2--4]{BB12}. By using suitable diffeomorphisms, we may identify a tubular neighborhood $U$ of $\partial M$ in $\R$ with the cylinder $\partial M \times [0, t_0)$ for some sufficiently small $t_0$. Sections of $\calS$ with support in $U$ can be identified with $t$-dependent sections of $\calS|_{\partial M}$, $t$ being the distance to $\partial M$, and with this identification, we have
\begin{equation} \label{eq:iso_L_2}
    \int_{M} |\psi|^2 \romd x = \int_0^{t_0}\int_{\partial M \times \{t\}} |\psi|^2(s,t) \romd \omega_{\partial M\times \{t\}} \romd t.
\end{equation}
Here $\romd \omega_{\partial M\times \{t\}}$ is the  volume element induced from the metric $g$ on $M$.

Since $\Lambda_s$ varies smoothly with $s \in \partial M$ there exists $v \in \calC(\calS|_{\partial M})$ such that $v(s_0)= v_0$, $v(s) \in \Lambda_s$, and $\norm{v(s)}_{\calS_s}\le  1$. Without loss of generality, we assume that the support of $v$ is included in $B_{\partial M}(s_0, 1) \subset \partial M$. 
We can then define $\psi \in \dom\, D_\Lambda$ by
 $$
    \psi(s,t) =  \chi(t/t_0) v(s)   . 
    $$
Next, we pick any $\zeta \in C^\infty (\partial M, \R)$ such that $\romd \zeta|_{s_0} = \xi_0$. We can then define 
$$
f_n (s,t) = \exp (\ii n \zeta(s) - n t) \chi_n(s) ,  $$
where $\chi_n$ is, as before, a function with support in $B_{\partial M}(s, r_n=n^{-1/2})$ normalized such that
$$
\norm{\chi_n}_{L^2(\partial M)} = 1, \quad \norm{\romd \chi_n}_{L^2(\partial M)} \le C/r_n.
$$
For $t\in [0,t_0]$ the $L^2$-norm for the induced norms on $\partial M\cong \partial M\times \{0\}$ and $\partial M\times \{t\}$ are equivalent. Thus, in the following we just work with $L^2(\partial M)$ in the estimates.

With these ingredients, we define
$$
\psi_n = f_n \psi.
$$
By \eqref{eq:iso_L_2} we have 
\begin{equation} \label{eq:est_L2_psi_n}
    \norm{\psi_n}_{L^2(M)}^2 \le C \int_0^1 e^{-2 n t} \norm{\chi_n \psi(\cdot, t)}_{L^2(\partial M)}^2 \di t \le C n^{-1}.
\end{equation}

By \cite[Lemma 4.1]{BB12}, in this representation the Dirac operator takes the form 
$$
D= -\ii c_\nu \left( \partial_t + \tilde{A} + R_t \right)
$$
where $\tilde A$ is a first order differential operator on $\partial M$ and $R_t$ satisfies the estimate 
\begin{equation}\label{eq:error_term_bar}
    \norm{R_t \Psi}_{L^2(\partial M)} \le C\left( |t|  \norm{\tilde A \Psi}_{L^2(\partial M)}  +  \norm{ \Psi}_{L^2(\partial M)} \right).
\end{equation}
The principal symbol of $\tilde A$ equals $a_0$.
We compute 
\begin{align}\nonumber
    \tilde A \psi_n &= a_0(\romd f_n) \psi + f_n \tilde A \psi \\
    &= \ii n \, a_0(\romd \zeta)  \psi_n + e^{\ii n \zeta - n t}a_0(\romd \chi_n) \psi+ f_n \tilde A \psi\label{eq:tildeA}
\end{align}
and
\begin{align*}
    \partial_t \psi_n &= - n \psi_n + f_n \partial_t \psi.
\end{align*}
Hence, 
\begin{equation}\label{eq:hence}
    |(\partial_t + \tilde A) \psi_n | \le n|(- 1 + \ii a_0(\romd\zeta)) \psi_n| + C e^{-n t } \left(|\romd \chi_n| + \chi_n \norm{\nabla \psi}_\infty \right).
\end{equation}
Using that all the functions involved are smooth and that  
$$a_0(\romd \zeta) \psi |_{(s_0,t)} = a_0(s_0, \xi_0) \psi(s_0, t) =  -\ii \psi(s_0, t),
$$
we conclude that the first summand on the right hand-side in (\ref{eq:hence})
vanishes at $s_0$ and therefore it can be bounded by $C \operatorname{dist}(s,s_0) \le C r_n$. Hence, we can estimate
\begin{equation}
|\left( \partial_t + \tilde{A}) \right) \psi_n| \le C e^{-n t} \left( n r_n  \chi_n + |\romd \chi_n| +  \norm{\nabla \psi}_\infty \chi_n \right).
\end{equation}
Estimating the $L^2$-norm in a similar way as in \eqref{eq:est_L2_psi_n}, this gives the bound
\begin{equation}
    \label{eq:est_main_Dpsi_n}
    \norm{\left( \partial_t + \widetilde{A}) \right) \psi_n}^2\le C (n r_n^2 +(n r_n)^{-1} + n^{-1}).
\end{equation}
For the contribution of $R_t$, we use that by \eqref{eq:tildeA}
\begin{equation} 
    \norm{\tilde{A} \psi_n( \cdot, t)}_{L^2(\partial M)} \le C e^{-n t}(n + r_n^{-1} + 1). 
\end{equation}
Then, applying \eqref{eq:error_term_bar} and integrating in $t$ gives
\begin{equation*}
    \norm{R_t \psi_n}_{L^2(M)}^2 \le C (n^{-3}(n + r_n^{-1} + 1)^2 + n^{-1} ) \le C n^{-1}.
\end{equation*}
Summing with \eqref{eq:est_main_Dpsi_n} and recalling that $r_n = n^{-1/2}$ finally shows that
\begin{equation}
    \norm{D \psi_n}_{L^2(M)}+ \norm{\psi_n}_{L^2(M)} \le C. 
\end{equation}

For the $H^1$-norm, we use as before 
\begin{equation*}
    |\nabla \psi_n| \ge |\partial_t \psi_n | = |-n \psi_n + f_n \partial_t \psi | \ge n |\psi_n|- f_n \norm{\nabla \psi}_\infty,
\end{equation*}
so that
\begin{equation}
    \norm{\psi_n }_{W^{1,2}}^2 \ge \norm{\partial_t \psi_n}_{L^2(M)}^2 \ge  C(n - n^{-1}).
\end{equation}
\end{proof}

As an application, consider
 now a smooth section $t$ of the unit tangent bundle of $\partial M$. By Lemma~\ref{lem:tangent_Lambda} $\Lambda=E_+(c_{t})$ is a self-adjoint local smooth boundary condition. For this example  we see: 

\begin{corollary}\label{lem:SL_tang} The boundary condition $\Lambda=E_+(c_{t})$ is regular if and only if the boundary has dimension $1$.  \end{corollary}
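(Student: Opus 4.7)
The corollary is a direct application of the Shapiro--Lopatinski criterion (Propositions~\ref{prop:shap-lop} and~\ref{prop_revSL}), so the first step is to compute the principal symbol of $A_0$. Comparing $D = -\ii c_\nu(\partial_t + A_t)$ with the expression $D = -\ii c_{e_j}\nabla_{e_j}$ near the boundary gives $A_0 = c_\nu\sum_{j<d} c_{e_j}\nabla_{e_j}$ modulo zeroth--order terms, so the commutator definition of the principal symbol yields $a(s,k) = c_\nu c_k$ for $k\in T^*_s\partial M$. The anticommutation $c_\nu c_k = -c_k c_\nu$, together with $c_\nu^2 = 1$ and $c_k^2 = |k|^2$, implies $a(s,k)^2 = -|k|^2 \Id$, so $a(s,k)$ has eigenvalues $\pm \ii$ when $|k|=1$. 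For $\Lambda_s = E_+(c_{t(s)})$ the Shapiro--Lopatinski criterion therefore reads: for every $s\in\partial M$ and every unit $k\in T^*_s\partial M$, the system $c_{t(s)}v=v$, $c_\nu c_k v = -\ii v$ admits only the trivial solution.

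\textbf{Case $\dim\partial M = 1$.} The unit covectors at $s$ are exactly $\pm t(s)^*$, so $c_\nu c_k = \pm c_\nu c_{t(s)}$. Combined with $c_{t(s)}v = v$, the eigenvalue condition reduces to $c_\nu v = \mp\ii v$. Since $c_\nu$ is Hermitian with real eigenvalues $\pm 1$, this forces $v=0$. Hence Shapiro--Lopatinski holds at every boundary point and Proposition~\ref{prop:shap-lop} gives (in fact strong) regularity.

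\textbf{Case $\dim\partial M \geq 2$.} Fix any $s\in\partial M$ and pick a unit covector $k_0 \in T^*_s\partial M$ orthogonal to $t(s)$. Then $c_\nu c_{k_0}$ commutes with $c_{t(s)}$, because both $c_\nu$ and $c_{k_0}$ anticommute with $c_{t(s)}$. Thus $c_\nu c_{k_0}$ preserves $\Lambda_s = E_+(c_{t(s)})$, and since $(c_\nu c_{k_0})^2 = -\Id$ it splits $\Lambda_s$ into its $+\ii$-- and $-\ii$--eigenspaces. Because $\dim_\C \Lambda_s = n \geq 1$, at least one of these two summands is nonzero. If $E_{-\ii}(c_\nu c_{k_0}) \cap \Lambda_s \neq \{0\}$, Shapiro--Lopatinski already fails at $(s,k_0)$; otherwise $\Lambda_s \subseteq E_{+\ii}(c_\nu c_{k_0}) = E_{-\ii}(c_\nu c_{-k_0})$, so it fails at $(s,-k_0)$. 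Either way, Proposition~\ref{prop_revSL} rules out regularity. The one subtle point—the only real obstacle in writing the proof down—is precisely this sign flip: one cannot a priori assume that both eigenvalues $\pm\ii$ of $c_\nu c_{k_0}$ actually occur on $\Lambda_s$, but passing from $k_0$ to $-k_0$ turns either alternative into a failure of Shapiro--Lopatinski.
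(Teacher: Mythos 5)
Your proof is correct and follows essentially the same route as the paper: compute $a(s,k)=c_\nu c_k$, use that for $k\perp t(s)$ the operator $c_\nu c_k$ commutes with $c_{t(s)}$ and hence preserves $\Lambda_s$, then flip $k$ to $-k$ if necessary to exhibit a nonzero vector in $E_{-\ii}(a(s,\pm k))\cap\Lambda_s$, and in the one-dimensional case verify Shapiro--Lopatinski directly from $c_\nu^2=\Id$. Your careful handling of the sign flip is exactly the point the paper's simultaneous-diagonalization argument also relies on.
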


\begin{proof}Recall that $a(s,\xi)= c_{\nu(s)}c_\xi$. We assume first that the dimension of the boundary is at least two: Let $s\in \partial M$.  Take $\xi  \perp t(s)$ with $|\xi|=1$, then $c_{t(s)}(c_{\nu(s)} c_\xi)=(c_{\nu(s)} c_\xi)c_{t(s)} $. As commuting operators  $c_{t(s)}$ and $c_{\nu(s)} c_\xi$ simultaneously diagonalize.  
	Thus, there is a $v\in E_+(c_{t(s)})\setminus \{0\}$ which is a $+\ii$ or $-\ii$ eigenvector of $c_{\nu(s)} c_\xi$. Hence, $v$ is a $-\ii$  eigenvector of $c_{\nu(s)} c_\xi$ or of $c_{\nu(s)} c_{-\xi}$; i.e., at least one of the sets $E_{-\ii} (a(s,\pm \xi))\cap \Lambda_s$ is not equal to $\{0\}$ and the Shapiro-Lopatinski condition is not fulfilled.  Thus the boundary condition is not regular.\medskip 
	
	Now assume that the boundary is one dimensional: Let $s\in \partial M$ and $k\in T_s(\partial M)$, $|k|=1$. Then  $k=\pm t(s)$ and $a(s,k) = \pm c_{\nu(s)}c_{t(s)}$.  Let $v\in E_+(c_{t(s)})\cap E_{-\ii}(\pm c_{\nu(s)}c_{t(s)})$. Then $-\ii v=\mp c_{\nu(s)}c_{t(s)}v=\mp c_{\nu(s)}v$. But $c_{\nu(s)}^2=\Id$, which implies $v=0$. Thus, $\Lambda=E_+(c_t)$ is Shapiro-Lopatinski and hence regular.
\end{proof}

\begin{remark}\label{rem_noncomp_reg}
For complete manifolds with compact boundary the natural notion of regularity is a local one. To distinguish this from our definition, we call it `locally regular' here, i.e.;\medskip 

The boundary condition is called \emph{locally regular} if $\dom\, D_\Lambda\subset W^{1,2}_{\mathrm{loc}}(\calS)$. The boundary condition is \emph{locally strongly regular} or \emph{locally $\infty$-regular}, see  \cite[Def.~2.3]{BLZ} or \cite[Def.~7.15]{BB12}, if $\{\phi \in \dom\, D_\Lambda\ |\ D^\ell \phi\in L^2_{\mathrm{loc}}(\calS)\ \forall \ell \leq k\}\subset W^{k,2}_{\mathrm{loc}}(\calS)$ for all $k\in \mathbb N_{\geq 0}$.\medskip 

For compact manifolds this definition is equivalent to Definition~\ref{def:regular}.\medskip  

For complete manifolds with compact boundary Propositions~\ref{prop:shap-lop} and~\ref{prop_revSL} remain valid if \emph{(strongly) regular} is replaced by \emph{locally (strongly) regular}. For the second proposition this is true since the support of the sequences constructed to disprove regularity is shrinking to a point and gives hence an argument against local regularity.
\end{remark}

\section{In the presence of a chirality operator}

\label{sec_chiral} 
In this section, we specify to the case where the Clifford bundle $\calS\to M$ admits a chirality operator $\beta\colon \calS\to \calS$, as defined in \S 1 (page~\pageref{def:chiral}). In this situation we can classify all self-adjoint and strongly regular boundary conditions, as will be done in this section.\medskip 

Let $\calS^\pm$ be the subbundles of $\calS$ to the eigenvalues $\pm 1$ of $\beta$. 
Since Clifford multiplication anticommutes with $\beta$ by assumption,
there is a linear map 
$$C\colon T^*M \to \mathrm{Hom} (\calS^+,\calS^-)  \text{ such that } c_\xi = \begin{pmatrix} 0 & C_\xi^*\\ C_\xi &0\end{pmatrix}.$$ 
The anticommutation relation of the Clifford multiplication translates into 
\begin{equation} \label{eq:chiral-anticommutation}
	C_\xi^* C_\eta + C_\eta^* C_\xi= 2 g(\xi, \eta) \Id_{\calS^+} \quad \text{ and } \quad C_\xi C_\eta^* + C_\eta C_\xi^*=2 g(\xi, \eta) \Id_{\calS^-} .
\end{equation}

\subsection{Self-adjointess in the presence of a chirality operator}

We will denote by $U(\calS^+|_{\partial M})$ the subbundle of the endomorphism bundle 
$\text{End}(\calS^+|_{\partial M})$ where each fibre consists of the unitary maps. Analogously, $U(E_+(c_\nu), E_-(c_\nu))$ is the corresponding subbundle of $\mathrm{Hom} (E_+(c_\nu), E_-(c_\nu))$

\begin{proposition} \label{prop:BC_split}
	Let the Clifford bundle over $M$ admit a chirality operator as above. 
	Then, a local smooth boundary condition is self-adjoint if and only if there exists  a section $\widetilde{f}$ of $U(\calS^+|_{\partial M})$ 
	such that for each $s\in\partial M$
	\begin{equation} \label{eq:BC-split}
		\Lambda_s =  \left\{ \left.\begin{pmatrix}
			(\Id + \widetilde{f}(s)) w\\ C_{\nu(s)}(\Id-\widetilde{f}(s))  w 
		\end{pmatrix} \right| w \in \calS_s^+\cong \mathbb C^{N/2}\right\}.
	\end{equation}
\end{proposition}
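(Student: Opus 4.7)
The plan is to deduce Proposition~\ref{prop:BC_split} directly from Theorem~\ref{thm:local} by constructing explicit smooth bundle isomorphisms between the eigenbundles $E_\pm(c_\nu)$ and $\calS^+|_{\partial M}$, and then evaluating the abstract graph construction of that theorem.

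First I would identify $E_\pm(c_\nu)$ inside the chirality decomposition $\calS=\calS^+\oplus\calS^-$. Specializing~(\ref{eq:chiral-anticommutation}) to $\xi=\eta=\nu$ and using $|\nu|=1$ gives $C_{\nu(s)}^*C_{\nu(s)}=\Id_{\calS^+_s}$, so $C_{\nu(s)}\colon \calS^+_s\to\calS^-_s$ is itself fibrewise unitary. A direct computation with the off-diagonal block form of $c_\nu$ then shows
$$E_\pm(c_{\nu(s)})=\left\{\begin{pmatrix} w\\ \pm C_{\nu(s)}w\end{pmatrix}\;\middle|\; w\in\calS^+_s\right\},$$
and the formulas
$$\iota_\pm\colon \calS^+|_{\partial M}\longrightarrow E_\pm(c_\nu),\qquad w\longmapsto \tfrac{1}{\sqrt{2}}\begin{pmatrix} w\\ \pm C_\nu w\end{pmatrix}$$
define smooth, fibrewise unitary bundle isomorphisms. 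Smoothness is automatic since $\nu$ and $\beta$ depend smoothly on $s\in\partial M$, and fibrewise unitarity uses $\calS^+\perp\calS^-$ (as $\beta$ is a unitary involution) together with $C_\nu^*C_\nu=\Id$.

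Next, via $F\define \iota_-\circ \widetilde{f}\circ \iota_+^{-1}$ I would transport the bijection of Theorem~\ref{thm:local}(iii) to a smoothness-preserving bijection between smooth sections $\widetilde{f}$ of $U(\calS^+|_{\partial M})$ and smooth sections $F$ of $U(E_+(c_\nu),E_-(c_\nu))$. For any $w\in\calS^+_s$ a direct computation gives
$$\iota_+(w)+F(\iota_+(w))=\iota_+(w)+\iota_-(\widetilde{f}(s)w)= \tfrac{1}{\sqrt{2}}\begin{pmatrix}(\Id+\widetilde{f}(s))w\\ C_{\nu(s)}(\Id-\widetilde{f}(s))w\end{pmatrix}.$$
Since $w\mapsto\sqrt{2}\,w$ is a bijection of $\calS^+_s$, the overall scalar factor $1/\sqrt{2}$ is absorbed into the parametrization, so the graph $\{\phi+F(\phi)\mid \phi\in E_+(c_\nu)\}$ coincides with the right-hand side of~(\ref{eq:BC-split}). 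Combining this with Theorem~\ref{thm:local} yields the claimed correspondence in both directions.

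I do not anticipate a serious obstacle; the whole argument reduces to the simple but crucial observation that in the presence of a chirality operator, $C_\nu$ itself is unitary, which provides the explicit smooth identifications $E_\pm(c_\nu)\cong\calS^+|_{\partial M}$ needed to pass from the abstract Theorem~\ref{thm:local} to the explicit formula~(\ref{eq:BC-split}). The only point requiring a touch of care is keeping track of the normalization factor $1/\sqrt{2}$ so that the parametrization matches the stated form.
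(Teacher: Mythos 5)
Your proposal is correct and follows essentially the same route as the paper: both identify $E_\pm(c_\nu)$ as $\{(w,\pm C_\nu w)\mid w\in\calS^+_s\}$ using $C_\nu^*C_\nu=\Id$, invoke Theorem~\ref{thm:local}, and translate unitaries $F\colon E_+(c_\nu)\to E_-(c_\nu)$ into unitaries $\widetilde f$ of $\calS^+|_{\partial M}$; your explicit normalized isomorphisms $\iota_\pm$ are just a slightly tidier packaging of the paper's identification via the orthogonal splitting $\calS^+\oplus\calS^-$.
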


\begin{proof}
	Since $C_{\nu(s)}^* C_{\nu(s)} = \Id$, the eigenspaces $E_\pm(c_\nu(s))$ of $c_\nu(s)$ associated to the eigenvalues $\pm 1$ are given by 
	$$
	E_\pm (c_\nu(s)) = \left\{ \begin{pmatrix}
		v \\ \pm C_{\nu(s)} v
	\end{pmatrix}   \Big|\ v \in \calS^+_s\cong \bbC^{N/2} \right\}.
	$$
	By Theorem~\ref{thm:local}, a local smooth boundary condition $\Lambda$ is self-adjoint if and only if for each $s \in \partial M$, $\Lambda_s$ is the graph of $F(s)$ for some section $F$ of $U(E_+(c_\nu), E_-(c_\nu))$
	
	For such an $F$, define  $\widetilde{f}\in \mathrm{End}\, ( \calS^+|_{\partial M})$ by $P|_{\calS^+} F P|_{\calS^+}$ , with $P_{\calS^+}$ the orthogonal projection on $\calS^+$. 
	In matrix notation, we have
	\begin{equation} \label{eq_f_tildef} F(s)\binom{v}{C_{\nu(s)}v} = \binom{\widetilde{f}(s)(v)}{-C_{\nu(s)}\widetilde{f}(s)(v) }.\end{equation}
	
	Moreover, the splitting $\calS^+\oplus \calS^-$ is orthogonal since $\beta$ is unitary. 
	Thus, $F(s)$ is unitary if and only if $\widetilde{f}(s)\colon \calS_s^+\to \calS_s^+$ is unitary. Hence, every $\widetilde{f}$ is a smooth section in $U(\calS^+|_{\partial M})$ and gives via \eqref{eq_f_tildef} rise to a unique smooth section $F$ in $U(E_+(c_\nu), E_-(c_\nu))$.\medskip  
	
	For the converse, one can check directly that any $\widetilde{f} \in \calC (U(\calS^+|_{\partial M}))$ defines, by formula\eqref{eq:BC-split} a smooth subbundle $\Lambda$ of rank $N/2$ that satisfies the symmetry condition \eqref{eq:sym_BC}.
\end{proof}

\begin{example} \label{ex:3d_MIT}
In the case $M \subset \R^3$, with $\calS = M \times \C^4$ and the standard Clifford multiplication, $C_\xi = C_\xi^* = \sum_j \sigma_j \xi_j$, where $\sigma_j$ are the Pauli matrices. 
A chirality operator is obtained by taking $\beta = \sigma_3 \otimes \Id$, such that the splitting in $\calS^\pm$ is just the direct sum $\C^4= \C^2 \oplus \C^2$.
In this framework, the \textsc{MIT} boundary condition is obtained by taking $\widetilde{f}(s) = \ii \Id_{2 \times 2}$. By writing $(1+\ii )w = v_1$, $(1-\ii) C_\nu w = v_2$, the boundary condition is the following simple relation between upper and lower components:
$$
v_2 = \frac{1-\ii}{1+\ii} C_\nu v_1 = -\ii C_\nu v_1.
$$
 
\end{example}

\begin{example} \label{ex:generalized_infinite_mass}
    The family of generalized \textsc{mit}-bag conditions studied in \cite{behrndt2020self,rabinovich2021-BVP, arrizabalaga2023eigenvalue} is obtained in a similar way by taking $\widetilde{f}= e^{i\theta}\Id$. To connect to the notations of  \cite{behrndt2020self}, let $\omega = \tan \theta/2$. This family of boundary conditions can be defined in any dimension as soon as the bundle admits a chirality operator $\beta$. 

    In our setting, $\widetilde f$ needs to be a smooth function on the boundary and any such $\widetilde f$ gives rise to a self-adjoint boundary condition.
    In contrast, in \cite[Thm. 5.9]{behrndt2020self} it is proven that if the absolute value of $\omega$ is nowhere one, corresponding to $\theta \notin \mathbb Z \pi$, the corresponding Dirac operator is self-adjoint. This illustrates the advantage of defining the operator with a closed domain instead of defining the domain as a subset of $H^1$.
    Then, the exceptional values of $\omega$ or $\theta$ are recovered when the regularity of the boundary conditions is considered and we do so in Corollary~\ref{cor:gen_MIT} below. 
    
    On the other hand, the approach from \cite{behrndt2020self} has the advantage to yield operators that have elliptic regularity by definition, and extends to Lipschitz-regular functions $\omega$ parametrizing the boundary condition. In this work, we restrict our attention to smooth functions.
 
\end{example}

\begin{remark} The fiber bundle $U(\calS^+|_{\partial M})\to M$ always admits many smooth sections
	(besides the identity map).
	A procedure to construct such $\widetilde{f}$ is as follows: 
	Using local trivializations of $\calS^+|_{\partial M}$ and a partition
	of unity, one can construct smooth maps $H\colon \calS^+|_{\partial M}\to\bbR$ that restrict
	to Hermitian quadratic forms on each fiber.  Then, using the 
	symplectic form
	on each fiber given by the imaginary part of the Hermitian inner product,
	one can form fiber-wise the Hamilton fields of the restrictions $H|_{\calS_s^+}$.
	These fields glue together to
	define smooth vertical fields $\Xi_H$ on the total space of
	$\calS^+|_{\partial M}$, whose time-$t$ maps are unitary endomorphisms of $\calS^+|_{\partial M}$.
\end{remark}

\subsection{Regularity in the presence of a chirality operator}
In the last subsection we classified all self-adjoint local smooth boundary conditions, see Proposition~\ref{prop:BC_split}. In order to decide which of these boundary conditions are (strongly) regular, it suffices to check the Shapiro-Lopatinski condition from Proposition~\ref{prop:shap-lop}.\medskip 

First, we have the following observation:

\begin{lemma}\label{lem_CliffS+}
    Assume that $\calS$ admits a chirality operator with eigenbundles $\calS^{\pm}$. Then $\calS^\pm|_{\partial M}$ are again bundles of Clifford modules where the Clifford multiplication is given by $R\colon k\in T^*\partial M \mapsto R_k\define \ii C_\nu^*C_k$. Moreover, the boundary operator $A_0$ (See \eqref{eq:def_At}) can be written as
    \begin{equation*}
        A_0 =  D_+ \oplus (-C_\nu D_+ C_\nu^*), \quad \text{ with } D_+ \colon \calS^+|_{\partial M} \to \calS^+|_{\partial M}
    \end{equation*}
a Dirac-type operator. In particular, $\rank\, \calS= 2 \rank\, \calS^+ \ge 2\times 2^{\lfloor(d-1)/2 \rfloor  }$.
\end{lemma}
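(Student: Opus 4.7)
First I would verify that $R_k\define\ii C_\nu^*C_k$ satisfies the Clifford relations on $\calS^+|_{\partial M}$. For $k,\ell\in T_s^*\partial M$ we have $g(k,\nu)=g(\ell,\nu)=0$, so the second identity in \eqref{eq:chiral-anticommutation} gives $C_kC_\nu^*=-C_\nu C_k^*$. Combined with $C_\nu^*C_\nu=\Id_{\calS^+}$ (which follows from $c_\nu^2=\Id$), this yields
\[
R_kR_\ell = -C_\nu^*C_kC_\nu^*C_\ell = C_\nu^*C_\nu C_k^*C_\ell = C_k^*C_\ell,
\]
and therefore $R_kR_\ell+R_\ell R_k = C_k^*C_\ell+C_\ell^*C_k = 2g(k,\ell)\Id_{\calS^+}$ by the first identity in \eqref{eq:chiral-anticommutation}. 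Self-adjointness of $R_k$ follows from the normal-tangent anticommutation $C_\nu^*C_k+C_k^*C_\nu=2g(\nu,k)\Id=0$. The analogous statement on $\calS^-|_{\partial M}$ is obtained by transporting the Clifford action through the fibrewise isomorphism $C_\nu\colon \calS^+|_{\partial M}\to \calS^-|_{\partial M}$ (whose inverse is $C_\nu^*$).

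Next, for the splitting of $A_0$, my plan is to show that $A_0$ commutes with $\beta|_{\partial M}$. Clifford multiplication anticommutes with $\beta$ while the connection commutes with it, so $D\beta=-\beta D$; in a tubular neighborhood trivialized by parallel transport along unit normal geodesics, $\partial_t$ also commutes with $\beta$, and the factorization $D=-\ii c_\nu(\partial_t+A_t)$ together with $c_\nu\beta=-\beta c_\nu$ forces $A_0\beta=\beta A_0$. Consequently $A_0$ preserves the decomposition $\calS|_{\partial M}=\calS^+|_{\partial M}\oplus \calS^-|_{\partial M}$, and I set $D_+\define A_0|_{\calS^+|_{\partial M}}$ and $D_-\define A_0|_{\calS^-|_{\partial M}}$. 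Computing the principal symbol $a(s,k)=c_{\nu(s)}c_k$ in the chirality splitting produces the block-diagonal form with entries $C_\nu^*C_k$ on $\calS^+$ and $C_\nu C_k^*$ on $\calS^-$; since the first equals $-\ii R_k$, $D_+$ is of Dirac type with Clifford action $R$.

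The third step is to identify $D_-$ with $-C_\nu D_+ C_\nu^*$, which is equivalent to the exact anticommutation $\{A_0,c_\nu\}=0$ on $\calC(\calS|_{\partial M})$. Starting from the expression $A_0=c_\nu\sum_{j=1}^{d-1} c_{e_j}\nabla_{e_j}^{\partial M}$ plus the zeroth-order correction imposed by the normal form of \eqref{eq:def_At} used in \cite{BB12} that makes $A_0$ formally self-adjoint, the tangential-derivative terms in $A_0c_\nu+c_\nu A_0$ cancel via $c_{e_j}c_\nu=-c_\nu c_{e_j}$, while the remainder $\sum_j c_{e_j}c_{\nabla_{e_j}\nu}$ produced by the compatibility $[\nabla_{e_j},c_\nu]=c_{\nabla_{e_j}\nu}$ is absorbed by the normalizing correction. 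Translating $\{A_0,c_\nu\}=0$ into the chirality blocks gives $C_\nu D_+ + D_-C_\nu = 0$, i.e.\ the claimed formula. Finally, $C_\nu^*C_\nu=\Id_{\calS^+}$ and $C_\nu C_\nu^*=\Id_{\calS^-}$ show that $C_\nu$ is a fibrewise iso, so $\rank\calS=2\rank\calS^+$, and $\calS^+|_{\partial M}$ being a Clifford bundle over the $(d-1)$-dimensional $\partial M$ forces $\rank\calS^+\ge 2^{\lfloor(d-1)/2\rfloor}$.

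The hard part will be the exactness of $\{A_0,c_\nu\}=0$: a direct computation from the naive tangential expression for $A_0$ produces a mean-curvature remainder of the form $H\cdot\Id$ (using the symmetry of the second fundamental form and $c_{e_j}c_{e_k}+c_{e_k}c_{e_j}=2\delta_{jk}$), so the actual work lies in unpacking the explicit symmetrized form of $A_t$ from \cite{BB12} to confirm that this term is cancelled by the normalization and that the anticommutation holds on the nose, not merely modulo lower order.
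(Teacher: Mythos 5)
Your proposal follows essentially the same route as the paper for everything the paper actually writes down: the verification that $R_k=\ii C_\nu^*C_k$ is a self-adjoint Clifford action on $\calS^+|_{\partial M}$ via \eqref{eq:chiral-anticommutation} is the paper's computation almost verbatim, and the block-diagonality of $A_0$ in the chirality splitting (you obtain it from $[A_0,\beta]=0$, the paper from writing $A_0=c_\nu\sum_{j<d}c_{e_j}\nabla_{e_j}$ and noting that $c_\nu c_{e_j}$ and $\nabla$ preserve $\calS^\pm$) together with the symbol computation $a_0(s,\xi)=c_\nu c_\xi=C_\nu^*C_\xi\oplus C_\nu C_\xi^*$ is also the same; your reduction of the off-block identity to $\{A_0,c_\nu\}=0$ is correct. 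The one step you leave open --- that this anticommutation holds exactly and not merely modulo zeroth order --- is precisely the step the paper does not address either: its proof stops after the symbol computation and the definition $D_+=-\ii\sum_{j<d}R_{e_j}\nabla_{e_j}$. So you are not missing an idea the paper supplies; you are being more scrupulous than the paper at exactly this point.

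Moreover, your suspicion about the mean-curvature remainder is well founded and can be closed in two lines instead of being deferred to \cite{BB12}: for the naive tangential operator $A^{\mathrm{nv}}:=c_\nu\sum_{j<d}c_{e_j}\nabla_{e_j}$, the compatibility \eqref{eq:nabla_Cliff} and the symmetry of the second fundamental form give
\[
c_\nu A^{\mathrm{nv}}+A^{\mathrm{nv}}c_\nu \;=\; c_\nu\sum_{j<d}c_{e_j}c_{\nabla_{e_j}\nu}\;=\;H\,c_\nu,
\qquad H:=\sum_{j<d}g(\nabla_{e_j}\nu,e_j),
\]
since $\sum_{j,k}h_{jk}c_{e_j}c_{e_k}=\sum_j h_{jj}=H$ by the Clifford relations. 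Because $H$ is scalar, the formally self-adjoint adapted boundary operator $A_0=A^{\mathrm{nv}}-\tfrac{H}{2}$ (the choice consistent with the description of $A_0$ after \eqref{eq:def_At}) anticommutes with $c_\nu$ on the nose; with $D_+:=A_0|_{\calS^+}=-\ii\sum_{j<d}R_{e_j}\nabla_{e_j}-\tfrac{H}{2}$, still of Dirac type, the identity $A_0=D_+\oplus(-C_\nu D_+C_\nu^*)$ then holds exactly. Alternatively, observe that every later use of the lemma (Propositions~\ref{prop:shap-lop}, \ref{prop_revSL} and \ref{prop:LS-for-split}) only involves the principal symbol $a(s,k)$, so a zeroth-order ambiguity in $A_0$ and $D_+$ is immaterial --- which is presumably why the paper is content with the symbol-level argument. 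With this addition your argument is complete; the remaining items (transport of the Clifford structure to $\calS^-$ by $C_\nu$ and the rank bound) are handled as in the paper.
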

\begin{proof}

We compute $A_0$
at some  $s\in\partial M$: We choose an orthonormal frame $e_i$ near $s$ with $\nu=e_d\perp \partial M$. Then 
		\begin{equation}\label{}
		D = -\ii\sum_{j=1}^d c_{e_j} \nabla_j = 
		-\ii c_{e_d} \Big( \nabla_d  + \underbrace{c_{e_d} \sum_{j<d} c_{e_j} \nabla_j}_{= A_0} \Big). 
	\end{equation}
	Thus, the symbol of $A_0$ viewed as an operator on $\partial M$ is given by 
 \[a_0(s,\xi)=  c_\nu c_\xi=\begin{pmatrix} C_\nu^* C_\xi & 0 \\ 0 & C_\nu C_\xi^* \end{pmatrix}\] 
 for $s\in \partial M$, $\xi\in T_s^* \partial M$ ($C_\xi$ was defined in Section~\ref{sec_chiral}).
Note that the map $k \in T^*\partial M \mapsto R_k := \ii C_\nu^* C_k \in \operatorname{End}(\calS^+|_{\partial M})$ gives a Clifford module of rank $N/2$ over $\partial M$.
To see this, first note that from \eqref{eq:chiral-anticommutation} and $g(\nu, k) = 0$
$$
R_k^* = -\ii C_k^* C_\nu = R_k.
$$
Then, for $k_1, k_2 \in T^*\partial M$, we have
$$
R_{k_1} R_{k_2} = \bigl(-\ii C_{k_1}^* C_\nu \bigr)\bigl(\ii C_\nu^* C_{k_2} \bigr)= C_{k_1}^* C_{k_2} 
$$
and, once more from \eqref{eq:chiral-anticommutation}, we conclude that
$$
R_{k_1} R_{k_2}+ R_{k_2} R_{k_1} = 2 g(k_1,k_2).
$$
Therefore, it suffices to define $D_+ = -\ii \sum_{j= 1}^{d-1} R_{e_j} \nabla_{e_j}$.
\end{proof}

\begin{remark}
We note that $\calS^\pm|_{\partial M}$ together with the Clifford multiplication $R$ from above and the hermitian metric and connection induced from $\calS$ is in general \underline{not} a Clifford bundle:     The compatibility of Clifford multiplication and connection fails in general and a term including the second fundamental form of the boundary appears in \eqref{eq:nabla_Cliff}.
\end{remark}
With the notation of the last lemma, for each $s \in \partial M$ and $\xi \in T^*\partial M$ with $|\xi|= 1$ we define the eigenspaces 
\begin{equation}\label{def_Gpm}
    G_{\pm 1 } (s,\xi) = \{ v \in \calS^+_s\, |\, R_\xi v  = \pm  v \} .
\end{equation} 
Associated to a unitary $\widetilde{f}(s) \in U(\calS^+_s)$, we define the eigenspaces
\begin{equation*}
    F_{\pm 1}(s)=  \{ v \in \calS^+_s\, |\, \tilde f(s) v  = \pm  v \}
\end{equation*}
and collect all the other eigenspaces into 
\begin{equation*}
 F_\perp(s) = (F_{+1}(s) + F_{-1}(s))^\perp.
 \end{equation*}
Finally, we define the Cayley transform of $\widetilde f $ restricted to $F_\perp$ by
\begin{equation}\label{eq_Q}
    Q = -\ii (\Id +\widetilde f)^{-1} (\Id - \widetilde f) P_{F_\perp},
\end{equation}
with $P_{F_\perp}$ the orthogonal projection on $F_\perp$. Since $\widetilde f$ is pointwise unitary, $Q$ is pointwise Hermitian.

\begin{proposition} \label{prop:LS-for-split}
       Assume that the Clifford bundle admits a chirality operator and write the boundary condition in the form \eqref{eq:BC-split} for some $\widetilde{f}\in U(\calS^+|_{\partial M})$.
Define the eigenspaces $G_{\pm 1}$ and $F_{\pm 1}$, $F_\perp $ and the operator $Q$ as before. 
 This boundary condition fulfills the Shapiro-Lopatinski condition at $s$ if and only if 
\[ \left(F_{+1}(s) \oplus  F_{-1}(s) + \begin{pmatrix}
     1 \\ \ii  Q(s) \end{pmatrix}F_\perp(s)\right) \cap (G_{+1}(s, \xi) \oplus G_{-1}(s, \xi))  =\{0\}\]
for all $\xi \in UT^*_s\partial M$.\medskip 

 This in particular implies the necessary conditions for being Shapiro-Lopatinski that 
   \begin{align} \label{eq:LS-for-split-caseA}
        Q G_{+ 1}(s, \xi)\cap G_{ - 1} (s, \xi) &=\{0 \} \\
        \label{eq:LS-for-split-caseB}
        F_{+1}(s) \cap G_{+1}(s,\xi) = F_{-1}(s) \cap G_{- 1}(s,\xi) &=\{0 \}
   \end{align} 
   for all $\xi \in T_s^*(\partial M)$ with $|\xi|_g=1$.

In the special case that $F_{+1}(s)=F_{-1}(s)=\{0\}$,  \eqref{eq:LS-for-split-caseA}
is equivalent to being Shapiro-Lopatinksi.   
\end{proposition}
	
 \begin{proof}

 We fix $s$ and $\xi \in T^*_s\partial M$ with $|\xi| = 1$ and will omit the dependence on $s$ and $\xi$ from the notation.
By the previous proposition, the principal symbol of $A_0$ equals
$$
a(s,\xi) = \bigl(-\ii  R_\xi \bigr) \oplus \bigl( \ii C_\nu R_\xi C_\nu^* \bigr)
$$
Since $a(s,\xi)$ is block-diagonal and $C_\nu^*$ onto, the $- \ii$-eigenspace of $a(s,\xi)$ is given by
	\begin{equation*}
		E_{- \ii} =G_{+ 1} \oplus (C_\nu G_{-1}).
	\end{equation*}

By Proposition~\ref{prop:BC_split} there is a $\widetilde{f}\in U(\calS^+|_{\partial M})$ such that 
  \begin{equation} 
        \Lambda_s =  \left\{ \left.\begin{pmatrix}
     (\Id + \widetilde{f}(s)) w\\ C_{\nu(s)}(\Id-\widetilde{f}(s))  w 
 \end{pmatrix} \right| w \in \calS_s^+\cong \mathbb C^{N/2}\right\}
   \end{equation}
	We decompose any $w \in \calS_s^+$ as $w= w_{+1} + w_{-1} + w_\perp$, where each component belongs to the associated eigenspaces $F_{\pm 1}$ and $F_\perp$ of $\widetilde f$, so that 
\begin{align}\nonumber\Lambda_s
&=\left\{
\begin{pmatrix}
   2 w_{+1} +  (\Id + \widetilde{f}(s)) w_\perp \\ 2 C_{\nu(s)} w_{-1} + (\Id-\widetilde{f}(s))  w_\perp 
 \end{pmatrix}  \Bigg| w_{\star} \in F_{\star}  \right\} 
 \\
&=  F_{+1}(s) \oplus (C_{\nu(s)} F_{-1}(s)) + \begin{pmatrix}
     1 \\ \ii C_{\nu(s)} Q(s)
 \end{pmatrix}  F_\perp (s).\label{eq_decomp_Lambda}
  \end{align}
  Here, the second summand indicates the image of $F_\perp$ through the map $v\mapsto (v, \ii C_\nu Q v)^{\top}$.

Thus, the condition in the proposition is $\Lambda_s\cap E_{-\ii}(s, \xi)=\{0\}$, where the line of second component is multiplied by $C_\nu$. The necessary conditions \eqref{eq:LS-for-split-caseB}
and \eqref{eq:LS-for-split-caseA} then directly follow.\end{proof}

As an application of this result, we obtain the regularity of generalized infinite-mass boundary conditions.

\begin{corollary} \label{cor:gen_MIT}
The generalized infinite mass boundary conditions of example~\ref{ex:generalized_infinite_mass} are regular for $\theta\not\in \pi \mathbb Z$ and otherwise not regular.
\end{corollary}
\begin{proof}
    If $\tilde f = e^{\ii \theta}$ for some $\theta \notin \pi \mathbb Z$,  $F_\perp = \calS^+$ and $Q = -\ii(\Id + \tilde f)^{-1}  (1 -\tilde f) = -\tan (\theta/2) \in \R$. 
    In particular, $Q (G_{-1}(s,\xi)) \cap G_{+1} (s,\xi) = \{0\}$.

    For $\theta \in \pi\mathbb Z$ we have $\tilde{f}=\pm \mathrm{Id}$. Hence $F_{\pm 1} = \calS^+$ and \eqref{eq:LS-for-split-caseB}  is never fulfilled. 
\end{proof}

In the next section, we use these results to find all regular boundary conditions in low dimensions.
\section{Regular self-adjoint local smooth boundary conditions in low dimensions.}

In this section we apply the results from the last sections in low dimensions $d\leq 5$ and low ranks $N\leq 4$ of the Clifford bundle, and we recover results that are known in the Euclidean case. We recall that $N\geq  2^{\lfloor d/2 \rfloor}$ is necessary for the Clifford multiplication to exist. As a warm-up, we treat the planar case $d=2, N=2$. We then move on to odd dimensions $d=3,5$ with $N=2$ (resp. $N=4$), where we show that no regular self-adjoint local smooth boundary conditions exist. Finally, for the cases $d=3,4$ and $N=4$, we find an explicit parametrization for the regular self-adjoint local smooth boundary conditions.

\subsection{The two-dimensional case.}
As a warmup, we study the case $d=2$, $N=2$, and recover known results. In this case we do have a chirality operator and $\calS= \calS^+\oplus \calS^-$ and $\calS^\pm$ are just complex line bundles. We recover the boundary conditions due to \cite{BerryMondragon} for the Euclidean case.

\begin{proposition} Let  $d=2$, $N=2$. Every smooth function $B\colon \partial M \to \R \setminus \{0\} $ defines via 
\begin{equation}
    \Lambda_s = \{(w, \ii B(s)C_{\nu(s)}w)\ |\ w \in \calS^+_s \cong \C\}
\end{equation}
a regular self-adjoint boundary condition. 
   Conversely, every regular self-adjoint local smooth boundary condition is of the above form for a smooth function $B\colon \partial M \to \R\setminus \{0\}$.
  \end{proposition}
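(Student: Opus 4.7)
The plan is to combine Proposition~\ref{prop:BC_split} with Proposition~\ref{prop:LS-for-split} in the simplified setting where $\calS^{\pm}$ are complex line bundles. Under this simplification the unitary endomorphism bundle $U(\calS^+|_{\partial M})$ has $U(1)$-fibres, so a smooth section $\widetilde{f}$ is just a smooth unit-modulus complex-valued function on $\partial M$, and each eigenspace computation needed in Proposition~\ref{prop:LS-for-split} reduces to checking whether a one-dimensional fibre is full or zero.

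First I would put the boundary condition from Proposition~\ref{prop:BC_split} into the claimed form. Since $\calS^+_s \cong \C$, the scalar $(1+\widetilde{f}(s))$ is invertible iff $\widetilde{f}(s) \neq -1$; reparametrizing $v = (1+\widetilde{f}(s))w$ rewrites \eqref{eq:BC-split} as $\Lambda_s = \{(v, \ii B(s) C_{\nu(s)} v) \mid v \in \calS^+_s\}$ with $B(s) = -\ii (1-\widetilde{f}(s))/(1+\widetilde{f}(s))$. This is precisely the scalar $Q(s)$ from \eqref{eq_Q}, which is Hermitian and hence real-valued on a one-dimensional fibre. The Cayley-transform formula $\widetilde{f}\mapsto B$ is a diffeomorphism $S^1 \setminus \{-1\} \to \R$ with $B(s)=0$ iff $\widetilde{f}(s)=1$, so the self-adjoint local smooth boundary conditions in the statement are parametrized by smooth real-valued $B$ (with the boundary conditions coming from $\widetilde{f} \equiv -1$ on some region excluded from this parametrization).

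Next I would apply Proposition~\ref{prop:LS-for-split} to identify regularity. Because $\calS^+_s$ is one-dimensional, $F_{+1}(s)$ equals $\calS^+_s$ iff $\widetilde{f}(s)=1$ and is otherwise zero; likewise $F_{-1}(s)$ is full iff $\widetilde{f}(s)=-1$, and $F_\perp(s)$ is full iff $\widetilde{f}(s) \neq \pm 1$. Similarly, for a unit covector $\xi \in T_s^*\partial M$ (of which there are only two choices, $\pm \xi$, since $\dim \partial M=1$), $R_\xi = \ii C_\nu^* C_\xi$ is Hermitian with $R_\xi^2 = \Id$, so on the one-dimensional fibre it is $\pm 1$, forcing exactly one of $G_{\pm 1}(s,\xi)$ to be the full fibre and the other to be zero; the two roles swap as $\xi \mapsto -\xi$. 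A case analysis then gives: if $\widetilde{f}(s) = \pm 1$, the appropriate choice of $\xi$ makes \eqref{eq:LS-for-split-caseB} fail, and by Proposition~\ref{prop_revSL} the boundary condition is not regular. If instead $\widetilde{f}(s) \neq \pm 1$, only condition \eqref{eq:LS-for-split-caseA} is nonvacuous, and since $Q(s) \neq 0$ acts as a nonzero scalar, $Q(s)\, G_{+1}(s,\xi) = G_{+1}(s,\xi)$ is disjoint from the distinct eigenline $G_{-1}(s,\xi)$; hence SL holds at $s$.

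Combining both steps yields the proposition: SL holds at every boundary point iff $\widetilde{f}$ avoids $\{\pm 1\}$ globally, iff the associated $B$ is a smooth function into $\R \setminus \{0\}$. I do not expect any serious obstacle; the only subtle point is that $\widetilde{f}(s)=-1$ corresponds to the ``vertical'' self-adjoint boundary condition $\{0\}\oplus \calS^-_s$, which is outside the affine parametrization of the statement but also fails SL, so its exclusion is both necessary and consistent with the parametrization.
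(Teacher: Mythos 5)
Your proposal is correct and follows essentially the same route as the paper: reduce via Proposition~\ref{prop:BC_split} to a unitary scalar $\widetilde{f}$, observe that on the one-dimensional fibres exactly one of $G_{\pm1}(s,\xi)$ is full, rule out $\widetilde{f}(s)=\pm1$ by the necessary conditions of Proposition~\ref{prop:LS-for-split} together with Proposition~\ref{prop_revSL}, and identify $B$ with the Cayley transform $Q$ in the remaining case. Your treatment is in fact slightly more explicit than the paper's (the reparametrization $v=(1+\widetilde{f})w$ and the explicit citation of Proposition~\ref{prop_revSL}), but the substance is identical.
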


In the special case where $M$ is a subset of $\mathbb R^2$ we can choose the Clifford multiplication such that $C_{e_1}=\Id$ and $C_{e_2}=-\ii \Id$ for the standard basis $e_i$ of $\mathbb R^2$. Then the boundary condition can be written as    $\Lambda_s = \{(w, \ii B(s)(t_1(s)+\ii t_2(s)) w\ |\ w \in \calS^+_s \cong \C\}$ where $t$ is a unit tangent vector field of $\partial M$ and $B$ is as above.
  
  \begin{proof} By Proposition~\ref{prop:BC_split} a self-adjoint local smooth boundary condition has the form   \begin{equation} \label{eq:BC-2-2-case} 
        \Lambda_s =  \left\{ \left.\begin{pmatrix}
     (\Id + \widetilde{f}(s)) w\\ C_{\nu(s)}(\Id-\widetilde{f}(s))  w 
 \end{pmatrix} \right| w \in \calS_s^+\cong \mathbb C\right\}
   \end{equation}
   for unitary maps $\widetilde{f}(s)$ depending smoothly on $s$. In this case, $\calS^{\pm}$ have rank $1$. Referring to the notation of Proposition~\ref{prop:LS-for-split}, we conclude that for each $s$ and $\xi$, either 
$G_{+1}(s,\xi) = \calS^+_s$ and $G_{-1} (s,\xi) = \{0\}$, or the opposite holds. Thus, if $F_{-1}$ and $F_{+1}$ are nontrivial, the same applies their intersection with $G_{\pm 1}(s,\xi)$ for a suitable choice of $\xi$. Thus, for a regular boundary condition, $F_\perp = \calS_s^+$. Since for each $\xi$ one of $G_{\pm 1}(s,\xi)$ is trivial, the intersection $Q G_{-1}(s,\xi) \cap G_{+1}(s,\xi)$ is trivial.
Define $B = Q$, then $B \colon \partial M \mapsto \R \setminus \{0\}$. Since $B$ and $C_\nu$ commute, the boundary condition rewrites as \eqref{eq:BC-2-2-case}.
  \end{proof}

    The case $d=2$ and $N=4$ is relevant for the description of graphene. In the Euclidean setting, an explicit parametrization for self-adjoint boundary conditions was given in \cite{AkhmerovBeenakker},
       and their regularity has been studied in \cite{benguria2022block}. It was also shown there that there exist pointwise unitary transformations that allow us to write the boundary conditions in a block-diagonal form and, for this reason, their regularity can be obtained from the case $N=2$.\medskip 

       Fix the chirality operator to be Clifford multiplication by the volume form times $\ii$ which gives again a decomposition $\calS =\calS^+ \oplus \calS^-$. Since $\partial M$ is one-dimensional, $\calS^+|_{\partial M}$ is a trivial $\C^2$-bundle. This isomorphism can always be chosen to be a fibrewise isometry of Hermitian bundles. For the following we assume that $\calS^+|_{\partial M} = \Sigma\times \C^2$.

\begin{proposition}
  Let $d=2$, $N=4$. Fix the chirality operator to be Clifford multiplication by the volume form times $\ii$.
  Any regular self-adjoint local smooth boundary condition is of the form
$$
\Lambda_s =  \left\{ \begin{pmatrix}
   w \\
    i C_{\nu(s)} A(s) w 
\end{pmatrix} \middle|\ w \in \C^2 \right\}
$$
where $A(s)$ is an invertible Hermitian matrix depending smoothly on $s \in \partial M$.  
  \end{proposition}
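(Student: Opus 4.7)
The plan is to start from Proposition~\ref{prop:BC_split}, which represents any self-adjoint local smooth boundary condition as $\Lambda_s = \{((\Id + \widetilde f(s))w', C_{\nu(s)}(\Id - \widetilde f(s))w') : w' \in \calS^+_s\}$ for a smooth section $\widetilde f$ of $U(\calS^+|_{\partial M})$, and to show that the regularity hypothesis forces the spectrum of every $\widetilde f(s)$ to avoid $\{-1,+1\}$. Once that is established, the Cayley transform $A(s) := -\ii(\Id - \widetilde f(s))(\Id + \widetilde f(s))^{-1}$ is a smooth section of invertible Hermitian endomorphisms, and the substitution $w = (\Id+\widetilde f(s))w'$ rewrites $\Lambda_s$ in the announced form. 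The converse direction will invert this correspondence and verify regularity directly via Proposition~\ref{prop:LS-for-split}.

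The main algebraic input is the pointwise computation of $R_\xi = \ii C_{\nu(s)}^* C_\xi$ on $\calS^+_s$ for this specific chirality. Using $\beta = \ii c_{e_1} c_{e_2}$ together with the Clifford identity $c_\nu c_\xi = g(\nu,\xi)\,\Id + (\nu_1\xi_2 - \nu_2\xi_1)\,c_{e_1}c_{e_2}$ applied to an oriented orthonormal frame $(\nu,\xi)$ of $T^*_sM$, one gets $\ii c_\nu c_\xi = \pm \beta$, with the sign equal to the orientation of $(\nu,\xi)$ relative to $(e_1,e_2)$. Restricting to $\calS^+$, where $\beta$ acts as $\Id$, yields $R_\xi = \pm \Id$ on $\calS^+_s$. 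Consequently the eigenspaces of~\eqref{def_Gpm} satisfy $G_{+1}(s,\xi) = \calS^+_s$ and $G_{-1}(s,\xi) = \{0\}$ for one orientation of the unit tangent covector $\xi$, and the reverse for the opposite orientation.

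Feeding this into the necessary condition~\eqref{eq:LS-for-split-caseB} of Proposition~\ref{prop:LS-for-split}, the orientation of $\xi$ giving $G_{+1}(s,\xi) = \calS^+_s$ forces $F_{+1}(s) = \{0\}$, and the opposite orientation forces $F_{-1}(s) = \{0\}$. Hence $\pm 1 \notin \sigma(\widetilde f(s))$, so $F_\perp(s) = \calS^+_s$, and $A(s)$ is well-defined, smooth, invertible and Hermitian. For the converse, any smooth invertible Hermitian $A(s)$ determines a smooth $\widetilde f(s) = (\Id + \ii A(s))^{-1}(\Id - \ii A(s)) \in U(\calS^+_s)$ with no $\pm 1$-eigenvalue, i.e.\ with $F_\perp(s) = \calS^+_s$; the full Shapiro--Lopatinski condition in Proposition~\ref{prop:LS-for-split} then reduces to $Q\, G_{+1}(s,\xi) \cap G_{-1}(s,\xi) = \{0\}$, which is automatic since one of $G_{\pm 1}(s,\xi)$ is trivial and $Q = A(s)$ is bijective. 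The only step demanding genuine care is the orientation bookkeeping in $\ii c_\nu c_\xi = \pm \beta$; everything else is a linear-algebraic unwinding.
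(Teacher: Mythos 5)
Your proposal is correct and follows essentially the same route as the paper: represent the boundary condition via Proposition~\ref{prop:BC_split}, observe that the chosen chirality operator makes $R_\xi=\pm\Id$ on $\calS^+_s$ so that Proposition~\ref{prop:LS-for-split} (together with the sharpness of Shapiro--Lopatinski from Proposition~\ref{prop_revSL}) forces $F_{\pm1}(s)=\{0\}$, and then pass to the Cayley transform $A(s)=-\ii(\Id-\widetilde f(s))(\Id+\widetilde f(s))^{-1}$. The only additions are the explicit Clifford computation of $\ii c_\nu c_\xi=\pm\beta$ and the converse check, both of which are consistent with the paper's argument.
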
 

The Euclidean case with constant $A= A(s)$ corresponds to the setting of \cite{AkhmerovBeenakker} and \cite{benguria2022block}.
  
\begin{proof}
We use the chirality operator to write a self-adjoint local smooth boundary condition as in Proposition~\ref{prop:BC_split},
\begin{equation*} 
		\Lambda_s =  \left\{ \left.\begin{pmatrix}
			(\Id + \widetilde{f}(s)) w\\ C_{\nu(s)}(\Id-\widetilde{f}(s))  w 
		\end{pmatrix} \right|\ w \in  \C^{2}\right\}
\end{equation*}    
where $\widetilde{f}(s)\colon  \C^2\to \C^2$ is unitary. 
For regularity, note that, by our choice of chirality operator, $R_t = \pm \Id $. Let $t$ be the positively oriented tangent vector. W.l.o.g. let $R_t=\Id$ (the other case is analogous). Thus,
returning to Proposition~\ref{prop:LS-for-split}, this means that for each tangent  vector $\xi = \pm t$,  $G_{\mp 1}(s,\xi=\pm t)=\{0\}$ and $G_{\pm 1}(s,\xi=\pm t)=\C^2$. Hence, $QG_{+1}\cap G_{-1}= \{0\}$. So, \eqref{eq:LS-for-split-caseB} is satisfied if and 
only if $F_{+1}$ and $F_{-1}$ are trivial. 
Thus, we can invert $\Id + \widetilde{f}(s)$ and define $A(s) = -\ii(\Id-\widetilde{f}(s))(\Id + \widetilde{f}(s))^{-1}$, which is Hermitian and invertible. The boundary condition reads now 
\begin{equation*} 
		\Lambda_s =  \left\{ \left.\begin{pmatrix}
			 w\\ i C_{\nu(s)}A (s)  w 
		\end{pmatrix} \right| w \in  \C^{2}\right\}. \qedhere
\end{equation*}   
\end{proof}

\subsection{Non-existence of regular self-adjoint local smooth boundary conditions for \tps{d=3}{d3} and \tps{N=2}{N2} resp. \tps{d=5}{d5} and \tps{N=4}{N4}}\label{sect:NoGo}

First we classify all self-adjoint local smooth boundary conditions for $(d,N)\in \{(3,2), (5,4)\}$ and then we show below that none of them is regular.

\begin{proposition} \label{prop:global_d_odd}
	In the cases $d=3$, $N=2$ and $d=5$, $N=4$, self-adjoint local smooth boundary conditions exist if and only if there exists a smooth section $t$ of the unit tangent bundle. In the case $d=3$, the boundary conditions read
	$$
	\Lambda_s  = E_+(c_{t(s)}), 
	$$
	i.e., $\Lambda$ is the positive eigenbundle of $c_t$.\medskip 
	
	In the case $d= 5$, the boundary conditions are parametrized by a smooth section of the unit tangent space $t$ of $\partial M$ and a function $\tau\in C^\infty(\partial M)$ and take the form
	$$
	\Lambda_s = E_+(H_s) , \quad H \define \exp(\ii \tau c_\nu) c_t = \cos(\tau) c_t + \ii \sin(\tau) c_\nu c_t.
	$$
\end{proposition}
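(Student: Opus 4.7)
The plan is to use Theorem~\ref{thm:local} to identify self-adjoint local smooth boundary conditions with smooth sections $H$ of the bundle whose fibers are Hermitian involutions on $\calS_s$ anticommuting with $c_{\nu(s)}$ (so that $\Lambda = E_+(H)$, as in the proof of Lemma~\ref{lem:tangent_Lambda}), and then to analyze such $H$ pointwise via Clifford algebra. At each $s \in \partial M$, I would first decompose $H$ against a basis of Hermitian elements of the complexified Clifford algebra acting on $\calS_s$, separated by parity under conjugation by $c_\nu$. For $(d,N)=(3,2)$ the Hermitian operators anticommuting with $c_{\nu(s)}$ form the real $2$-dimensional subspace $\{c_v : v \in T_s\partial M\}$; for $(d,N)=(5,4)$, using that $c_\nu$ is proportional to the tangent volume form $c_{e_1}c_{e_2}c_{e_3}c_{e_4}$ in the irreducible representation to rewrite the degree-$3$ monomials, they form the real $8$-dimensional subspace $\{c_u + \ii c_\nu c_w : u, w \in T_s\partial M\}$, with $u,w$ uniquely determined by $H$ in each case.

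Next, I would impose $H^2 = \Id$. For $d=3$ this forces $|t|=1$ where $H = c_t$, giving a pointwise (hence smooth global) bijection between such $H$ and unit tangent vector fields, and Lemma~\ref{lem:tangent_Lambda} then identifies $\Lambda = E_+(c_t)$. For $d=5$, a direct expansion using $c_\nu c_w = -c_w c_\nu$ and $c_u c_w + c_w c_u = 2 g(u,w)$ yields
\begin{equation*}
H^2 = (|u|^2+|w|^2)\,\Id - \ii\, c_\nu\,[c_u,c_w],
\end{equation*}
so $H^2=\Id$ is equivalent to $|u|^2+|w|^2=1$ together with $[c_u,c_w]=0$. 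The commutator condition forces $u \parallel w$ in $T_s\partial M$, so one can write $u = \cos\tau\, t$, $w = \sin\tau\, t$ with $|t|=1$ and $\tau \in \R$ (a $2$-to-$1$ pointwise parametrization with $(t,\tau)\sim(-t,\tau+\pi)$), giving exactly $H = \cos\tau\, c_t + \ii\sin\tau\, c_\nu c_t = \exp(\ii\tau c_\nu)\, c_t$. The $(\Leftarrow)$ direction of the "iff" then follows immediately: given a smooth unit tangent field $t$, taking $\tau\equiv 0$ (or any smooth $\tau$ in the $d=5$ case) yields a valid $H$ and hence the boundary condition $\Lambda = E_+(c_t)$.

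For the $(\Rightarrow)$ direction, the case $d=3$ is immediate from the pointwise bijection. For $d=5$, a smooth self-adjoint boundary condition gives smooth $u, w \in \calC(T\partial M)$ with $u \parallel w$ pointwise and $|u|^2+|w|^2=1$ everywhere, so $(u,w)$ never simultaneously vanishes, and their common $\R$-span $\ell_s := \mathrm{span}_\R\{u(s), w(s)\}$ defines a smooth nowhere-zero rank-$1$ subbundle $\ell\subset T\partial M$. The main obstacle here is that $\ell$ may fail to be orientable, so it need not directly admit a continuous unit section. I would circumvent this topologically: pass to the $\Z_2$-orientation double cover $\widetilde{\partial M}\to\partial M$ associated to $\ell$; on $\widetilde{\partial M}$ the pullback of $\ell$ is orientable and hence trivialises, so it admits a nowhere-zero section, and Poincar\'e--Hopf applied on $\widetilde{\partial M}$ gives $\chi(\widetilde{\partial M})=0$. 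Since $\chi(\widetilde{\partial M}) = 2\chi(\partial M)$, this forces $\chi(\partial M)=0$, and Hopf's theorem then provides a nowhere-vanishing tangent vector field on $\partial M$, which after normalisation is the required smooth unit tangent field. This field need not be aligned with $\ell$; for this reason the parametrisation in the proposition is existential rather than uniquely extracting $(t,\tau)$ from $H$, but that is enough for the stated equivalence between existence of the boundary condition and existence of $t$.
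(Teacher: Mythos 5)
Your proposal is correct, and its skeleton is the paper's: both arguments use Theorem~\ref{thm:local} to reduce to the pointwise classification of Hermitian, unitary operators $H$ on $\calS_s$ anticommuting with $c_{\nu(s)}$, with $\Lambda_s=E_+(H_s)$, and the $d=3$ case is handled the same way. Where you differ is in the $d=5$ computation and in the global step. The paper fixes the explicit $\gamma$-matrices \eqref{eq_gamma_dim5}, parametrizes the unitary off-diagonal block $\widetilde f$ in a Pauli basis, and derives the constraints from unitarity before reassembling $H=e^{-\ii\tau c_\nu}c_t$; you instead decompose $H$ itself invariantly --- using that the volume element acts as a scalar, the Hermitian operators anticommuting with $c_\nu$ are exactly $c_u+\ii c_\nu c_w$ with $u,w$ tangent --- and impose $H^2=(|u|^2+|w|^2)\Id-\ii c_\nu[c_u,c_w]=\Id$, a coordinate-free route to the same normal form. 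More substantively, you supply a global argument for the ``only if'' direction in $d=5$ that the paper's purely pointwise proof leaves implicit: because of the ambiguity $(t,\tau)\sim(-t,\tau+\pi)$, a smooth $H$ a priori only determines the smooth line field spanned by $(u,w)$, and your passage to the orientation double cover, Poincar\'e--Hopf, $\chi(\partial M)=0$, and Hopf's converse theorem is exactly what is needed to extract a unit tangent field. You are also right to flag that a global smooth choice of $(t,\tau)$ is not automatic, so the $d=5$ parametrization should be read pointwise/existentially; with that reading your proof fully establishes the stated equivalence, and is in this respect more careful than the paper's.
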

\begin{proof} By Theorem~\ref{thm:local}(iii) we need to find maps $f\in U(E_+(c_\nu), E_-(c_\nu))$. We analyse the condition at first in a point $s\in \partial M$. Thus, we can work on $\calS_s\cong \mathbb C^N$ and we omit the $s$ in the following:  
	
	The starting point is the observation that a unitary map $f$ from $E_+(c_\nu)$ to $E_-(c_\nu)$ corresponds to a matrix $H$ that maps $\C^N$ to $\C^N$ and takes the form
	\begin{equation}\label{eq:matrix_form}
		H =\begin{pmatrix}
			0 & f^{-1} \\ f & 0 
		\end{pmatrix} \colon E_+(c_\nu) \oplus E_-(c_\nu) \mapsto E_+(c_\nu) \oplus E_-(c_\nu).
	\end{equation}
	The matrix $H$ is Hermitian, traceless, unitary, and anticommutes with $c_\nu$. Conversely, any such matrix $H$ induces a suitable unitary map $f$ and a self-adjoint local boundary condition $\Lambda = E_+(H)$. 
	
	\noindent \textbf{Case $d=3$.} 
	We take Clifford multiplication such that $c_k = \sum_j k_j \sigma_j  \define \sigma \cdot k$. 
	A Hermitian traceless matrix takes the form $v \cdot \sigma$ for some $v \in \R^3$. It is unitary if and only if $|v|^2 = 1$. Finally, it anti-commutes with $c_\nu$ if and only if $g(v,\nu)=0$, so we have $v \in T_s(\partial M)$.
	
	\noindent \textbf{Case $d=5$.}
	We use the $\gamma$-matrices defined by
	\begin{equation}\label{eq_gamma_dim5}
		\gamma_j = \begin{pmatrix}
			0 & \sigma_j \\ \sigma_j & 0 
		\end{pmatrix} \text{ for } j= 1, 2, 3,\quad \gamma_4 = \begin{pmatrix}
			\Id \\ 0&-\Id  
		\end{pmatrix}, \gamma_5 = \begin{pmatrix}
			0 & \ii \Id\\ -\ii \Id& 0 
		\end{pmatrix}
	\end{equation}
	with $\sigma_j$ being the (self-adjoint) Pauli matrices. 
	We choose local coordinates such that at the point $s\in \partial M$ we have $c_\nu = \gamma_4$. Then $E_+ (c_\nu)= \left\{ \binom{a}{0}\ | \ a\in \mathbb C^2\right\}$ and $E_- (c_\nu)= \left\{ \binom{0}{a}\ | \ a\in \mathbb C^2\right\}$.  
	Thus, $f\colon E_+(c_\nu) \to E_-(c_\nu)$ is given by a unitary map $\widetilde{f}\colon \mathbb C^2\to \mathbb C^2$. 
	Anticipating what follows, we parametrize a general map in $\C^{2 \times 2}$ as
	$$
	\widetilde{f}=  e^{\ii \tau} \bigl( -\ii\alpha + \sum_j (v^R_j + \ii v^I_j ) \sigma_j \bigr), \text{ for some } \tau \in [0, 2 \pi), \, \alpha \in \R, \text{ and } v^R, v^I \in \R^3.
	$$
	Then $\widetilde{f}$ is unitary if and only if (we use the identity $\sigma_j \sigma_k = \delta_{jk} + \ii \sum_l \epsilon_{jkl}\sigma_l$)
	\begin{align*}
		1 =& \left( -\ii  \alpha +\sum_j(v^R_j + \ii v^I_j ) \sigma_j \right) \left( \ii \alpha +\sum_k (v^R_k - \ii v^I_k ) \sigma_k\right) \\
		=& \alpha^2 - 2  \alpha \sum_j v^I_j \sigma_j +
		|v^R|^2 + |v^I|^2 + 2 \sum_{j,k,l} \epsilon_{j k l}\sigma_l v_j^R v_k^I \\
		=& \alpha^2 + |v^R|^2 + |v^I|^2  + \sum_j \left(-2 \alpha v_j^I + 2 (v^R \times v^I)_j \right) \sigma_j,
	\end{align*}
	which imposes
	\begin{align*}
		\alpha^2 + |v^R|^2 + |v^I|^2  = 1 
		\text{ and } -2 \alpha v^I + 2 (v^R \times v^I) = 0.
	\end{align*}
	If $\alpha \neq 0$, the second condition implies $v^I = 0$, since $(v^R \times v^I)$ is perpendicular to $v^I$. If $\alpha = 0$, we need that $v^R$ and $v^I$ are collinear. In that case, we may redefine $\tau$ and $v_R$ and assume $v^I = 0$ as well.
	
	So, a unitary map is of the form 
	$$
	\widetilde{f}=  e^{\ii \tau} \bigl( \ii\alpha + \sum_j v^R_j \sigma_j \bigr) \text{ with } \alpha^2 + |v^R|^2 = 1.
	$$
	We define a unit tangent vector $t = (v^R_1, v^R_2, v^R_3, 0, \alpha) \in \R^5$. 
	Then, using \eqref{eq:matrix_form}, the corresponding boundary condition is of the form $E_+(H)$ with 
	\begin{align*}
		H 
		&= \begin{pmatrix}
			0 & e^{-\ii \tau} \bigl(  \ii\alpha + \sum_j v^R_j \sigma_j \bigr) \\
			e^{\ii \tau} \bigl( -\ii\alpha + \sum_j v^R_j \sigma_j \bigr) & 0 
		\end{pmatrix} \\
		& = 
		e^{-\ii \tau \gamma_4} \bigl( \alpha \gamma_5 + \sum_j v^R_j \gamma_j\bigr) \\
		&= e^{-\ii \tau c_\nu} c_{t}
	\end{align*}
	This last form is independent of the particular choice of coordinates.
\end{proof}

\begin{proposition}
	In the case $d=3$, $N=2$, there are no regular self-adjoint local smooth boundary conditions.
\end{proposition}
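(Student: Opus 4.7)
The plan is to combine the classification from Proposition~\ref{prop:global_d_odd} with the regularity result of Corollary~\ref{lem:SL_tang}. Specifically, in the case $d=3$, $N=2$, Proposition~\ref{prop:global_d_odd} tells us that every self-adjoint local smooth boundary condition is necessarily of the form $\Lambda_s = E_+(c_{t(s)})$, where $t$ is a smooth section of the unit tangent bundle of $\partial M$. In particular, no self-adjoint local smooth boundary condition can be of any other form, so it suffices to rule out regularity for boundary conditions of this specific shape.

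Next, I invoke Corollary~\ref{lem:SL_tang}, which asserts that $\Lambda = E_+(c_t)$ is regular if and only if $\partial M$ is one-dimensional. Since here $\dim M = 3$, we have $\dim \partial M = 2 > 1$, so the Shapiro--Lopatinski condition fails at every boundary point where $t$ is defined. More concretely, one can reproduce the argument from the proof of Corollary~\ref{lem:SL_tang}: picking any unit $\xi \in T_s^*\partial M$ with $\xi \perp t(s)$, the operators $c_{t(s)}$ and $c_{\nu(s)} c_\xi$ commute, so one can find a simultaneous eigenvector $v$ that lies in $E_+(c_{t(s)}) \cap E_{-\ii}(a(s,\pm\xi))$ for one of the two signs, contradicting \eqref{eq:shap}.

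Finally, since the Shapiro--Lopatinski criterion is both sufficient (Proposition~\ref{prop:shap-lop}) and necessary (Proposition~\ref{prop_revSL}) for regularity in our setting, its failure at some $s \in \partial M$ implies that $D_\Lambda$ is not regular. Combining these two steps yields the claim: any boundary condition that is self-adjoint must be of the form given by Proposition~\ref{prop:global_d_odd}, and any such boundary condition fails to be regular when $\dim \partial M \geq 2$.

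The proof contains no real obstacle — it is essentially a two-line deduction from results already established. The only subtlety worth double-checking is that Proposition~\ref{prop:global_d_odd} indeed exhausts \emph{all} self-adjoint local smooth boundary conditions in this dimension/rank (not merely producing examples), but this is explicitly the content of that Proposition.
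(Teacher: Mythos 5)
Your proof is correct and follows essentially the same route as the paper's own argument: classify all self-adjoint local smooth boundary conditions via Proposition~\ref{prop:global_d_odd} as $\Lambda = E_+(c_t)$, then apply Corollary~\ref{lem:SL_tang} (whose necessity direction rests on Proposition~\ref{prop_revSL}) to rule out regularity since $\dim\partial M = 2$. No gaps; the extra check you flag about exhaustiveness of the classification is indeed covered by that Proposition.
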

\begin{proof}
	In view of Proposition~\ref{prop:global_d_odd}, a self-adjoint local smooth boundary condition takes the form $\Lambda_s = E_+(c_{t(s)})$ for some unit tangent vector $t(s)$. By Lemma~\ref{lem:SL_tang} this boundary condition is not regular in dimension $3$.
\end{proof}

\begin{proposition}
	In the case $d=5$, $N=4$ there are no regular self-adjoint local smooth boundary conditions.
\end{proposition}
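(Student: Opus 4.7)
By Proposition~\ref{prop:global_d_odd}, any self-adjoint local smooth boundary condition in this case has the form $\Lambda_s = E_+(H_s)$ with $H_s = \cos(\tau(s))\,c_{t(s)} + \ii\sin(\tau(s))\,c_{\nu(s)}c_{t(s)}$ for a smooth unit tangent vector field $t$ on $\partial M$ and $\tau\in C^\infty(\partial M)$. By Proposition~\ref{prop_revSL}, to prove non-regularity it suffices to exhibit, at a single point $s\in\partial M$, a unit $\xi\in T_s^*\partial M$ for which the Shapiro-Lopatinski condition fails, i.e.\ $E_+(H_s)\cap E_{-\ii}(c_{\nu(s)}c_\xi)\neq\{0\}$. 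The plan is to reduce this intersection condition to an explicit algebraic equation on $\xi$ and then observe that it always has a solution on the unit sphere in $T_s^*\partial M \cong \R^4$.

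Fix $s$ and drop subscripts. I would exploit the splitting $\calS_s=E_+(c_\nu)\oplus E_-(c_\nu)$ into the rank-$2$ eigenspaces of $c_\nu$. Every tangential unit covector $\eta$ gives, via $c_\eta|_{E_+(c_\nu)}$, a unitary map $\Phi_\eta\colon E_+(c_\nu)\to E_-(c_\nu)$, so $c_\eta$ is block-anti-diagonal with blocks $\Phi_\eta$ and $\Phi_\eta^*$. A direct block-form computation should then produce
\[
E_+(H) = \{(a,\,e^{-\ii\tau}\Phi_t a)\mid a\in E_+(c_\nu)\},\qquad E_{-\ii}(c_\nu c_\xi) = \{(a,\,-\ii\,\Phi_\xi a)\mid a\in E_+(c_\nu)\},
\]
so the intersection is nontrivial if and only if $\ii e^{-\ii\tau}$ is an eigenvalue of the $2\times 2$ unitary $U_\xi\define \Phi_t^*\Phi_\xi$ acting on $E_+(c_\nu)$.

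The next step uses the Clifford relation $c_tc_\xi+c_\xi c_t=2g(t,\xi)\,\Id$, which restricts to $U_\xi+U_\xi^*=2g(t,\xi)\,\Id$. Combined with the unitarity of $U_\xi$, writing $U_\xi = g(t,\xi)\Id + A$ with $A$ anti-Hermitian forces $A^2=(g(t,\xi)^2-1)\Id$, hence $U_\xi\in SU(2)$ with $\Tr U_\xi=2g(t,\xi)$. Its eigenvalues are therefore $e^{\pm\ii\theta}$ with $\cos\theta = g(t,\xi)$, and $\ii e^{-\ii\tau}=e^{\ii(\pi/2-\tau)}$ appears among them precisely when $g(t,\xi)=\sin\tau$.

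The conclusion is then immediate: since $|t|=1$ and $|\sin\tau|\leq 1$, there always exists a unit $\xi\in S^3\subset T_s^*\partial M$ with $g(t,\xi)=\sin\tau$, and for this $\xi$ the Shapiro-Lopatinski condition fails; so by Proposition~\ref{prop_revSL} the operator $D_\Lambda$ is not regular. The one spot that needs care is the block-form identification of $E_+(H)$ and $E_{-\ii}(c_\nu c_\xi)$ as graphs over $E_+(c_\nu)$ and keeping the phases straight in the computation of $H$; once that algebraic setup is in hand, the $SU(2)$-eigenvalue criterion and the final sphere argument close the proof essentially automatically.
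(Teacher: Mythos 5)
Your setup is sound and runs parallel to the paper's argument (which does the same reduction with explicit $\gamma$-matrices): the graph descriptions $E_+(H)=\{(a,e^{-\ii\tau}\Phi_t a)\}$ and $E_{-\ii}(c_\nu c_\xi)=\{(a,-\ii\Phi_\xi a)\}$ are correct, as is the reduction of the Shapiro--Lopatinski failure to the statement that $\ii e^{-\ii\tau}$ is an eigenvalue of $U_\xi=\Phi_t^*\Phi_\xi$, and the appeal to Proposition~\ref{prop_revSL} is exactly right. The genuine gap is the step ``hence $U_\xi\in SU(2)$ with $\Tr U_\xi=2g(t,\xi)$.'' This does not follow from what you established: the two facts you use (unitarity of $U_\xi$ and $U_\xi+U_\xi^*=2g(t,\xi)\Id$, equivalently $A$ anti-Hermitian with $A^2=(g(t,\xi)^2-1)\Id$) are also satisfied by the scalar matrix $U=e^{\ii\theta}\Id$ with $\cos\theta=g(t,\xi)$, which is not in $SU(2)$ and has only the single eigenvalue $e^{\ii\theta}$. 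Consequently the implication you need, namely that $g(t,\xi)=\sin\tau$ forces $\ii e^{-\ii\tau}\in\mathrm{spec}\,U_\xi$ (rather than only its conjugate), is unproven as written; only the converse direction (real part of any eigenvalue equals $g(t,\xi)$) follows from your relations.

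The gap is real but repairable, because it requires input beyond the two identities you quote --- essentially that the boundary is $4$-dimensional. Writing $\xi=\sin\tau\, t+\cos\tau'\, w$ with $w\perp t$ unit tangent and $\cos\tau'=\sqrt{1-\sin^2\tau}$, you get $U_\xi=\sin\tau\,\Id+\cos\tau'\,U_w$ with $U_w$ unitary and anti-Hermitian, so its eigenvalues lie in $\{\pm\ii\}$; the only bad case is $U_w=\mp\ii\Id$, and since $U_{-w}=-U_w$ you can simply replace $w$ by $-w$ to land on the wanted eigenvalue. (Alternatively, $U_w=\pm\ii\Id$ would mean $c_tc_w=\pm\ii c_\nu$, i.e.\ $c_\nu c_tc_w=\pm\ii\,\Id$, which is impossible because $c_\nu c_tc_w$ anticommutes with $c_u$ for any third tangential direction $u$, available since $\dim\partial M=4$.) The paper sidesteps this issue by computing in the explicit representation \eqref{eq_gamma_dim5}: there $\sigma\cdot\tilde k$ is traceless Hermitian, so both eigenvalues $\pm|\tilde k|$ genuinely occur and one chooses the eigenvector with $\sigma\cdot\tilde k\,w_0=-\cos(\tau)w_0$. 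With one of these one-line fixes inserted, your basis-free argument is a correct, and slightly more intrinsic, version of the paper's proof.
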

\begin{proof}
	In view of Proposition~\ref{prop:global_d_odd}, a self-adjoint local smooth boundary condition takes the form $\Lambda_s = E_+(H_s)$ with $H=\exp ( \ii\tau c_{\nu}) c_t$ for some unit tangent vector field $t$ on $\partial M$ and $\tau\in C^\infty(\partial M)$.\medskip 
	
	We fix $s\in \partial M$ and pick a basis such that $\nu= e_4$ and $t(s)= e_5$ and we use for the Clifford multiplication the $\gamma$-matrices as in \eqref{eq_gamma_dim5}. We omit $s$ in the notation in the following. Then 
	$$\exp (  \ii\tau c_{\nu})c_{t} = \begin{pmatrix}
		0 &-\ii  e^{-\ii\tau} \\\ii e^{\ii \tau} & 0
	\end{pmatrix}, \text{ and } \Lambda\define E_+(\exp (\ii \tau c_{\nu})c_{t}) = \left\{\begin{pmatrix}
		w \\ \ii e^{\ii\tau} w
	\end{pmatrix}\middle| w \in \C^2 \right\}.
	$$
	On the other hand, for a general unit tangent vector $k= (\tilde k, 0, k_5)$,
	$$
	a(k,s)=  c_\nu c_k =  \begin{pmatrix}
		0 &  \ii k_5 + \sigma\cdot \tilde k  \\  \ii k_5 - \sigma\cdot \tilde k & 0
	\end{pmatrix}.
	$$    
	We pick a particular $k$ such that $k_5 = \sin(\tau)$ and $\tilde k \in \R^3$ such that $|\tilde k| = |\cos(\tau)|$. 
	The matrix $\sigma\cdot \tilde k$ is Hermitian and has eigenvalues $\pm \cos(\tau)$. We pick $w_0 \in \C^2$ normalized and such that $\sigma\cdot \tilde k \, w_0 = -\cos(\tau) w_0$.
	Then $(w_0,  \ii e^{\ii \tau} w_0)^\top$ is a nonzero element of $\Lambda$. Furthermore, 
	\begin{align*}
		a(k,s) \begin{pmatrix}
			w_0 \\ \ii e^{\ii \tau} w_0
		\end{pmatrix} 
		&= 
		\begin{pmatrix}
			0 &  \ii \sin(\tau) + \sigma\cdot \tilde k  \\  \ii \sin(\tau) - \sigma\cdot \tilde k & 0
		\end{pmatrix} 
		\begin{pmatrix}
			w_0 \\ \ii e^{\ii \tau} w_0
		\end{pmatrix} \\
		&=
		\begin{pmatrix}
			\ii e^{\ii \tau}  (\ii \sin(\tau) - \cos(\tau)) w_0\\ (\ii \sin(\tau) + \cos(\tau))w_0
		\end{pmatrix} = - \ii \begin{pmatrix}
			w_0 \\ \ii e^{\ii \tau} w_0
		\end{pmatrix}
	\end{align*}

	Thus, the Shapiro-Lopatinski condition is not satisfied in this case.
\end{proof}

\subsection{\tps{d=3,4}{d34} and \tps{N=4}{N4}}
In the case $d=3$ we assume additionally that there is a chirality operator (For $\R^3$ with four component spinors this is true and for $d=4$ this is automatic as well). Since the Shapiro-Lopatinski condition only involves eigenspaces of matrices in $\C^2$, we can reduce the question about $Q$ (as in \eqref{eq_Q}) into a question about M\"obius transforms in the extended complex plane, as we will see below.\medskip 

For the three-dimensional case, we recall that $\calS^+|_{\partial M}$ is a bundle of Clifford modules  with Clifford multiplication 
\[ R\colon T^*\partial M \to \End {\calS^+}|_{\partial M},\  k\mapsto R_k=\ii C_\nu^*C_k,\] see Lemma~\ref{lem_CliffS+}. Since $\partial M$ is two dimensional, there is an associated chirality operator $\beta_+ \in \End{\calS^+}|_{\partial M}$. 

\begin{proposition} \label{prop:SL_in_d=3-general}
    In the case $d=3$, $N=4$, assume that $\calS$ admits a chirality operator and take $\beta_+$ a chirality operator associated to the Clifford module $\calS^+|_{\partial M}$. A self-adjoint local smooth boundary condition satisfies Shapiro-Lopatinski  at $s\in \partial M$ if and only if it is of one of the following three forms:
    Either
    \begin{equation}
        \label{eq:SL_in_d=3-caseA}
\Lambda_s = \left\{ ( w, \ii C_\nu A_s w)^{\top } \, \middle|\,   w \in \calS^+|_{\partial M} \right\}, \text{ or }
\Lambda_s = \left\{ ( A_s w, \ii C_\nu  w)^{\top } \, \middle|\,  w \in \calS^+|_{\partial M} \right\}
    \end{equation}
    for some non-degenerate Hermitian matrix
    $A_s = \sfa_s  I + \sfd_s \beta_+ + R_{t_s} $ with $t_s \in T_s^*(\partial M)$, $\sfa_s, \sfd_s \in \R$ and such that $|t_s| < |\sfa_s| $,
   or 
     \begin{equation}
            \label{eq:SL_in_d=3-caseB}
\Lambda_s = E_+(b_s \beta_+ +R_{t_s}) \oplus C_\nu E_- (b_s \beta_+ +R_{t_s})
     \end{equation}
        for some vector $t_s \in  T_s(\partial M)$ and $b_s \in [-1,1] \setminus \{0\}$ such that $|t_s|^2 + b_s^2 =1$. 
  \end{proposition}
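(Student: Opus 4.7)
My plan is to apply Propositions~\ref{prop:BC_split} and~\ref{prop:LS-for-split} fiberwise at a fixed $s\in\partial M$ and reduce everything to linear algebra on $\calS^+_s\cong\C^2$. By Proposition~\ref{prop:BC_split}, any self-adjoint local smooth boundary condition is encoded by a unitary $\widetilde f(s)\in U(\calS^+_s)\cong U(2)$. The induced Clifford action $R_k=\ii C_\nu^*C_k$ from Lemma~\ref{lem_CliffS+} makes $\calS^+|_{\partial M}$ into a two-dimensional Clifford module; since $\beta_+$ anticommutes with each $R_\xi$ and squares to the identity, in a local orthonormal frame one can identify $\{R_{e_1},R_{e_2},\beta_+\}$ with the Pauli matrices $\{\sigma_1,\sigma_2,\sigma_3\}$. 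Under this identification, as $\xi$ runs over the unit circle of $T_s^*\partial M$, the eigenlines $G_{\pm1}(s,\xi)$ sweep out the equator of $\C P^1\cong S^2$, while $E_{\pm}(\beta_+)$ are the two poles.

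I would then split cases according to the spectrum of $\widetilde f(s)$. If $-1\notin\sigma(\widetilde f(s))$ then $I+\widetilde f(s)$ is invertible, and substituting $u=(I+\widetilde f(s))w$ in~\eqref{eq:BC-split} gives $\Lambda_s=\{(u,\ii C_\nu A_s u)\mid u\in\calS^+_s\}$ where $A_s=-\ii(I+\widetilde f(s))^{-1}(I-\widetilde f(s))$ is the Cayley transform of $\widetilde f(s)$ and is Hermitian. Expanding $A_s$ in the real basis $\{I,\beta_+,R_{e_1},R_{e_2}\}$ of Hermitian $2\times 2$-matrices produces the representation $A_s=\sfa_s I+\sfd_s\beta_++R_{t_s}$. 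Symmetrically, if $+1\notin\sigma(\widetilde f(s))$, substituting $u=(I-\widetilde f(s))w$ yields the second form $\Lambda_s=\{(A_s w,\ii C_\nu w)\mid w\in\calS^+_s\}$ with Hermitian $A_s$; when neither $\pm 1$ is an eigenvalue of $\widetilde f(s)$ the two forms are equivalent via $A\mapsto A^{-1}$. The remaining case $\sigma(\widetilde f(s))=\{+1,-1\}$ forces $\widetilde f(s)$ to be a traceless Hermitian involution on $\C^2$, equivalently a unit vector in $\R^3$; this writes as $\widetilde f(s)=b_s\beta_++R_{t_s}$ with $|t_s|^2+b_s^2=1$, producing~\eqref{eq:SL_in_d=3-caseB}.

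The core computation is the translation of the Shapiro--Lopatinski criterion into the stated inequalities. For the first form of~\eqref{eq:SL_in_d=3-caseA}, a nonzero vector in the intersection $\Lambda_s\cap(G_{+1}(s,\xi)\oplus G_{-1}(s,\xi))$ amounts to a nonzero $v\in G_{+1}(s,\xi)$ with $A_sv\in G_{-1}(s,\xi)$. Using $R_\xi v=v$, the anticommutation $R_\xi\beta_+=-\beta_+R_\xi$, and the Clifford identity $R_\xi R_{t_s}+R_{t_s}R_\xi=2g(\xi,t_s)\Id$ from~\eqref{eq:chiral-anticommutation}, this reduces to the scalar equation $g(\xi,t_s)=-\sfa_s$. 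Since $g(\xi,t_s)$ sweeps $[-|t_s|,|t_s|]$ as $\xi$ varies over the unit circle in $T_s^*\partial M$, such a $\xi$ exists iff $|\sfa_s|\le|t_s|$, so SL holds at $s$ iff $|t_s|<|\sfa_s|$. The analogous computation with $v\in G_{-1}(s,\xi)$ handles the second form. For~\eqref{eq:SL_in_d=3-caseB}, the condition~\eqref{eq:LS-for-split-caseB} of Proposition~\ref{prop:LS-for-split} becomes $E_\pm(H)\neq G_{\pm1}(s,\xi)$ for all unit $\xi$, with $H=b_s\beta_++R_{t_s}$; on $\C P^1\cong S^2$ the $G_{\pm1}(s,\xi)$ trace the equator, so $E_\pm(H)$ avoid the equator iff $H$ has nonzero $\beta_+$-component, iff $b_s\neq 0$.

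The main obstacle I anticipate is the bookkeeping between the $\calS^+\oplus\calS^-$ picture in which $\Lambda_s$ is naturally defined and the $\calS^+\oplus\calS^+$ picture in which the Shapiro--Lopatinski criterion of Proposition~\ref{prop:LS-for-split} is cleanly expressed after absorbing $C_\nu$ into the second factor. Further care is required at the boundary of the cases: when exactly one of $\pm 1$ is an eigenvalue of $\widetilde f(s)$, the Cayley parametrization produces an $A_s$ of rank one, and one must check that the resulting boundary condition still fits into one of the two forms of~\eqref{eq:SL_in_d=3-caseA} with the genuine content being the inequality $|t_s|<|\sfa_s|$ rather than the invertibility of $A_s$.
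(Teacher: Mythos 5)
Your proposal is correct, but it takes a genuinely different route from the paper's proof. The paper organizes the case analysis by the eigenspaces $F_{\pm1}$, $F_\perp$ of $\widetilde f$ (via Proposition~\ref{prop:LS-for-split}), identifies lines in $\calS^+_s$ with points of the Riemann sphere, and decides when $QG_{+1}(s,\xi)\cap G_{-1}(s,\xi)\neq\{0\}$ by solving a fixed-point equation for a M\"obius transform (Lemma~\ref{lem:mobius}), which in turn uses the fact that the lines $G_{+1}(s,\xi)$ sweep out exactly the unit circle (Lemma~\ref{lem:topological_onto}). You instead split according to whether $\pm1$ lies in the spectrum of $\widetilde f(s)$, pass to the Cayley transform to obtain the two forms in \eqref{eq:SL_in_d=3-caseA} (and the involution form \eqref{eq:SL_in_d=3-caseB} when $\sigma(\widetilde f(s))=\{+1,-1\}$), and then verify Shapiro--Lopatinski by the direct Clifford computation: for $v\in G_{+1}(s,\xi)$ one has $(R_\xi+\Id)A_s v=2\bigl(\sfa_s+g(\xi,t_s)\bigr)v$, so a nontrivial intersection exists for some unit $\xi$ exactly when $|\sfa_s|\le|t_s|$; the analogous identity with $v\in G_{-1}(s,\xi)$ handles the second form. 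This bypasses both Lemma~\ref{lem:mobius} and Lemma~\ref{lem:topological_onto} (for \eqref{eq:SL_in_d=3-caseB} you only need that $E_\pm(b_s\beta_++R_{t_s})$ coincides with some $G_{\pm1}(s,\xi)$ precisely when $b_s=0$, not surjectivity onto the equator), and it makes the threshold $|t_s|=|\sfa_s|$ completely transparent; what the paper's heavier Grassmannian/M\"obius machinery buys is that it is reused almost verbatim for the $d=4$ case in Proposition~\ref{prop:SL_in_d=4-new}, where the topological lemma is genuinely needed. Your closing caveat about the degenerate cases is also consistent with the paper: its own proof produces a non-invertible $A$ when exactly one of $F_{\pm1}$ is nontrivial, so the operative constraint is indeed the inequality $|t_s|<|\sfa_s|$ rather than invertibility of $A_s$, despite the word ``non-degenerate'' in the statement.
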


The four-dimensional case is actually easier to state.
\begin{proposition} \label{prop:SL_in_d=4-new}
    In the case $d=4$, $N=4$, Shapiro-Lopatinski is satisfied at $s \in \partial M$ if the boundary condition takes the form
     $$
      \Lambda_s = \left\{  \begin{pmatrix}
   \ii A_s w \\  C_{\nu(s)} w
\end{pmatrix}   \, \middle|\, w \in \bbC^{2} \right\} \text{ or } \Lambda_s = \left\{  \begin{pmatrix}
   \ii  w \\  C_{\nu(s)} A_s w
\end{pmatrix}  \, \middle|\,  w \in \bbC^{2} \right\}
     $$
with $A_s$ Hermitian, $A_s \neq 0$, and $\det(   A_s) \ge 0 $. 
  \end{proposition}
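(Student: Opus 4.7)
The plan is to combine the chirality decomposition of Proposition~\ref{prop:BC_split} with the Shapiro-Lopatinski criterion of Proposition~\ref{prop:LS-for-split}, and to reduce everything to pointwise linear algebra on $\bbC^2$ at a fixed $s\in\partial M$. Since $\rank\,\calS^+ = N/2 = 2$, by Lemma~\ref{lem_CliffS+} the pair $(\calS^+|_{\partial M}, R)$ is a rank-$2$ Clifford bundle over the $3$-dimensional boundary, and in a local orthonormal coframe the three generators $R_{e_1}, R_{e_2}, R_{e_3}$ can be identified with (signed) Pauli matrices. Hence, for every unit $\xi\in T_s^*\partial M$ the operator $R_\xi$ is a Hermitian involution on $\bbC^2$, and its $\pm 1$-eigenlines $G_{\pm 1}(s,\xi)$ are complex lines. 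The key geometric input is the Hopf/Bloch surjection $S^2\twoheadrightarrow\bbC P^1$: as $\xi$ runs over the unit sphere in $T_s^*\partial M\cong\bbR^3$, the lines $G_{+1}(s,\xi)$ and $G_{-1}(s,\xi)$ each sweep out \emph{every} complex line in $\calS^+_s$.

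Next, I parametrize $\Lambda_s$ by a unitary $\widetilde f\in U(\calS^+_s)$ as in Proposition~\ref{prop:BC_split} and decompose $\calS^+_s = F_{+1}(s)\oplus F_{-1}(s)\oplus F_\perp(s)$ according to the spectrum of $\widetilde f$. The surjectivity above forces $F_{\pm 1}(s)=\{0\}$: if $F_{+1}(s)\neq\{0\}$, one can pick $\xi$ with $G_{+1}(s,\xi)=F_{+1}(s)$ to violate the necessary condition~\eqref{eq:LS-for-split-caseB}, and similarly for $F_{-1}$. Thus, under the Shapiro-Lopatinski hypothesis, both $\Id\pm\widetilde f$ are invertible and the Cayley transforms
\[
A_s \define -\ii(\Id+\widetilde f)(\Id-\widetilde f)^{-1}, \qquad A'_s \define -\ii(\Id-\widetilde f)(\Id+\widetilde f)^{-1} = -A_s^{-1}
\]
are Hermitian and invertible endomorphisms of $\calS^+_s$. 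The substitutions $w\mapsto(\Id-\widetilde f)w$ and $w\mapsto(\Id+\widetilde f)w$ then put $\Lambda_s$ into the first and second displayed normal form, respectively.

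The decisive step is to verify Shapiro-Lopatinski directly on, say, the first normal form. An element $(\ii A_s w, C_\nu w)\in \Lambda_s$ lies in $G_{+1}(s,\xi)\oplus C_\nu G_{-1}(s,\xi)$ exactly when $w\in G_{-1}(s,\xi)$ and $A_s w\in G_{+1}(s,\xi)=G_{-1}(s,\xi)^\perp$; equivalently, when $w\neq 0$ and $\langle w, A_s w\rangle = 0$. Using Bloch-sphere surjectivity once more, the quantifier over unit $\xi$ collapses to a quantifier over arbitrary $w\in\bbC^2\setminus\{0\}$, so SL holds iff the real quadratic form $w\mapsto\langle w,A_s w\rangle$ is sign-definite on $\bbC^2\setminus\{0\}$. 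For a Hermitian $2\times 2$ matrix this is precisely $A_s\neq 0$ together with $\det A_s > 0$, yielding the stated characterization; the same calculation with $A'_s$ in place of $A_s$ handles the second form.

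The principal obstacle is this Bloch-sphere surjectivity, which is what converts a quantifier over cotangent directions into one over complex lines in $\bbC^2$; once it is in hand, the remainder is Cayley-transform bookkeeping and $2\times 2$ spectral theory.
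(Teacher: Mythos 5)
Your argument is mathematically sound and is, in essence, the proof the paper intends: the paper's own ``proof'' of Proposition~\ref{prop:SL_in_d=4-new} is the single remark that the $d=4$ case goes in the same spirit as Proposition~\ref{prop:SL_in_d=3-general}, using part (i) of Lemma~\ref{lem:topological_onto} and part (ii) of Lemma~\ref{lem:mobius}. You use exactly the same reduction (Propositions~\ref{prop:BC_split} and~\ref{prop:LS-for-split}, plus surjectivity of $\xi\mapsto G_{+1}(s,\xi)$ onto all complex lines, which is Lemma~\ref{lem:topological_onto}(i)); the only real deviation is that you replace the M\"obius-transform computation of Lemma~\ref{lem:mobius}(ii) by the direct observation that $\Lambda_s\cap E_{-\ii}(a(s,\xi))\neq\{0\}$ for some unit $\xi$ if and only if the Hermitian form $w\mapsto\langle A_s w,w\rangle$ vanishes at some $w\neq 0$, i.e.\ if and only if $A_s$ fails to be definite. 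That substitution is a legitimate and slightly more elementary route to the same dichotomy. (Small fix: when $F_{+1}(s)$ is two-dimensional you cannot choose $\xi$ with $G_{+1}(s,\xi)=F_{+1}(s)$; choose $\xi$ with $G_{+1}(s,\xi)\subset F_{+1}(s)$, which still violates \eqref{eq:LS-for-split-caseB}.)

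The one point you should not gloss over is your final claim that this ``yields the stated characterization.'' What you actually prove is that Shapiro--Lopatinski holds if and only if $\det A_s>0$ (equivalently, $A_s$ is definite), whereas the proposition as printed allows $A_s\neq 0$ with $\det A_s=0$. Your own computation decides this borderline case against the printed inequality: if $A_s$ has rank one, take $0\neq w\in\Ker A_s$ and a unit $\xi$ with $G_{-1}(s,\xi)=\bbC w$ (it exists by Lemma~\ref{lem:topological_onto}(i)); then $(\ii A_s w, C_{\nu(s)}w)=(0,C_{\nu(s)}w)$ is a nonzero element of $\Lambda_s\cap E_{-\ii}(a(s,\xi))$, so the condition fails, and similarly for the second normal form. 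The strict inequality is also what the paper's own route gives, since Lemma~\ref{lem:mobius}(ii) presupposes $ad-|b|^2\neq 0$ and yields absence of solutions in the extended plane precisely when $|b|^2<ad$, i.e.\ $\det Q>0$. So your proof is correct, but you should state explicitly that the conclusion is $\det A_s>0$ and flag that the non-strict inequality in the printed statement cannot be taken literally, rather than asserting agreement. (Also note the proposition only claims the ``if'' direction; your derivation of the two normal forms from the Shapiro--Lopatinski hypothesis is extra, though it mirrors the paper's $d=3$ analysis and is harmless.)
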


The remainder of this section contains the proofs of both propositions. 
We start with the following topological lemma.
To state it, we define $\Gr_1(\calS^+_s)$ as the Grassmannian of one-dimensional subspaces of $\calS^+_s$. This manifold is homeomorphic to the sphere $\bbS^2$ or the Riemann sphere $\hat \bbC$. Indeed, after fixing a basis in $\calS^+_s$, each one-dimensional complex subspace is of the form 
\begin{equation}\label{eq:Grassmann_to_Riemann}
    G_\zeta = \{(z, \zeta z)| z \in \C\} \text{ for some } \zeta \in \hat{\bbC}, 
\end{equation}
with the abuse of notation $G_\infty = \{0\} \oplus \C$.  

\begin{lemma}\label{lem:topological_onto}
  Fix $s \in \partial M$ and consider the unit cotangent space $UT^*_s\partial M$. We consider the map $h\colon \xi\in  UT^*_s \partial M \mapsto G_{+1}(s, \xi) \in \Gr_1(\calS^+_s)$, with $G_{+1}(s,\xi)$ the positive eigenspace to $R_\xi$ as defined in \eqref{def_Gpm}.  Then:
  \begin{enumerate}
      \item If $d=4$ and $N=4$, $h$ is onto.
      \item If $d=3$ and $N=4$, take as a basis for $\calS^+_s$ the eigenspaces of the chirality operator $\beta_+|_{\calS^+_s}$ and use this basis to identify $G_{+1}(s,\xi)\in \Gr_1(\calS^+_s)$ with $\zeta \in \hat \bbC$. Then $h$ is onto $\bbS^1$. 
  \end{enumerate}
\end{lemma}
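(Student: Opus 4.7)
The plan is to exploit the fact that $\calS^+_s$ is complex two-dimensional, so for each unit $\xi$ the operator $R_\xi = \ii C_\nu^* C_\xi$ is a Hermitian involution on $\bbC^2$. I would first check that $R_\xi$ has simple eigenvalues $\pm 1$: Hermiticity and $R_\xi^2 = \Id$ come from Lemma~\ref{lem_CliffS+}, while for any unit $\eta \perp \xi$ (available since $\dim\partial M \ge 2$ in both cases), the Clifford anticommutation together with $R_\eta^2 = \Id$ gives $R_\eta R_\xi R_\eta = -R_\xi$, whence $\tr R_\xi = 0$. Thus $G_{+1}(s,\xi)$ is a well-defined point of $\Gr_1(\calS^+_s) \cong \bbC\bbP^1$ and $h$ is continuous.

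For part (1), the domain $UT^*_s\partial M$ is a $2$-sphere. Fixing an orthonormal basis $\xi_1,\xi_2,\xi_3$ of $T_s^*\partial M$, the three operators $R_{\xi_1},R_{\xi_2},R_{\xi_3}$ are mutually anticommuting traceless Hermitian involutions on $\bbC^2$, hence simultaneously unitarily equivalent to the Pauli matrices $\sigma_1,\sigma_2,\sigma_3$. In this basis $R_\xi = \xi\cdot\sigma$, and $h$ becomes the classical Bloch-sphere parametrization $\bbS^2 \to \bbC\bbP^1$ sending $\xi$ to the $+1$-eigenline of $\xi\cdot\sigma$; this is a homeomorphism (as one sees by writing the eigenvector in spherical coordinates), and in particular surjective.

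For part (2), I would work in the basis $(e_+,e_-)$ of $\beta_+$-eigenvectors on $\calS^+_s$. Since $\beta_+$ anticommutes with Clifford multiplication, each $R_\xi$ is off-diagonal in this basis; Hermiticity and $R_\xi^2 = \Id$ then force
\[
R_\xi = \begin{pmatrix} 0 & b_\xi \\ \overline{b_\xi} & 0 \end{pmatrix}, \qquad |b_\xi| = 1.
\]
A direct computation gives that the $+1$-eigenvector is proportional to $(1,\overline{b_\xi})^\top$, so under the identification \eqref{eq:Grassmann_to_Riemann} one obtains $\zeta = \overline{b_\xi} \in \bbS^1 \subset \hat\bbC$, showing $h(UT^*_s\partial M) \subset \bbS^1$.

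Surjectivity onto $\bbS^1$ then reduces to a short calculation: fix an orthonormal basis $\eta_1,\eta_2$ of $T_s^*\partial M$ and write $b_j \define b_{\eta_j}$; the anticommutation $R_{\eta_1} R_{\eta_2} + R_{\eta_2} R_{\eta_1} = 0$ gives $\Re(b_1\overline{b_2}) = 0$, hence $b_2 = \pm \ii\, b_1$, and linearity of $k \mapsto R_k$ then yields $b_{\xi(\theta)} = e^{\pm \ii \theta}\, b_1$ for $\xi(\theta) = \cos\theta\,\eta_1 + \sin\theta\,\eta_2$, so $\zeta(\theta) = \overline{b_{\xi(\theta)}}$ traces out all of $\bbS^1$ as $\theta$ varies in $[0,2\pi)$. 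The main conceptual step is recognizing the three $R_{\xi_j}$ as Pauli matrices in part (1); once this is in place, both assertions become short explicit calculations in $\bbC^2$.
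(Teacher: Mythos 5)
Your proof is correct, but it takes a genuinely different route from the paper. The paper's argument is topological: it first shows $h$ is injective via the norm identity $\norm{v}=\frac12\norm{(R_{k_1}R_{k_2}+R_{k_2}R_{k_1})v}=|g(k_1,k_2)|\,\norm{v}$, and then concludes surjectivity from the fact that an injective continuous map of $\bbS^2$ into $\bbS^2$ (resp. of $\bbS^1$ into $\bbS^1$, after using the off-diagonal form \eqref{eq:R_d3} to see the image lies in the circle) must be onto, invoking Borsuk--Ulam/compactness. You instead make everything explicit: in $d=4$ you bring the three anticommuting operators $R_{\xi_1},R_{\xi_2},R_{\xi_3}$ into Pauli form and identify $h$ with the Bloch-sphere map, and in $d=3$ you compute the phase $b_{\xi(\theta)}=e^{\pm\ii\theta}b_1$ directly, so surjectivity (indeed bijectivity) comes out of linear algebra with no topological input; this buys a completely elementary, self-contained argument, at the cost of a choice of basis, whereas the paper's injectivity-plus-topology argument is shorter and basis-free in the $d=4$ case. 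One small imprecision in your part (1): unitary conjugation acts on the span of the Pauli matrices through $SO(3)$, not $O(3)$, so three mutually anticommuting traceless Hermitian involutions on $\bbC^2$ are simultaneously unitarily equivalent to $(\sigma_1,\sigma_2,\sigma_3)$ only up to a possible sign on one of them (an orientation issue). This is harmless for your conclusion, since it merely replaces $R_\xi=\xi\cdot\sigma$ by $R_\xi=(A\xi)\cdot\sigma$ with $A\in O(3)$, and composing the Bloch map with an isometry of the sphere does not affect surjectivity, but the statement as written should be qualified.
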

\begin{proof} 
The proofs of both statements are similar. The key point is that $h$ is injective. Indeed, given $k_1, k_2 \in UT_s^*\partial M$ such that $h(k_1) = h(k_2)$, there is $v\neq 0$ such that 
$R_{k_1} v = R_{k_2} v $. But 
$$
\norm{v} = \frac12 \norm{(R_{k_1}R_{k_2}+ R_{k_2}R_{k_1})v } = |g(k_1,k_2)| \norm{v}. 
$$
Hence $k_1 = k_2$ or $k_1 = - k_2$. The latter is not possible since $R_{-k_1} = - R_{k_1}$.  

For the case $d= 4$, note that both $UT_s^*\partial M$ and $\Gr_1(\calS^+_s)$ are homeomorphic to  $\bbS^2$. 

We conclude that $h$ induces a  smooth injection from the two-sphere into itself. If $h$ was not onto, it would imply an embedding from the $2$-sphere in $\R^2$, which contradicts the Borsuk-Ulam theorem.

\medskip
The case $d=3$ is similar but depends on the choice of a basis. 
Luckily, a natural global basis exists since in this case $\partial M$ is two-dimensional and thus, $\calS^+|_{\partial M}$ admits a chirality operator $\beta_+$. 
With respect to this basis, the Clifford multiplication $R$ is off-diagonal, so we can write
\begin{align}\label{eq:R_d3}
R_{k} = \begin{pmatrix}
    0 & \theta_k^* \\ \theta_k & 0
\end{pmatrix} \text{with }\quad  \theta_k \in \bbS^1 \subset \C .
\end{align}
In this basis, we have $G_{+1}(s, k) = \{ (v, \theta_kv)\, |\, v\in \C\}$.
Since $UT_s^*\partial M$ is also  diffeomorphic to $\bbS^1$, we conclude that $h$ induces a  smooth injection from the circle into itself, which implies, again that it is a homeomorphism. 
\end{proof}

The next ingredient is the following fact about M\"{o}bius transforms.
\begin{lemma} \label{lem:mobius}
    Consider the inversion $\calR\colon \zeta\in \hat{\mathbb C} \define \mathbb C\cup \{\infty\} \mapsto -1/\bar\zeta\in \hat{\mathbb C}$ and the M\"obius transform 
    $$
M(\zeta) = \frac{a \zeta + b}{b^*\zeta + d}, \quad a,d \in \R, ad-|b|^2 \neq 0.
    $$
The equation $M(\zeta) = \calR(\zeta)$ has solutions
\begin{enumerate}
    \item on the unit circle if and only if $|b|^2 \ge ((d+a)/2)^2$.
    \item in the extended complex plane $\hat{\mathbb C}$ if and only if $|b|^2 > a d$.
\end{enumerate}
\end{lemma}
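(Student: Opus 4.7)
The plan is to convert $M(\zeta)=\calR(\zeta)$ into a single real equation in $\zeta\in\C$, and then dispatch parts (1) and (2) separately via elementary trigonometry and completing the square. For $\zeta\in\C\setminus\{0\}$ avoiding the pole of $M$, cross-multiplying by $\bar\zeta(b^*\zeta+d)$ rewrites the equation as the equivalent real identity
\begin{equation*}
a|\zeta|^{2} + 2\Re(b^*\zeta) + d = 0.
\end{equation*}
The hypothesis $ad-|b|^{2}\neq 0$ will be used repeatedly to rule out degenerate cases, and in particular to ensure that the values excluded by cross-multiplication (namely $\zeta=0$, the pole of $M$, and $\zeta=\infty$) contribute no new solutions beyond what this real equation detects.

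For part (1), I would restrict to $|\zeta|=1$, where the equation reduces to $2\Re(b^*\zeta)=-(a+d)$. Parametrizing $\zeta=e^{\ii\theta}$ and writing $b=|b|e^{\ii\phi}$ converts this into $2|b|\cos(\theta-\phi)=-(a+d)$, which is solvable in $\theta$ precisely when $|a+d|\le 2|b|$, i.e.\ when $((a+d)/2)^{2}\le |b|^{2}$.

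For part (2), when $a\neq 0$, multiplying the real equation by $a$ and completing the square yields $|a\zeta+b|^{2}=|b|^{2}-ad$, which is solvable in $\C$ iff $|b|^{2}-ad\ge 0$; the standing hypothesis upgrades this to $|b|^{2}>ad$. When $a=0$, the hypothesis forces $b\neq 0$, the equation defines an affine line in $\C$ so is always solvable, and the inequality $|b|^{2}>0=ad$ holds automatically. It remains to verify that including $\infty\in\hat{\bbC}$ does not alter the criterion: using $\calR(\infty)=0$ and $M(\infty)=a/b^{*}$ (or $\infty$ when $b=0$), one sees that $\infty$ is a solution iff $a=0$ with $b\neq 0$, a case already handled among the finite solutions.

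The hard part, though essentially combinatorial, is the bookkeeping of the values excluded by cross-multiplication together with the behaviour at $\infty\in\hat{\bbC}$; once one checks that the standing assumption $ad-|b|^{2}\neq 0$ prevents any degenerate overlap between these cases, both stated criteria fall out immediately from the analysis of the single real equation above.
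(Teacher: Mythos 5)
Your proposal is correct, and it takes a genuinely different route from the paper. The paper first reduces to the case $b\in\R$ by the substitution $\zeta=(b/|b|)z$, then for (1) studies the quadratic $bz^2+(a+d)z+b=0$ obtained from $\calR(z)=-z$ on the circle (using that the product of the roots is $1$ to decide whether they lie on the unit circle), and for (2) uses that $M$ preserves $\R\cup\{\infty\}$ and is an automorphism of the upper half-plane when $ad-b^2>0$, so that solutions can only lie on the real axis, where they reduce to the real quadratic $at^2+2bt+d=0$. You instead collapse the whole equation into the single real identity $a|\zeta|^2+2\Re(b^*\zeta)+d=0$ (the circle/line attached to the Hermitian matrix $\bigl(\begin{smallmatrix} a & b\\ b^* & d\end{smallmatrix}\bigr)$), from which (1) follows by the trigonometric solvability criterion $|a+d|\le 2|b|$ and (2) by completing the square to $|a\zeta+b|^2=|b|^2-ad$ (with the easy line case $a=0$). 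This avoids both the rotation reduction and the half-plane automorphism argument, and is arguably more elementary and more transparent; its only cost is the bookkeeping at $\zeta=0$, at the pole of $M$, and at $\infty$, which you correctly flag and which does check out: $\zeta=0$ solves $M=\calR$ exactly when $d=0$, which is exactly when it solves the real equation; at a nonzero pole $-d/b^*$ the real equation evaluates to $d(ad-|b|^2)/|b|^2\neq0$ by the standing hypothesis, so no spurious circle-solutions occur there; and $\infty$ is a solution only when $a=0$, $b\neq0$, a case in which the real equation already has finite solutions and $|b|^2>ad$ holds, so neither criterion is affected. Both your solvability thresholds, including the equality case $|b|=|a+d|/2$ in (1) and the exclusion of $|b|^2=ad$ in (2) via the hypothesis, agree with the statement.
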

\begin{proof}
    We first reduce the question to the case $b \in \R$: If $b=0$, this is true, else, consider $\zeta = b/ |b| z$.
    Then 
    $$\calR(\zeta) = \frac{b}{|b|} \calR(z), \quad \text{ and } M(\zeta) = \frac{b}{|b|}\frac{a z + |b|}{|b|z + d }.$$
    Hence, if $\zeta$ is a solution in $\hat \bbC$ (resp. on the unit circle), then $z$ is a solution of the equation with $b$ replaced by $|b|$ in $\hat \bbC$ (resp. on the unit circle).

    For the first point, if $z$ is a solution on the unit circle, then $\calR(z) = -z$. 
    The equation simply becomes 
    $$
    a z + b =-z(b z +d); \mathrm{\ i.e.\ }  b z^2 + (a+d) z + b = 0.
    $$
    
    If $b$ is zero and $a\neq -d$, the only root is zero which is not on the unit circle. 
   If $b$ is zero and $a=-d$, then the equation is always fulfilled. 
    Let now $b\neq 0$: If $(d+a)^2 - 4 b^2 \le 0 $, there is a pair of complex conjugate roots $z_0, \bar z_0$ (since the coefficients of the quartic are real). We find that $z_0 + \bar z_0 = (a+d)/b$, $z_0 \bar z_0 = 1$. Hence, $z_0$ lies on the unit circle in this case.
On the other hand, if $(a+d)^2 - 4 b^2 >  0$, the roots $z_1, z_2$ are real and distinct. Since also $z_1 z_2 = 1$, they do not lie on the unit circle.

    For the second point, we observe that $M$ maps the real line to itself. If $ad - b^2 \le  0$, there are solutions on the real axis: Indeed, for real numbers $t$ we find $\calR(t) = -1/t$, and the equation becomes
       $$
       a t^2 + 2b t + d = 0.
    $$
    On the other hand, if $ad - b^2 > 0$, $M$ is an automorphism of the upper half plane. Hence, it does not map points in the upper half plane to their antipodal points in the lower halfplane, and vice versa. The only potential solutions are on the real axis. But those solutions exist only if $ad - b^2 \le 0$,
\end{proof}

With this in place, we can characterize the regular self-adjoint local smooth boundary conditions for the cases $N = 4$, $d=3,4$.
\begin{proof}[Proof of Proposition~\ref{prop:SL_in_d=3-general}]
    As before, we fix $s \in \partial M$ and decompose $\calS^+|_{\partial M } $ in the eigenspaces of the chirality operator $\beta_+$. In view of Proposition~\ref{prop:LS-for-split}, we have to study the eigenspaces of the unitary transformation $f \colon \calS_s^+ \mapsto \calS_s^+$. First, assume that $F_{-1}(s) = F_{+1}(s) = \{0\}$. Then, by \eqref{eq_decomp_Lambda} the boundary condition takes the form 
    $$
    \Lambda_s = \{(w, \ii C_{\nu(s)} Q w)\ |\ w \in \calS^+_s\}. 
    $$
    The Shapiro-Lopatinski condition is then satisfied if for all $\xi \in T^*_s\partial M$, $Q G_{+1}(s,\xi) \cap G_{-1}(s,\xi) = \{0\}$. 
We use the correspondence between $\Gr_1(\calS^+_s)$ to $\hat \C$ given by \eqref{eq:Grassmann_to_Riemann}.
The action of $Q= \left(\begin{smallmatrix}
    a & b \\ b^* & d
\end{smallmatrix}\right)$ on the one-dimensional subspaces $G_{+ 1}(s,\xi)$ corresponds to the M\"{o}bius transform $\zeta \mapsto (a \zeta + b)/(d \zeta + b^*)$. On the other hand, if $\zeta(k)$ corresponds to $G_{+1}(s,k)$, $\calR(\zeta(k))$ corresponds to the orthogonal subspace $G_{-1}(s,k)$. 
The intersection $Q G_{+1}(s,k) \cap G_{-1}(s,k)$ is  nontrivial for some choice of $k$ if and only if both subspaces coincide for this $k$, hence
the corresponding points $\zeta(k) \in \hat \bbC$ satisfy
 the equation $M(\zeta(k)) = \mathcal{R}(\zeta(k))$.

  By Lemma~\ref{lem:topological_onto}, $\zeta(k)$ span exactly the unit circle. Hence, the first point of Lemma~\ref{lem:mobius} guarantees that the Shapiro-Lopatinski condition is satisfied if and only if $|b| < |d+a|/2$. 
Defining $(a+d)/2 =: \sfa_s$ and $(a-d)/2 =: \sf d_s$ and noting that (cp. \eqref{eq:R_d3})
$$
\begin{pmatrix}
    0 & b \\ b^* & 0 
\end{pmatrix} = R_{t_s} \text{ for some } t_s \in T^*\partial M \text{ with } |t_s| = |b|
$$
allows to write
$$
Q = \sfa_s \Id + \sfd_s \beta_+ + R_{t_s}.
$$
\medskip
This case corresponds to \eqref{eq:SL_in_d=3-caseA} in Proposition~\ref{prop:SL_in_d=3-general} where $A=Q$ is invertible.\medskip 

It remains to analyse the cases where $F_{+1}$ and/or $F_{-1}$ are nontrivial. If this is the case, $F_\perp$ is at most one-dimensional, so $Q$ is multiplication by a number and condition \eqref{eq:LS-for-split-caseA} is always satisfied.

We have to check that $F_{\pm 1}$ intersect $G_{\pm 1}(s,k)$ in a trivial way, for all choices of $k$ in the unit tangent space. Defining, as previously, $\zeta(k)$ to be the point on the unit circle corresponding to $G_{+ 1}(s,k)$, $\zeta(k)$ span the unit circle as $k$ covers the unit circle in the tangent plane. Therefore for this it is necessary that $F_{\pm 1}$ corresponds to some  $G(\zeta)$ with $|\zeta| \neq 1$.
The projection from $\C^2$ to $G(\zeta)$ is given by 
$$
\frac{1}{1+ |\zeta|^2}
\begin{pmatrix}
    1 & \zeta^* \\ \zeta & |\zeta^2|
\end{pmatrix} =\frac{1}{2} \Id + \frac{1 - |\zeta|^2}{2(1 + |\zeta|^2)} \beta_+  +\frac{1}{2}R_{t_s} . 
$$
where, as before, we have chosen some $t_s$ in the tangent space such that $  R_{t_s}/2 $ equals the off-diagonal part of the previous matrix, such that 
$|t_s| = 2 |\zeta|/(1 + |\zeta|^2)$.  
If we define $b_s =\frac{1 - |\zeta|^2}{(1 + |\zeta|^2)} $, this means that $F_{\pm 1}$ equals $E_{+} (b_s \beta_+ + R_{t_s})$.

If $F_{+1}$ and $F_{-1}$ are both nontrivial, this means that $(1 \pm f)$ are, up to a factor $2$, projections on $F_{\pm 1}$. Thus, 
\begin{equation*}
    \Lambda_s = F_{+1} \oplus C_\nu F_{-1} = E_{+}(b_s \beta_+ + R_{t_s}) \oplus C_\nu E_-(b_s \beta_+ + R_{t_s}).
\end{equation*}
Returning to the definitions of $b_s$ and $t_s$, note that 
$b_s = \frac{1 - |\zeta|^2}{(1 + |\zeta|^2)} \in [-1,1] \setminus \{0\} $ and that $b_s^2 + |t_s|^2 = 1$. This settles the case \eqref{eq:SL_in_d=3-caseB}.

Now, consider the case where $F_{-1}$ and $F_\perp$ are both one-dimensional, we find
\begin{equation*}
    \Lambda_s = \left\{ \begin{pmatrix} w \\ C_\nu q w
            \end{pmatrix} + \begin{pmatrix} v \\ 0
            \end{pmatrix}
     \middle|\ w \in F_\perp, v \in E_+(b_s \beta_+ + R_{t_s}) \right\}
 \end{equation*}
 with $q \in \R \setminus {0}$ the value of $Q$ on $F_\perp$.
In this case, $F_\perp$ is the orthogonal space to $E_-(b_s \beta_+ + R_{t_s})$. Note that in this case, the necessary conditions from Proposition~\ref{prop:LS-for-split} are also sufficient. If $w +v\in G_{+1}(s,k)$ and $q w \in G_{-1}(s,k)$, in particular they are orthogonal. This means that $0 = \inner{w + v}{w}= |w|^2$ and therefore, $w = 0$.
 
Now, we define the matrix $A$ as $q$ times the projection on $E_-(b_s \beta_+ + R_{t_s})$, i.e.,
$$
A= \frac{q}{2} (\Id - b_s \beta_+ - R_{t_s}) 
$$
we find 
\begin{equation*}
    \Lambda_s = \left\{ \begin{pmatrix} w \\ C_\nu A w
            \end{pmatrix} 
     \middle| w \in \calS^+_s \right\}.
 \end{equation*}
We can define $\sfa_s = q/2$, $\sfd_s = -q b_s/2 $ and replace $t_s$ by $- q t_s/2$ to bring this matrix in the form~\eqref{eq:SL_in_d=3-caseA} where $A$ is not invertible.
The  final remaining case where $F_{+1}$ and $F_\perp$ are one-dimensional is completely analogous.
\end{proof}

The proof of Proposition~\ref{prop:SL_in_d=4-new} is similar but easier (using (ii) of Lemma~\ref{lem:mobius}).


\section{Transmission conditions}\label{sec:transm}
Transmission boundary conditions allow to define Dirac operators 
on a manifold divided in two parts by a hypersurface. We first introduce the necessary notation.
Let $(M,g)$ be a Riemannian closed oriented manifold. Let $\Sigma\subset M$ be a closed 
hypersurface in $M$  such that $M\setminus \Sigma = M_1\sqcup M_2$  (disjoint union).
Then $\overline{M_i}\subset M$ for $i=1,2$ is a compact Riemannian manifold with boundary $\Sigma$.

Let $\calS$ be a Clifford bundle over $M$ with associated Dirac operator $D$. We consider $\widehat{M} =\overline{M_1}\sqcup \overline{M_2}$ with the induced metric and Clifford bundle $\widehat{\calS}= \calS|_{\overline{M_1}} \sqcup \calS|_{\overline{M_2}}$. Then, $\partial \widehat{M} = \partial \overline{M_1} \sqcup \partial \overline{M_2} \cong \Sigma \sqcup \Sigma$. Let $\iota_i\colon \Sigma \to \overline{M}_i\subset \widehat{M}$ be the corresponding identifications.

\begin{definition} \label{def:trans_BC} In the situation described above we call a smooth subbundle $\Lambda\subset \calS|_{\Sigma}\oplus \calS|_{\Sigma}$ a \emph{transmission boundary condition} if there are 
$B_i \in \mathrm{End} (\calS|_{\Sigma})$, $i=1,2$, 
with \[\Lambda_s = \{ (B_1(s) \phi, B_2(s) \phi)  \ |\   \phi \in \calS_s\}\] 
for all $s\in  \Sigma$. The associated Dirac operator $D_\Lambda$  is the one with the domain
\[ \mathrm{dom}\, D_\Lambda = \overline{\{ \psi\in \calC(\widehat{\calS})\ |\ (\psi(\iota_1(s)), \psi(\iota_2(s))\in \Lambda_s, \forall s\in \Sigma\}}^{\Vert .\Vert_D}.\]
\end{definition}

\begin{example}\label{ex_transmissI}
One example is $B_1=B_2= \Id$ which is the most standard transmission condition, see e.g. \cite[Example~7.2.8]{BB12}. 
 \end{example}

 The term of transmission boundary condition has a special meaning in the setting of Dirac operators with singular potentials. 
 In the next example we will review this notion and see that this fits into 
 our Definition~\ref{def:trans_BC}.
 
 \begin{example} \label{ex_transmissII} 
Transmission boundary conditions are models for Dirac operators in $\R^d$ with a singular potential supported on a $(d-1)$-dimensional compact submanifold $\Sigma$, such that $\mathbb R^d\setminus \Sigma=\Omega^+ \sqcup \Omega^-$ ($\Omega^+$ the bounded interior of $\Sigma$). 
This setting has a long history in the mathematical physics literature, going back to \cite{DES89} for the case where $\Sigma$ is the sphere.
For a pedagogical survey on the case $d=3$ with the simplest (electrostatic) $\delta$-shell interaction, see \cite{ourmieres2021survey}. 
We refer to \cite{rabinovich2021-two-d,cassano2022RMI} and \cite{Berndt2024triples} for the analysis of general delta-shell potentials in dimensions $d=2$ and $d=3$, respectively, while \cite{rabinovich2022-delta-Rn} considers the $n$-dimensional case.

To make the correspondence with our framework, we take $M_1= \Omega^+$, $M_2 =\Omega^-$, 
define $\widehat{M}=\overline{M_1}\sqcup \overline{M_2}$ as their disjoint union, 
and let $\calS$ be the trivial spinor bundle on these manifolds. 
In this case $\Omega^-$ is unbounded, but this does not cause any problems in the Euclidean case: a straightforward localization argument suffices to show that the regularity of a boundary condition
only depends on the behavior near the compact boundary $\Sigma$.
For a potential $V \in C^\infty (\Sigma, \End{\calS|_{\Sigma}})$, the boundary condition corresponding formally to $D_\R + V \delta_\Sigma$ is the one ensuring that the jump of the normal derivative across the boundary cancels the singularity $V \delta_\Sigma$. It reads
\begin{equation*}
   (u_+, u_-) \in \Lambda  \iff  -\ii c_\nu (u_- - u_+) = \frac{1}{2} V (u_+ + u_-),
\end{equation*}
where we take $\nu$ as the outward normal corresponding to $\Omega^+$.
The most studied delta-shell potentials are the electrostatic case $V = \tau \Id$ and the Lorentz-Scalar case $V = \lambda \beta$, for $\beta$ the chirality operator that exists in $d=2, N=2$ and $d=3, N=4$. The general version studied in \cite{cassano2022RMI, Berndt2024triples} can be parametrized as 
\begin{equation} \label{eq:delta-potential-general}
    V= \eta \Id + \tau \beta + \omega c_\nu + \ii \lambda c_\nu \beta, \quad \eta, \tau, \omega, \lambda \in \R.
\end{equation}

In order to bring these boundary conditions into the framework of Definition~\ref{def:trans_BC} we define
$A_\pm \define \pm \ii c_\nu - V/2$, so that $(u_+, u_-) \in \Lambda$ if and only if $A_+ u_+ + A_- u_- = 0$. 
In the case where $A_\pm$ are invertible, we can choose $B_1 =  - A_+^{-1}$, $B_2 = A_-^{-1}$. In the non-invertible case with $V$ as in \eqref{eq:delta-potential-general},  at each point $p$ of the boundary $\mathrm{im}\, A_+ \cap \mathrm{im}\, A_-=\{0\}$ (see Example~\ref{ex:transm2}). Hence, we can choose $B_1$ the orthogonal projection onto $\ker A_+$ and $B_2$ the orthogonal projection onto $\ker A_-$.

\end{example}

Back to the general setting of this section,
transmission boundary conditions are not, strictly speaking, local boundary conditions 
in the sense of Definition~\ref{def:local_BC}.
Still, our previous analysis applies to this case by first reducing it to the special case where $M_1 = M_2$. The general case then follows, since the properties of the boundary condition only depend on the geometric data on the boundary and (the Clifford multiplication) of its normal vector.

This results in the following theorem.

\begin{theorem}\label{thm:boundary_transmission}
In the situation described at the beginning of this section, 
let $\Lambda$ be a transmission boundary condition defined via $B_i \in \mathrm{End} (\calS|_{\Sigma})$
and let $\nu$ be a smooth unit normal field along $\Sigma$. Then, the associated Dirac operator $D_\Lambda$ is
\begin{enumerate}[(i)]
\item symmetric if and only if $B_1^\dagger c_\nu B_1 =B_2^\dagger c_\nu B_2$ 
\item self-adjoint if and only if $B_1^\dagger c_\nu B_1 =B_2^\dagger c_\nu B_2$ and $\Ker(B_1)\cap \Ker(B_2) = \{0\}$
\item (strongly) regular if and only if for all $s\in \Sigma$ and $k\in UT^*_s\Sigma$
\[ \mathrm{image} (c_{\nu(s)} B_1(s), B_2(s)) \cap (E_{+\ii}(a(s,k)) \oplus E_{+\ii}(a(s,k)))=\{0\},\]
where $(c_\nu B_1, B_2)$ is meant as element in $\mathrm{Hom}\, (\calS|_{\Sigma}, \calS|_\Sigma \oplus \calS|_\Sigma)$
and $a(s,k)$ is the principal symbol of the operator $A_0$ as in \eqref{eq:def_At}.
If $B_1, B_2$ are full rank, this is equivalent to 
\begin{equation}
    \label{eq:VK-transmission_alt}\Ker(a(s,k)(B_2(s) - B_1(s)) -\ii (B_2(s)+B_1(s))) = \{0\},
\end{equation}
and also  equivalent to 
\begin{equation}
    \label{eq:VK-transmission_alt2}
B_2B_1^{-1} E_{-\ii} (a(s,k)) \cap E_{+\ii}(a(s,k))=\{0\}.
\end{equation}
\end{enumerate}
\end{theorem}

\begin{proof}
We start with a general construction that will be mapped to a special case of the theorem: Let $\calS$ be a Clifford bundle over some compact Riemannian manifold $\widetilde{M}$ with boundary and with Dirac operator $D$. We consider $\hat{\calS}=\calS\oplus \calS$. We use on $\hat{\calS}$ the induced fibrewise hermitian metric and the induced connection. If $c_w$ denotes the Clifford multiplication on $\calS$, we set $\tilde{c}_w= \mathrm{diag} (c_w, -c_w)$ as the Clifford multiplication on $\hat{\calS}$. Then $\hat{\calS}$ is again a Clifford bundle over $\calS$ (with twice the rank of $\calS$).  The  associated Dirac operator will be denoted by $\widetilde{D}$ and equals $(D,-D)$.  Note that $\calS\oplus \calS$ is not the splitting from a chirality operator on $\hat{\calS}$, since Clifford multiplication does not interchange the summands. Let $B_i\in \mathrm{End}(\calS|_{\partial \widetilde{M}})$, $i=1,2$. Then $\Lambda= \{ (B_1\phi, B_2\phi)\in \hat{\calS}|_{\partial \widetilde{M}}\ |\ \phi\in \calS|_{\partial \widetilde{M}}\}\subset \hat{\calS}|_{\partial \widetilde{M}}$ defines a smooth subbundle of $\hat{\calS}|_{\partial \widetilde{M}}$. Moreover, it is symmetric if and only if
\[ 0= \int_{\partial \widetilde{M}} \langle \tilde{c}_\nu \binom{B_1\phi}{B_2\phi}, \binom{B_1\psi}{B_2\psi} \rangle = \int_{\partial \widetilde{M}} \langle B_1^\dagger c_\nu B_1\phi , \psi\rangle - \langle  B_2^\dagger c_\nu 
B_2\phi, \psi\rangle \]
for all $\phi, \psi\in \mathcal{C}(\calS|_{\partial \widetilde{M}})$. This is true if and only if $B_1^\dagger c_\nu B_1 =B_2^\dagger c_\nu B_2$. In this case, if additionally $\mathrm{rank}\, \Lambda = \mathrm{rank}\, \calS$, $D_\Lambda$ is self-adjoint by Theorem~\ref{thm:local}. This holds if and only if $\Ker(B_1) \cap \Ker (B_2) = \{0\}$.
\medskip 

Let $-\ii c_\nu ( \partial_t + A_t)$ be the decomposition of $D$ near the boundary as in 
\eqref{eq:def_At}, and let $a(s, k)\in \mathrm{End} (\calS_s)$ be the principal symbol of $A_0$. Then 
the one of $\widetilde{D}$ is  $-\ii\, \mathrm{diag} ( c_\nu ( \partial_t + A_t), -c_\nu ( \partial_t - 
A_t))$. Hence, the induced Dirac-type operator on the boundary is $\mathrm{diag} (A_0, -A_0)$ 
with principal symbol $\widetilde{a}(s, k)= \mathrm{diag} (a(s,k), -a(s,k))$. Thus, $E_{-\ii}(\widetilde{a}(s,k))= E_{-\ii}({a}(s,k))\oplus E_{\ii}(a(s,k))$. Since $\{c_\nu, a(s,k)\}=0$, we have $c_\nu\colon E_{-\ii}({a}(s,k))\to E_{\ii}(a(s,k))$. Hence, the Shapiro-Lopatinski condition translates to $ \mathrm{im} (c_\nu B_1, B_2) \cap E_{\ii}(a(s,k)) \oplus E_{\ii}(a(s,k))=\{0\}$.\medskip 

If $B_1$ and $B_2$ are full rank, 
this condition is equivalent to $w=0$ being the only solution of the system
$$
\begin{cases}
    a(s,k) c_\nu B_1 w = \ii  c_\nu B_1 w \\
      a(s,k) B_2 w = \ii B_2 w.
\end{cases}
$$
Using $a(s,k) c_\nu = - c_\nu a(s,k)$ (recall that $a(s, k) = c_\nu c_k$) and summing and subtracting the equations gives
$$
\begin{cases}
    a(s,k)(B_2-B_1) w = \ii ( B_1 + B_2) w \\
     a(s,k)(B_1+ B_2) w = \ii ( B_2 - B_1) w.
\end{cases}
$$
Since $a(s,k)^2 = -1$, this shows that the second equation follows from the first one and gives condition \eqref{eq:VK-transmission_alt}.\medskip 

Solving the first equation is equivalent to solving $(a(s,k)+\ii)u= (a(s,k)-\ii) B_2B_1^{-1}u$. Since $a(s,k)^2=-\Id$ and $a(s,k)^\dagger = -a(s,k)$,  we have $\mathrm{image} (a(s,k)+\ii)\cap \mathrm{image} (a(s,k)-\ii) = \{0\}$. Thus, there is no solution, and hence we have regularity, if and only if $B_2B_1^{-1} E_{-\ii} (a(s,k)) \cap E_{+\ii}(a(s,k))=\{0\}$.\medskip 

The situation that we have just considered can be mapped to a special case of the theorem where $\Sigma$ divides $M$ into two isometric parts $M_1$ and $M_2$ such that there is an orientation reversing isometry from $\overline{M_1}\to \overline{M_2}$ that  lifts to the Clifford bundle.  \medskip

For the general case we observe that $M_1$ and $M_2$ 
in fact do not need to be isometric because the problem localizes to the boundary.
\end{proof}

\begin{remark}
In the special case $B_2=\Id$, the conditions in the last theorem read as follows:
The boundary condition is
\begin{enumerate}[(i)]
\item symmetric/self-adjoint if and only if $B_1^\dagger c_\nu B_1=c_\nu$.
\item (strongly) regular if and only if $\mathrm{Ker} \{B_1, a(s,k)\} \cap E_{\ii}(a(s,k))=\{0\}$.
\end{enumerate}
In particular, for $B_1=B_2=\Id$ from Example~\ref{ex_transmissI} we recover that the standard transmission boundary condition is self-adjoint and strongly regular.
\end{remark}

\begin{corollary} \label{ex:transm2}
    The delta-shell potentials from Example~\ref{ex_transmissII} define a self-adjoint transmission problem.
     These problems are regular except for the cases where
    \begin{align} \label{eq:transmision_cond_LS}
       \eta^2 - \tau^2 -\omega^2 = (\lambda \pm 2)^2. 
    \end{align}
\end{corollary} 
\begin{proof}
As defined in Example~\ref{ex_transmissII}, we use
$A_\pm \define \pm \ii c_\nu - V/2$, so that $(u_+, u_-) \in \Lambda$ if and only if $A_+ u_+ + A_- u_- = 0$. 
To study the invertibility of $A_\pm$, we define 
$$
C_{\pm} = A_\pm + \eta \Id = (\pm \ii - \omega/2) c_\nu +\eta/2 \Id -\tau/2 \beta - \ii \lambda/2 c_\nu \beta 
$$
and note that, by using that $\beta$, $c_\nu$ and $\beta c_\nu$ anti-commute pairwise, we have
$$
C_\pm A_{\pm} = -\eta^2/4 + \left( (\pm \ii - \omega/2) c_\nu  -\tau/2 \beta - \ii \lambda/2 c_\nu \beta \right)^2 = 1/4 \left( -\eta^2 + (\pm 2\ii - \omega)^2 + \tau^2 +\lambda^2   \right).
$$
For later use, we define $d_{\pm} = 1/4( -\eta^2 + (\pm 2\ii - \omega)^2 + \tau^2 +\lambda^2)$ and note that, for $d_\pm \neq 0$, $A_\pm$ are invertible with respective inverses $d_\pm^{-1}C_\pm $. Note that $d_+ = \bar d_-$, so either both matrices are invertible or both are singular.

\medskip 
We start out with the case where $A_\pm$ are \textbf{invertible}. Then the boundary condition can be rewritten as
$$
\Lambda = \Ker \begin{pmatrix}
    A_+ & A_-
\end{pmatrix}   = \Ran \begin{pmatrix}
    A_+^\dagger \\ A_-^\dagger
\end{pmatrix}^{\perp} = \Ran \begin{pmatrix}
    - A_+^{-1} \\ A_-^{-1}
\end{pmatrix},
$$
so we may take $B_1 =  - A_+^{-1} = -d_+^{-1}C_+$, $B_2 = A_-^{-1} =  d_-^{-1}C_-$. For later use, note that
note that $B_1^\dagger= - B_2$.

Direct calculation gives that $B_1^\dagger c_\nu B_1=B_2^\dagger c_\nu B_2$. Together with the invertibility of $B_1$ and $B_2$ and Theorem~\ref{thm:boundary_transmission} this implies self-adjointness of the Dirac operator.

\medskip 
For regularity, 
$w=0$ needs to be the only solution of the system
$$
\begin{cases}
    a(s,k) c_\nu B_1 w = \ii  c_\nu B_1 w \\
      a(s,k) B_2 w = \ii B_2 w.
\end{cases}
$$

Multiplying the first equation by $c_\nu d_+$, the second by $d_-$ and adding both, this implies 
\[ a(s,k) (C_-+C_+)w = \ii (C_--C_+)w.\]
Now, we recall that $a(s,k)= c_{\nu} c_k$ and compute
\begin{align*}
 C_- + C_+ &= +\eta \Id - \tau \beta - \omega c_\nu + \ii \lambda c_\nu \beta \\
\ii (C_--C_+) &= 2 c_\nu.
\end{align*}
This simplifies the equation to 
\[(C_-+C_+)w = 2 c_k  w.\]
Returning to the splitting in the eigenspaces of the chirality operator as in Section~\ref{sec_chiral}, we find
\[
\begin{pmatrix}
    \eta - \tau & (\omega - \ii\lambda) C_\nu^* - 2 C_k^* \\
    (\omega +\ii \lambda) C_\nu - 2 C_k& \eta + \tau
\end{pmatrix} \begin{pmatrix}
    w_+ \\ w_-
\end{pmatrix} = 0.
\]
This system reduces to an equation for the upper component
$$
\left(\eta -\tau)(\eta + \tau) - |\omega - \ii \lambda|^2 -4 -  2 \ii \lambda (C_\nu^* C_k - C_k^* C_\nu)\right) w_+ =   0,
$$
or
$$
\left(\eta^2 -\tau^2 - \omega^2-\lambda^2 -4 \right) w_+ =    2 \ii \lambda (C_\nu^* C_k - C_k^* C_\nu) w_+ .
$$
The matrix on the right-hand-side is hermitian and squares to
$16 \lambda^2 \Id $, so its eigenvalues are $\pm 4 \lambda$. 
Furthermore, since $C_{-k} = - C_{k}$, both values occur for some choice of unit tangent vector $k$. 
We conclude that the Shapiro-Lopatinski condition is satisfied unless
$$
\eta^2 -\tau^2 - \omega^2 = (\lambda - 2)^2 \text{ or } \eta^2 -\tau^2 - \omega^2 = (\lambda + 2)^2.
$$

\medskip

It remains to study the \textbf{non-invertible case} and observe that $d_+ = 0$ implies that $\omega = 0$ and
$d_+ = d_- = 1/4(-\eta^2 -4 + \tau^2+\lambda^2)$. 
This simplifies the computations and shows that $\tau^2+\lambda^2 >0$.

We return to the notation of Section~\ref{sec_chiral} and decompose $A_\pm$ in the eigenspaces of $\beta$.  This gives
\begin{equation*}
    A_\pm  = \frac{1}{2} \begin{pmatrix}
        \eta + \tau & \ii(\pm 2-\lambda) C_\nu^* \\
       \ii (\pm 2 + \lambda) C_\nu & \eta -\tau
    \end{pmatrix}.
\end{equation*}
Thus, if $d_\pm = 0$, 
\begin{equation*}
    \Ker (A_\pm) = \left\{\begin{pmatrix}
    (\pm 2 -\lambda) \ii \phi
         \\ (\eta + \tau) C_\nu \phi
    \end{pmatrix} \middle| \phi \in \calS^+  \right\}
\end{equation*}
which shows that $\Ker(A_\pm)$ is $N/2$-dimensional and furthermore, since $2-\lambda \neq (-2 -\lambda)$, that $\Ran(A_+ ) \cap \Ran(A_-) = \{0\}$. 
This means that the transmission boundary conditions do not imply any relation between the traces coming from $\Omega_+$ and $\Omega_-$, and the operator is just a direct sum of Dirac operators on $\Omega_\pm$ with respective boundary conditions $\Lambda_+ = \Ker(A_+)$ and $\Lambda_-= \Ker(A_-)$.

To write the conditions in the framework of transmission boundary conditions, we can take $B_1$ and $B_2$ to be the orthogonal projectors on $\Ker(A_+)$ and $\Ker( A_-)$ and check for these $B_i$ the conditions in Proposition~\ref{thm:boundary_transmission}. But since we just have  a direct sum of Dirac operators on $\Omega_\pm$, we can directly use the results of Section~\ref{sec_chiral}. The expression for $\ker\, A_\pm$ from above shows that they are of the form~\ref{prop:BC_split} with $\tilde{f}=a\mathrm{Id}$ where $a=\frac{(\pm 2-\lambda)\ii + (\eta+\tau)}{(\pm 2-\lambda)\ii - (\eta+\tau)}$ for $ (\eta, \tau,\lambda) \neq (\eta, -\eta, \pm 2) $, and $a=1$ for the remaining cases. Those are exactly the generalized infinite mass boundary condition from Example~\ref{ex:generalized_infinite_mass} and Corollary~\ref{cor:gen_MIT}. Hence, the corresponding Dirac operator is regular if and only if $|a|\neq 1$, i.e., if and only if  $(\eta, \tau,\lambda,\omega) = (\eta, \pm \eta, \pm 2, 0)$. 
Within the set of parameters that have $d_+ = 0$ (i.e., $\omega= 0$ and $\eta^2 - \tau^2 = \lambda^2-4$), these are exactly the parameters that satisfy \eqref{eq:transmision_cond_LS}.

\medskip
To sum up, all sets of parameters $(\eta,\tau,\lambda, \omega)$ give rise to self-adjoint operators in our framework, and except for those satisfying \eqref{eq:transmision_cond_LS}, they are regular and in particular, their domains are included in $H^1$.
\end{proof}

\medskip
The authors declare that they have no conflicts of interest that are directly or indirectly related to the research presented in this paper.

\end{document}